\newcommand{\blind}{0}
\newtheorem{proposition}{Proposition}
\newtheorem{lemma}{Lemma}
\newtheorem{remark}{Remark}
\DeclareMathOperator{\Var}{Var}
\DeclareMathOperator*{\argmax}{\arg\kern-.1em\max}
\DeclareMathOperator*{\argmin}{\arg\kern-.1em\min}
\begin{document}

\def\spacingset#1{\renewcommand{\baselinestretch}%
{#1}\small\normalsize} \spacingset{1}


\if0\blind
{
  \title{\bf Transfer learning for functional linear regression via control variates
}
  \author{
  	Yuping Yang$^1$\thanks{These authors contributed equally to this work.},
  	Zhiyang Zhou$^2$\footnotemark[1]
    \\
	$^1${\footnotesize School of Mathematics and Physics, Yuxi Normal University}
    \\
	$^2${\footnotesize Joseph J. Zilber College of Public Health, University of Wisconsin-Milwaukee}
	}
  \maketitle
} \fi

\if1\blind
{
  \bigskip
  \bigskip
  \bigskip
  \begin{center}
    {\LARGE\bf Transfer learning for functional linear regression via control variates}
\end{center}
  \medskip
} \fi

\bigskip
\begin{abstract}
Transfer learning (TL) has emerged as a powerful tool for improving 
estimation and prediction performance 
by leveraging information from related datasets,
with the offset TL (O-TL) being a prevailing implementation.
In this paper, 
we adapt the control-variates (CVS) method for TL and
develop CVS-based estimators for scalar-on-function regression,
one of the most fundamental models in functional data analysis.
These estimators rely exclusively on dataset-specific summary statistics, 
thereby avoiding the pooling of subject-level data and 
remaining applicable in privacy-restricted or decentralized settings.
We establish, 
for the first time, 
a theoretical connection between O-TL and CVS-based TL, 
showing that these two seemingly distinct TL strategies adjust local estimators in fundamentally similar ways.
We further derive convergence rates that explicitly account for 
the unavoidable but typically overlooked smoothing error 
arising from discretely observed functional predictors,
and clarify how similarity among covariance functions across datasets governs the performance of TL.
Numerical studies support the theoretical findings and 
demonstrate that 
the proposed methods achieve competitive estimation and prediction performance 
compared with existing alternatives. 
\end{abstract}

\noindent%
{\it Keywords:} 
    Control variates;
    Functional data analysis;
    Group lasso;
    Smoothing error;
    Transfer learning.
\vfill

\newpage
\spacingset{1.9} 

\section{Introduction}\label{sec:intro}

The rapid development and deployment of advanced statistical models across a wide range of real-world applications has highlighted their dependence on sufficient, high-quality training samples. 
This dependence is particularly acute in functional data analysis (FDA), 
where the objects of interest, such as trajectories and images, 
are often treated as infinite-dimensional, 
thereby amplifying the challenges posed by data scarcity.
Consider, for instance, 
the task of modeling health monitoring data 
(e.g., continuous heart rate or oxygen saturation measurements)
for patients with a rare disease,
or stock price trajectories for a small group of corporations.
In such cases, 
the limited availability of relevant data can severely hinder model performance.

A promising solution arises from the well-established concept of transfer learning (TL)
in computer science \citep{PanYang2009, TorreyShavlik2009, WeissKhoshgoftaarWang2016}.
To enhance learning outcomes on a target dataset, 
TL seeks to leverage information from one or more source datasets
that are related to, but not necessarily identical to, the target. 
By repurposing information learned from these sources, 
TL can effectively mitigate the adverse effects of limited target data.

In recent years, 
the statistics community has increasingly embraced TL to improve inference across a variety of models.
As an early contributor to this line of work,
\cite{Bastani2021} proposes a two-stage TL approach for high-dimensional (generalized) linear models using a single source dataset.
In this approach, 
an initial estimator is obtained using pooled source data and 
is subsequently refined using an offset derived from the target data. 
This approach is sometimes referred to as the offset TL (O-TL).
Importantly, 
TL does not guarantee improved performance, 
particularly in multi-source settings.
When source datasets differ substantially from the target, 
TL may even be detrimental,
a phenomenon commonly known as
\textit{negative transfer}
\citep{PanYang2009, TorreyShavlik2009, WeissKhoshgoftaarWang2016}. 
To address this issue,
\cite{LiCaiLi2022a} and \cite{TianFeng2023} extend O-TL 
by incorporating source-detection mechanisms that 
exclude sources likely to induce negative transfer.
O-TL has now been studied under a broad range of models, 
including the Gaussian graphical model,
Cox proportional hazards model,
scalar-on-function regression (SoFR),
and accelerated failure time model;
see, for example,
\cite{HeLiHuLiu2022}, \cite{LiCaiLi2022a}, \cite{LiShenNing2023}, \cite{LinReimherr2024}, 
and \cite{PeiYuShen2025}.

In addition to O-TL,
another strategy for information transfer is the control-variates (CVS) method,
a variance reduction technique originally developed for Monte Carlo simulations.
A key advantage of CVS over O-TL is that 
it avoids pooling individual-level data across sources. 
Instead, 
external sources can contribute via summary statistics, 
enabling information sharing without direct access to subject-level data. 
This distinction is particularly important 
when data sharing is constrained by 
privacy regulations, logistical limitations, or institutional policies. 
Despite this appealing feature,
relatively few studies have employed the CVS method for TL, 
partly because its implementation is less straightforward than that of O-TL.
One notable exception is \cite{DingLiXieWang2024},
which applies CVS to improve the Cox proportional hazards model 
in a multi-source setting while addressing privacy concerns.
As a result, 
substantial methodological and theoretical gaps remain 
in the development of CVS-based TL.

Focusing on TL via CVS,
this work introduces several key innovations.
Firstly, 
we propose two estimators motivated by different perspectives on the CVS method.
One of these estimators incorporates a group lasso penalty to more effectively alleviate negative transfer.
Second,
we establish the theoretical connection between O-TL and CVS in the context of SoFR. 
To the best of our knowledge,
this is the first study to formally bridge these two seemingly distinct TL strategies.
In particular, we point out that,
despite their different underlying principles and algorithmic implementations, 
both strategies adjust \textit{local estimators} 
(defined as estimators constructed using a single dataset)
in fundamentally similar ways.
Third,
we rigorously derive convergence rates that explicitly account for 
smoothing error arising from discretely observed trajectories.
Such error is practically unavoidable in FDA but is often overlooked in existing theoretical studies.
Moreover, 
the derived rates clarify how similarity among covariance functions across datasets governs the performance of TL.
Together, these results provide new insight into how smoothing error and covariance similarity affect both estimation and prediction performance in TL.

The remainder of this paper is organized as follows. 
Section \ref{sec:problem} formally introduces TL for SoFR, 
after reviewing the classical estimation framework.
Section \ref{sec:method} constructs four different estimators,
two based on O-TL and two based on CVS,
highlighting their theoretical connections in Remark \ref{rmk:bridge}.
Section \ref{sec:theory} establishes convergence rates 
explicitly accounting for smoothing error 
and the difference among covariance functions across datasets. 
Section \ref{sec:numerical} presents simulation studies that support the theoretical findings in Section \ref{sec:theory} and compares the proposed estimators using a real-world application.
Finally, Section \ref{sec:conclusion} concludes with a discussion and directions for future research. 
Technical details are deferred to the appendices.

\section{Problem formulation}\label{sec:problem}
We now formally set the stage for TL in the context of SoFR,
one of the most fundamental models in FDA.
Denote the $k$th dataset by 
$\mathcal D^{(k)}=\{(Y_i^{(k)},X_i^{(k)})\}_{i=1}^{n_k}$,
$k=0,1,\ldots,K$.
$Y_i^{(k)}$ are scalar responses,
sharing the identical expectation $\mu_Y^{(k)}$ across $i$.
$X_i^{(k)}$ are realizations of $X^{(k)}$,
a second-order stochastic process on $[0,1]$ with 
mean function $\mu_X^{(k)}=\mu_X^{(k)}(t)={\rm E}X_i^{(k)}(t)$ and 
covariance function 
$C^{(k)}=C^{(k)}(s,t)={\rm cov}(X_i^{(k)}(s),X_i^{(k)}(t))$.
The independence is preserved across $i$ and $k$.
Specifically,
$\mathcal D^{(0)}$ is the target dataset,
while the remaining ones are all source datasets.

Suppose the true model for the $k$th dataset $\mathcal D^{(k)}$ is
\begin{equation}\label{eq:model1}
    Y_i^{(k)}-\mu_Y^{(k)}=\langle X_i^{(k)}-\mu_X^{(k)},\beta^{(k)}\rangle_{L^2}+\varepsilon_i^{(k)}
\end{equation}
in which 
$\langle\cdot, \cdot\rangle_{L^2}$ is the inner product in $L^2([0,1])$.
When $k=0$,
model \eqref{eq:model1} represents the target model;
otherwise, it is refereed to as a source model.  
Coefficient functions $\beta^{(k)}$ are assumed to be square-integrable in $[0,1]$.
Error terms $\varepsilon_i^{(k)}$ are independent with 
mean zero and variance $\sigma_{\varepsilon^{(k)}}^2$.
Of interest is the fixed $\beta^{(0)}\in L^2([0,1])$.
Below we consider model \eqref{eq:model1} with $\mu_X^{(k)}=0$ and $\mu_Y^{(k)}=0$,
because one may always center $Y_i^{(k)}$ and $X_i^{(k)}$ first.
This simplification is minor for our results, but entails a considerable
saving in term of notation.
In practice,
underlying trajectories $X_i^{(k)}$ may not be recorded accurately or continuously.
Instead, 
we collect 
contaminated observations $Z_{i,j}^{(k)}$ discretely at $J_k$ evenly spaced spots 
$t_j^{(k)}=(j-1)/(J_k-1)$, $j=1,\ldots,J_k$, i.e.,
$$
    Z_{i,j}^{(k)}=X_i^{(k)}(t_j^{(k)})+\epsilon_{i,j}^{(k)}
$$
in which $\epsilon_{i,j}^{(k)}$ are measurement errors with 
${\rm E}\epsilon_{i,j}^{(k)}=0$ and 
${\rm var}(\epsilon_{i,j}^{(k)})=\sigma_{\epsilon^{(k)}}^2$.

With $\mathcal D^{(k)}$ only,
one may follow the practical solution \cite[see, e.g.,][]{RamsaySilverman2005}
to generate an estimator of $\beta^{(k)}$ in model \eqref{eq:model1},
say $\hat\beta^{(k)}$.
Basically,
the model fitting is a two-step ridge regression.
First,
smooth the discretely observed $Z_{i,j}^{(k)}$, $j=1,\ldots,J_k$;
namely, approximate the underlying trajectory $X_i^{(k)}$ 
by linear combinations of pre-selected basis functions $\phi_1,\ldots,\phi_M$.
Note that for convenience these basis functions are identical across $k$.
In particular,
$X_i^{(k)}$ is approximated by the smoothed curve
$\widehat X_i^{(k)}=\bm\phi^\top\hat{\bm b}_i^{(k)}$ such that
$$
    (\hat{\bm b}_1^{(k)},\ldots,\hat{\bm b}_{n_k}^{(k)})=
    \argmax_{\bm b_1,\ldots,\bm b_{n_k}\in\mathbb R^M}
    \sum_{i=1}^{n_k}
    (\bm Z_i^{(k)}-\bm\Phi^{(k)}\bm b_i)^\top(\bm Z_i^{(k)}-\bm\Phi^{(k)}\bm b_i)+
    \rho^{(k)}\bm b_i^\top\bm W\bm b_i
$$
for some smoothing parameter $\rho^{(k)}>0$, 
where
\begin{align}
    \notag
    \bm\phi
    &=
    \bm\phi(t)=[\phi_1(t),\ldots,\phi_M(t)]^\top,
    \\\notag
    \bm Z_i^{(k)}
    &=
    [Z_{i,1}^{(k)},\ldots,Z_{i,J_k}^{(k)}]^\top\in\mathbb R^{J_k},
    \\\notag
    \bm\Phi^{(k)}
    &=
    [\phi_m(t_j^{(k)})]_{j=1,\ldots,J_k;m=1,\ldots M}\in\mathbb R^{J_k\times M},
    \quad\text{and}
    \\\label{eq:matrix_W}
    \bm W
    &=
    [\langle\phi_{m_1}'',\phi_{m_2}''\rangle_{L^2}]_{m_1,m_2=1,\ldots,M}\in\mathbb R^{M\times M}.
\end{align}
In \eqref{eq:matrix_W}, 
$\phi_m''$ represents the second-order derivative of $\phi_m$,
$m=1,\ldots,M$. 
Writing 
$$
    \bm P^{(k)}=
    (\bm\Phi^{(k)\top}\bm\Phi^{(k)}+\rho^{(k)}\bm W)^{-1}
    \bm\Phi^{(k)\top}\in\mathbb R^{M\times J_k},
$$
one may immediately point out that
$$
    \hat{\bm b}_i^{(k)}=
    \bm P^{(k)}\bm Z_i^{(k)}
    \in\mathbb R^M.
$$
Consequently,
for the $i$th subject,
the smoothed trajectory is 
$$
    \widehat X_i^{(k)}(t)=\bm\phi^\top(t)\bm P^{(k)}\bm Z_i^{(k)}.
$$

Next, 
introducing
\begin{align}
	\notag
    \bm Z^{(k)}&=[\bm Z_1^{(k)},\ldots,\bm Z_{n_k}^{(k)}]\in\mathbb R^{J_k\times n_k},
    \\\notag
    \bm Y^{(k)}&=[Y_1^{(k)},\ldots,Y_{n_k}^{(k)}]^\top\in\mathbb R^{n_k},
    \\\label{eq:matrix_Psi}
    \bm\Psi &= [\langle\phi_{m_1},\phi_{m_2}\rangle_{L^2}]_{m_1,m_2=1,\ldots,M}\in\mathbb R^{M\times M},
    \quad\text{and}
    \\\notag
    \bm\Omega^{(k)}
    &=\bm\Psi\bm P^{(k)}\bm Z^{(k)}\bm Z^{(k)\top}\bm P^{(k)\top}\bm\Psi\in\mathbb R^{M\times M},
\end{align}
the coefficient function $\beta^{(k)}$ is also estimated by
a linear combination of basis functions:
\begin{equation}\label{eq:hat_beta_k}
    \hat\beta^{(k)}=\bm\phi^\top\hat{\bm c}^{(k)}
\end{equation}
in which
\begin{align*}
    \hat{\bm c}^{(k)}
    &=
    \argmax_{\bm c\in\mathbb R^M}
    \sum_{i=1}^{n_k}|Y_i^{(k)}-\langle\widehat X_i^{(k)},\bm\phi^\top\bm c\rangle_{L^2}|^2+\lambda^{(k)}\bm c^\top \bm W\bm c
    \\
    &=
    (\bm\Omega^{(k)}+\lambda^{(k)}\bm W)^{-1}
    \bm\Psi\bm P^{(k)}\bm Z^{(k)}\bm Y^{(k)}
\end{align*}
for another smoothing parameter $\lambda^{(k)}>0$.
Employing the $k$th dataset $\mathcal D^{(k)}$ only,
$\hat\beta^{(k)}$ in \eqref{eq:hat_beta_k} is considered as a local estimator,
$k=0,\ldots,K$.
Thus,
TL is the procedure to improve $\hat\beta^{(0)}$,
borrowing strength from the $K$ source datasets.

\section{Methodology}\label{sec:method}

\subsection{Offset transfer learning}\label{sec:otl}

Although our focus is on TL via CVS, 
we begin by outlining its alternative, O-TL, 
which are available when individual-level data are shareable across sources.
Suppose there is a transferable set
\begin{equation}\label{eq:transferable_set}
    \mathcal K=\{k\in\{1,\ldots,K\}: \mathcal D^{(k)}\text{ that is unlikely to yield the negative transfer}\}.
\end{equation}
Roughly speaking,
$\mathcal K$ is a subset of indices of source models/datasets that are similar to the target one.
If $\mathcal K$ is known to be $\{k_1,\ldots,k_{|\mathcal K|}\}$
with $|\mathcal K|$ denoting the cardinality of set $\mathcal K$,
then one may pool together all the smoothed curves and corresponding responses
from datasets belonging to $\mathcal K$.
A centered source estimator follows,
serving as the initial guess on $\beta^{(0)}$:
\begin{equation}\label{eq:hat_beta_K}
    \hat\beta^{(\mathcal K)}=\bm\phi^\top\hat{\bm c}^{(\mathcal K)}
\end{equation}
with
\begin{align*}
    \hat{\bm c}^{(\mathcal K)}
    &=
    \argmin_{\bm c\in\mathbb R^M}
    \sum_{k\in\mathcal K}
    \sum_{i=1}^{n_k}
    |Y_i^{(k)}-\langle\widehat X_i^{(k)},\bm\phi^\top\bm c\rangle_{L^2}|^2+
    \lambda^{(\mathcal K)}\bm c^\top \bm W\bm c
    \\
    &=\left(
    	\bm V^{(\mathcal K)\top}\bm V^{(\mathcal K)}+
    	\lambda^{(\mathcal K)}\bm W\right)^{-1}
    \bm V^{(\mathcal K)\top}\bm Y^{(\mathcal K)},
\end{align*}
where $\lambda^{(\mathcal K)}>0$ is a smoothing parameter,
$\bm V^{(\mathcal K)}=[\bm\Psi\bm P^{(k_1)}\bm Z^{(k_1)},\ldots,\bm\Psi\bm P^{(k_{|\mathcal K|})}\bm Z^{(k_{|\mathcal K|})}]^\top\in\mathbb R^{(\sum_{k\in\mathcal K}n_k)\times M}$
and $\bm Y^{(\mathcal K)}=[\bm Y^{(k_1)\top},\ldots,\bm Y^{(k_{|\mathcal K|})\top}]^\top\in\mathbb R^{\sum_{k\in\mathcal K}n_k}$.
O-TL is finalized by 
imposing an offset $\bm o^{(\mathcal K)}\in\mathbb R^M$ to given $\hat{\bm c}^{(\mathcal K)}$;
namely, the resulting estimator is
\begin{equation}\label{eq:hat_beta_OTL}
    \hat\beta_{O,\mathcal K}^{(0)}=\bm\phi^\top(\hat{\bm c}^{(\mathcal K)}+\bm o^{(\mathcal K)}),
\end{equation}
where, for $\lambda_{\bm O}^{(\mathcal K)}>0$,
\begin{align*}
    \bm o^{(\mathcal K)}
    &=
    \argmin_{\bm o\in\mathbb R^M}
    \sum_{i=1}^{n_0}
    |Y_i^{(0)}-\langle\widehat X_i^{(0)},\bm\phi^\top(\hat{\bm c}^{(\mathcal K)}+\bm o)\rangle_{L^2}|^2+
    \lambda_{\bm O}^{(\mathcal K)}\bm o^\top \bm W\bm o
    \\
    &=(\bm\Omega^{(0)}+
    \lambda_{\bm O}^{(\mathcal K)}\bm W)^{-1}
    \bm\Psi\bm P^{(0)}\bm Z^{(0)}
    (\bm Y^{(0)}-\bm Z^{(0)\top}\bm P^{(0)\top}\bm\Psi\hat{\bm c}^{(\mathcal K)}).
\end{align*}

\begin{algorithm}[!h]
\caption{O-TL for SoFR with $\mathcal K$ known to be $\{k_1,\ldots,k_{|\mathcal K|}\}$}\label{algo:OTL}
\begin{algorithmic}[1]
   \State Input
   basis $\bm\phi$ and
   contaminated paired observations
   $\{(\bm Z_i^{(k)}, Y_i^{(k)})\}_{i=1}^{n_k}$,
   $k\in\{0\}\bigcup\mathcal K$.
   \State Fixing $k\in\mathcal K$ 
   and following Section \ref{sec:problem}, 
   smooth $\bm Z_i^{(k)}$ and generate smoothed curves $\widehat X_i^{(k)}$, $i=1,\ldots,n_k$.
   \State Generate the centered source estimator 
   $\hat\beta^{(\mathcal K)}$ in \eqref{eq:hat_beta_K}.
   \State 
   Compute the offset $\bm o^{\mathcal K}$
   and output 
   $\hat\beta_{O,\mathcal K}^{(0)}$
   in \eqref{eq:hat_beta_OTL}.
\end{algorithmic}
\end{algorithm}

We understand that
Algorithm \ref{algo:OTL} is typically impractical,
because there is little prior knowledge on $\mathcal K$ in \eqref{eq:transferable_set}
in real-world applications.
Simply taking $\mathcal K=\{1,\ldots,K\}$ could be detrimental:
Algorithm \ref{algo:OTL} tends to treat all the sources in a fair way
(see Remark \ref{rmk:bridge} on Page \pageref{rmk:bridge}).
As a result,
a source model that is dissimilar to the target model could still be unfairly assigned a high weight by Algorithm \ref{algo:OTL}.
To handle the unknown $\mathcal K$,
\cite{LiCaiLi2022a} and \cite{LinReimherr2024} suggest
the Aggregation-based O-TL (AO-TL).
The idea is to first construct
a set of candidates for $\mathcal K$, 
say $\mathbb K$,
such that there exists at least one entry of $\mathbb K$ equal to true $\mathcal K$ 
with high probability.
For each entry of $\mathbb K$,
one may yield a candidate estimator via Algorithm \ref{algo:OTL}.
Let $\mathbb B$ denote the set of all these candidate estimators.
AO-TL is finalized by aggregating elements of $\mathbb B$.
There are several approaches available for this purpose;
following \cite{LinReimherr2024},
Algorithm \ref{algo:AOTL} adopts the hyper-sparse aggregation \citep{GaiffasLecue2011},
which aggregates only two entries of $\mathbb B$ based upon
\begin{align}
	\label{eq:pred_err}
    R_A^{(1)}(\beta)
    &=
    \frac{1}{|A|}\sum_{i\in A}(Y_i^{(0)}-\langle\widehat X_i^{(0)},\beta\rangle_{L^2})^2,
    \\ \label{eq:est_err}
    R_A^{(2)}(\beta_1,\beta_2)
    &=
    \frac{1}{|A|}\sum_{i\in A}
    \langle\widehat X_i^{(0)},\beta_1-\beta_2\rangle_{L^2}^2,
\end{align}
with $A$ as a set of indices and $\beta$'s as candidate estimators.
The hyper-sparse aggregation helps alleviate the impact of 
source models that significantly differ from the target model.

\begin{algorithm}[!htp]
\caption{AO-TL for SoFR}\label{algo:AOTL}
\begin{algorithmic}[1]
   \State Input
   basis $\bm\phi$,
   contaminated paired observations
   $\{(\bm Z_i^{(k)}, Y_i^{(k)})\}_{i=1}^{n_k}$,
   $k=0,\ldots,K$, and
   pre-specified parameters $b_1$ and $b_2$;
   see Appendix \ref{sec:constants} for details on $b_1$ and $b_2$.
   \State Fixing $k\in\{1,\ldots,K\}$ 
   and following Section \ref{sec:problem}, 
   smooth $\bm Z_i^{(k)}$ and generate smoothed curves $\widehat X_i^{(k)}$, $i=1,\ldots,n_k$.
   Then give local estimators $\hat\beta^{(k)}$ in \eqref{eq:hat_beta_k}, $k=1,\ldots,K$.
   \State Randomly split the indices of $\mathcal D^{(0)}$ into two equal-sized subsets, 
   say $\mathcal D^{(0_1)}$ and $\mathcal D^{(0_2)}$.
   Generate $\hat\beta^{(0_1)}$ 
   which is constructed as \eqref{eq:hat_beta_k}
   but merely utilizes data points in $\mathcal D^{(0_1)}$.
   \State 
   Recall $R_A^{(2)}(\cdot,\cdot)$ defined as \eqref{eq:est_err}.
   Construct $\mathcal K_k$, the $k$th candidate for $\mathcal K$, 
   such that 
   \begin{equation}\label{eq:K_k}
       \mathcal K_k=\left\{
       m\in\{1,\ldots, K\}:
       R_{\mathcal D^{(0_1)}}^{(2)}(\hat\beta^{(0_1)},\hat\beta^{(m)})
       \text{ is among the first } k\text{ smallest of all}
    \right\}.
   \end{equation}
   \State For each $k\in\{1,\ldots,K\}$,
   implement Algorithm \ref{algo:OTL} by taking $\mathcal K_k$ as the transferable set
   and $\mathcal D^{(0_1)}$ as the target dataset.
   Denote by $\hat\beta_{O,\mathcal K_k}^{(0_1)}$
   the resulting estimator.
   \State 
   Further split $\mathcal D^{(0_2)}$ randomly into 
   equal-sized $\mathcal D^{(0_{21})}$ and $\mathcal D^{(0_{22})}$.
   Recall $R_A^{(1)}(\cdot)$ in \eqref{eq:pred_err}.
   Let $\hat\beta_{O,\mathcal K_{k^*}}^{(0_1)}=
   \argmin_{\beta\in\mathbb B} R_{\mathcal D^{(0_{21})}}^{(1)}(\beta)$
   with $\mathbb B=\{\hat\beta^{(0_1)},\hat\beta_{O,\mathcal K_1}^{(0_1)},\ldots,\hat\beta_{O,\mathcal K_K}^{(0_1)}\}$. 
   A subset of $\mathbb B$ follows:
   $$
   \mathbb B_1=\bigg\{\beta\in\mathbb B:
   R_{\mathcal D^{(0_{21})}}^{(1)}(\beta)\leq
   R_{\mathcal D^{(0_{21})}}^{(1)}(\hat\beta_{O,\mathcal K_{k^*}}^{(0_1)})+
   b_1\Big(
   	b_2^2\vee b_2
	\sqrt{
	   	R_{\mathcal D^{(0_{21})}}^{(2)}(\hat\beta_{O,\mathcal K_{k^*}}^{(0_1)},\theta)
   	}\Big)\bigg\},
   $$
   with $a\vee b=\max\{a,b\}$ and $a\wedge b=\min\{a,b\}$.
   See Appendix \ref{sec:constants} for $b_1$ and $b_2$.
   \State Return
   \begin{align}
   	\hat\beta_{AO}^{(0)}
   	=\argmin_{\beta\in\mathbb B_2}
   	R_{\mathcal D^{(0_{22})}}(\beta)
   	\label{eq:hat_beta_AOTL}
   	=\argmin_{\beta\in\mathbb B_3}
       R_{\mathcal D^{(0_{22})}}(\beta)
   \end{align}
   in which 
   $\mathbb B_2=\{a\hat\theta_{O,\mathcal K_{k^*}}^{(0_1)}+(1-a)\theta:\theta\in\mathbb B_1, a\in[0,1]\}$
   and
   $\mathbb B_3=\{a_\theta\hat\beta_{O,\mathcal K_{k^*}}^{(0_1)}+(1-a_\theta)\theta:
		a_\theta=
		0\vee
		\Big[\frac{1}{2}
		\big\{
		   	R_{\mathcal D^{(0_{22})}}(\theta)-
		   	R_{\mathcal D^{(0_{22})}}(\hat\beta_{O,\mathcal K_{k^*}}^{(0_1)})
		\big\}/
		R_{\mathcal D^{(0_{22})}}^{(2)}(\hat\beta_{O,\mathcal K_{k^*}}^{(0_1)},\theta)
		+\frac{1}{2}
		\Big]
		\wedge 1
	\}$.
	See \citet[pp.~1817]{GaiffasLecue2011} for the proof of the second equation of \eqref{eq:hat_beta_AOTL}.
\end{algorithmic}
\end{algorithm}

\subsection{Control-variates method}

In the presence of privacy constraints, 
sharing individual-level data is typically prohibited. 
As a result, 
O-TL becomes unusable, 
since it requires pooling individual-level data across sources.
In contrast,
the information transfer via CVS method remains applicable 
because it relies solely on dataset-specific summary statistics.

Define $\hat{\bm\delta}^{(k)}=\hat{\bm c}^{(0)}-\hat{\bm c}^{(k)}\in\mathbb R^M$ 
for $k=1,\ldots,K$
and consider a linear combination of the $M$ basis functions
$\phi_1,\ldots,\phi_M$,
say $\bm\phi^{\top}\hat{\bm c}_{\bm U,\bm\delta}^{(0)}$,
such that
\begin{equation}\label{eq:hat_c_0}
    \hat{\bm c}_{\bm U,\bm\delta}^{(0)}= 
    \hat{\bm c}^{(0)}-\bm U(\hat{\bm\delta}-\bm\delta),
\end{equation}
where 
\begin{equation}\label{eq:ctrl_var}
    \hat{\bm\delta}=
    [\hat{\bm\delta}^{(1)\top},\ldots,\hat{\bm\delta}^{(K)\top}]^\top=
    \mathbf 1_K\otimes\hat{\bm c}^{(0)}-
    [\hat{\bm c}^{(1)\top},\ldots,\hat{\bm c}^{(K)\top}]^\top
    \in\mathbb R^{MK}
\end{equation}
is referred to as the control variates.
$\mathbf 1_K$ and $\otimes$ denote
the $K$-vector of ones and Kronecker product, respectively.
$\hat{\bm c}_{\bm U,\bm\delta}^{(0)}$ in \eqref{eq:hat_c_0} 
is a function of 
$\bm U\in\mathbb R^{M\times MK}$ and $\bm\delta\in\mathbb R^{MK}$,
where both $\bm U$ and $\bm\delta$ are assumed to be either non-random or determined solely by
$\mathcal Z=\{\bm Z^{(0)},\ldots,\bm Z^{(K)}\}$.
Apparently,
${\rm E}(\bm\phi^{\top}(t)\hat{\bm c}_{\bm U,\bm\delta}^{(0)}\mid\mathcal Z)=
{\rm E}(\hat\beta^{(0)}(t)\mid\mathcal Z)$ 
when $\bm\delta={\rm E}(\hat{\bm\delta}\mid\mathcal Z)$.
However, 
$\bm\phi^{\top}\hat{\bm c}_{\bm U,\bm\delta}^{(0)}$ potentially enjoys less variation in the sense that
$\Var(\bm\phi^{\top}(t)\hat{\bm c}_{\bm U,\bm\delta}^{(0)}\mid\mathcal Z)
\leq\Var(\hat\beta^{(0)}(t)\mid\mathcal Z)$
pointwisely, 
provided that $\bm U$ is chosen appropriately.
Actually,
free of the value of $\bm\delta$,
\begin{multline*}
    {\rm var}(\hat{\bm c}_{\bm U,\bm\delta}^{(0)}\mid\mathcal Z)
    =
    {\rm var}(\hat{\bm c}^{(0)}\mid\bm Z^{(0)})
    \\
    -\big\{\mathbf 1_K^\top\otimes{\rm var}(\hat{\bm c}^{(0)}\mid\bm Z^{(0)})\big\}\bm U^\top-
    \bm U\big\{\mathbf 1_K\otimes{\rm var}(\hat{\bm c}^{(0)}\mid\bm Z^{(0)})\big\}+
    \bm U{\rm var}(\hat{\bm\delta}\mid\mathcal Z)\bm U^\top
\end{multline*}
is minimized when $\bm U$ takes
\begin{align}
    \notag
    \bm U^*
    &=
    \{\mathbf 1_K^\top\otimes{\rm var}(\hat{\bm c}^{(0)}\mid\bm Z^{(0)})\}
    \{{\rm var}(\hat{\bm\delta}\mid\mathcal Z)\}^{-1}
    \\ \label{eq:U_star}
    &=
    \bigg\{
    \sum_{k=0}^K
    {\rm var}^{-1}(\hat{\bm c}^{(k)}\mid\bm Z^{(k)})
    \bigg\}^{-1}
    [
    {\rm var}^{-1}(\hat{\bm c}^{(1)}\mid\bm Z^{(1)}),
    \ldots,
    {\rm var}^{-1}(\hat{\bm c}^{(K)}\mid\bm Z^{(K)})
    ],
\end{align}
i.e., 
${\rm var}(\hat{\bm c}_{\bm U,\bm\delta}^{(0)}\mid\mathcal Z)-
{\rm var}(\hat{\bm c}_{\bm U^*,\bm\delta}^{(0)}\mid\mathcal Z)$
is positive semidefinite for all $\bm U$ and $\bm\delta$;
see Appendix \ref{sec:derive_U_star} for how to derive $\bm U^*$ in \eqref{eq:U_star}.
It is hence reasonable to set $\bm U=\bm U^*$ and $\bm\delta={\rm E}(\hat{\bm\delta}\mid\mathcal Z)$
in \eqref{eq:hat_c_0},
leading to
\begin{align}
    \label{eq:hat_c_0_Ustar}
    \hat{\bm c}_{\bm U^*,{\rm E}(\hat{\bm\delta}\mid\mathcal Z)}^{(0)}
    = 
    \hat{\bm c}^{(0)}-\bm U^*\{\hat{\bm\delta}-{\rm E}(\hat{\bm\delta}\mid\mathcal Z)\}
\end{align}
Introducing
\begin{equation}\label{eq:tilde_beta_0}
    \tilde\beta^{(0)}=
    \bm\phi^\top\hat{\bm c}_{\bm U^*,{\rm E}(\hat{\bm\delta}\mid\mathcal Z)}^{(0)},
\end{equation}
it is clear that
${\rm var}(\hat\beta^{(0)}(t)\mid\mathcal Z)\geq{\rm var}(\tilde\beta^{(0)}(t)\mid\mathcal Z)$
and 
${\rm E}(\hat\beta^{(0)}(t)\mid\mathcal Z)={\rm E}(\tilde\beta^{(0)}(t)\mid\mathcal Z)$
pointwisely.

\begin{remark}
    Compared with
    $\hat{\bm c}^{(0)}={\rm E}(\hat{\bm c}^{(0)}\mid\bm Z^{(0)})+\{\hat{\bm c}^{(0)}-{\rm E}(\hat{\bm c}^{(0)}\mid\bm Z^{(0)})\}$,
    $\hat{\bm c}_{\bm U^*,{\rm E}(\hat{\bm\delta}\mid\mathcal Z)}^{(0)}$
    in \eqref{eq:hat_c_0_Ustar}
    adjusts ${\rm E}(\hat{\bm c}^{(0)}\mid\bm Z^{(0)})$
    with a weighted average of  discrepancies
    $\hat{\bm c}^{(k)}-{\rm E}(\hat{\bm c}^{(k)}\mid\bm Z^{(k)})$,
    $k=0,\ldots,K$,
    instead of merely the gap between
    $\hat{\bm c}^{(0)}$ and ${\rm E}(\hat{\bm c}^{(0)}\mid\bm Z^{(0)})$.
    This is how
    $\tilde\beta^{(0)}$ in \eqref{eq:tilde_beta_0}
    improves $\hat\beta^{(0)}$ in \eqref{eq:hat_beta_k}.
\end{remark}

The information transfer via CVS is finalized by substituting 
$\widehat{\rm E}(\hat{\bm c}^{(k)}\mid\bm Z^{(k)})$
and
$\widehat{\rm var}(\hat{\bm c}^{(k)}\mid\bm Z^{(k)})$,
$k=0,\ldots,K$, 
all specified in Appendix \ref{sec:approx_exp_var},
for ${\rm E}(\hat{\bm c}^{(k)}\mid\bm Z^{(k)})$ and 
${\rm var}(\hat{\bm c}^{(k)}\mid\bm Z^{(k)})$ in \eqref{eq:hat_c_0_Ustar},
respectively.
The resulting estimator of $\beta^{(0)}$ is
\begin{align}
    \label{eq:hat_beta_C}
    \hat\beta_C^{(0)}
    =
    \bm\phi^\top
    \hat{\bm c}_{\widehat{\bm U}^*,\widehat{\rm E}(\hat{\bm\delta}\mid\mathcal Z)}^{(0)}
    = 
    \hat{\bm c}^{(0)}-\widehat{\bm U}^*\{\hat{\bm\delta}-\widehat{\rm E}(\hat{\bm\delta}\mid\mathcal Z)\},
\end{align}
where
\begin{equation*}\label{eq:hat_U_star}
    \widehat{\bm U}^*
    =
    \bigg\{
    \sum_{k=0}^K
    \widehat{\rm var}^{-1}(\hat{\bm c}^{(k)}\mid\bm Z^{(k)})
    \bigg\}^{-1}
    [
    \widehat{\rm var}^{-1}(\hat{\bm c}^{(1)}\mid\bm Z^{(1)}),
    \ldots,
    \widehat{\rm var}^{-1}(\hat{\bm c}^{(K)}\mid\bm Z^{(K)})
    ]
\end{equation*}
and
$$
    \widehat{\rm E}(\hat{\bm\delta}\mid\mathcal Z)=
    \mathbf 1_K\otimes\widehat{\rm E}(\hat{\bm c}^{(0)}\mid\bm Z^{(0)})-
    \big[
    \{\widehat{\rm E}(\hat{\bm c}^{(1)}\mid\bm Z^{(1)})\}^\top,\ldots,
    \{\widehat{\rm E}(\hat{\bm c}^{(K)}\mid\bm Z^{(K)})\}^\top
    \big]^\top.
$$
As a surrogate of $\tilde\beta^{(0)}$ in \eqref{eq:tilde_beta_0},
$\hat\beta_C^{(0)}$ in \eqref{eq:hat_beta_C}
converges to $\tilde\beta^{(0)}$ as sample sizes $n_k$ all diverge;
see Proposition \ref{prop:hat_beta_c_estimation_err} for details.

\begin{algorithm}[ht]
\caption{CVS for SoFR}\label{algo:CVS}
\begin{algorithmic}[1]
   \State Input
   basis $\bm\phi$ and
   contaminated paired observations
   $\{(\bm Z_i^{(k)}, Y_i^{(k)})\}_{i=1}^{n_k}$,
   $k=0,\ldots,K$.
   \State Fixing $k$ 
   and following Section \ref{sec:problem}, 
   smooth $\bm Z_i^{(k)}$ and generate smoothed curves $\widehat X_i^{(k)}$, $i=1,\ldots,n_k$.
   \State For the $k$th dataset,
   generate $\hat{\bm c}^{(k)}$ in \eqref{eq:hat_beta_k},
   $k=0,\ldots,K$.
   \State Approximate 
   ${\rm E}(\hat{\bm c}^{(k)}\mid\bm Z^{(k)})$
   and ${\rm var}(\hat{\bm c}^{(k)}\mid\bm Z^{(k)})$
   with $\widehat{\rm E}(\hat{\bm c}^{(k)}\mid\bm Z^{(k)})$
   and $\widehat{\rm var}(\hat{\bm c}^{(k)}\mid\bm Z^{(k)})$,
   respectively,
   following Appendix \ref{sec:approx_exp_var}, $k=0,\ldots,K$.
   \State Output $\hat\beta_C^{(0)}$ in \eqref{eq:hat_beta_C}.
\end{algorithmic}
\end{algorithm}

\subsection{Another look at the control-covariates method}

We begin with the following quadratic loss function
\begin{equation}\label{eq:l_c_delta}
    \ell(\bm c,\bm\delta)
    =
    \begin{bmatrix}
        \hat{\bm c}^{(0)}-\bm c\\
        \hat{\bm\delta}-\bm\delta
    \end{bmatrix}^\top
    {\rm cov}^{-1}(\hat{\bm c}^{(0)}, \hat{\bm\delta}\mid\mathcal Z)
    \begin{bmatrix}
        \hat{\bm c}^{(0)}-\bm c\\
        \hat{\bm\delta}-\bm\delta
    \end{bmatrix},
\end{equation}
where 
${\rm cov}^{-1}(\hat{\bm c}^{(0)}, \hat{\bm\delta}\mid\mathcal Z)$
is specified in Appendix \ref{sec:inverse_matrix}.
The unconstrained minimizer of \eqref{eq:l_c_delta} is simply $(\hat{\bm c}^{(0)},\hat{\bm\delta})$,
corresponding to local estimator $\hat\beta^{(0)}$ in \eqref{eq:hat_beta_k}.
Meanwhile,
fixing $\bm\delta={\rm E}(\hat{\bm\delta}\mid\mathcal Z)\in\mathbb R^{MK}$,
the minimizer of \eqref{eq:l_c_delta} 
with respect to $\bm c\in\mathbb R^M$ coincides with 
$\hat{\bm c}_{\bm U^*,{\rm E}(\hat{\bm\delta}\mid\mathcal Z)}^{(0)}$
in \eqref{eq:hat_c_0_Ustar},
corresponding to $\tilde\beta^{(0)}$ in \eqref{eq:tilde_beta_0}.
The unknown ${\rm E}(\hat{\bm\delta}\mid\mathcal Z)$
and block-wise nature of $\hat{\bm\delta}$
motivates us to minimize the empirical counterpart of \eqref{eq:l_c_delta}
with the group lasso penalty \citep{YuanLin2006} attached:
\begin{equation}\label{eq:pl_c_delta}
    \widehat{p\ell}(\bm c,\bm\delta)=
    \begin{bmatrix}
        \hat{\bm c}^{(0)}-\bm c\\
        \hat{\bm\delta}-\bm\delta
    \end{bmatrix}^\top
    \widehat{\rm cov}^{-1}(\hat{\bm c}^{(0)}, \hat{\bm\delta}\mid\mathcal Z)
    \begin{bmatrix}
        \hat{\bm c}^{(0)}-\bm c\\
        \hat{\bm\delta}-\bm\delta
    \end{bmatrix}+
    \zeta\sum_{k=1}^K\|\bm\delta^{(k)}\|_2,
\end{equation}
where $\widehat{\rm cov}^{-1}(\hat{\bm c}^{(0)}, \hat{\bm\delta}\mid\mathcal Z)$
is given in Appendix \ref{sec:inverse_matrix} and
$\bm\delta^{(k)}\in\mathbb R^M$ consists of 
the entries from the ($(k-1)M+1$)-th to the ($kM$)-th position of $\bm\delta$,
i.e.,
$\bm\delta=[\bm\delta^{(1)\top},\ldots,\bm\delta^{(K)\top}]^\top$.
$\|\cdot\|_2$ here denotes the spectral norm for matrices;
that is,
for a real matrix $\bm A$,
$\|\bm A\|_2$ is the largest singular value of $\bm A$.
For vectors,
$\|\cdot\|_2$ reduces to the Euclidean norm.

Due to its convexity,
$\widehat{p\ell}(\bm c,\bm\delta)$ in \eqref{eq:pl_c_delta} can be minimized in a block-wise manner.
Specifically,
we decompose the minimization of $\widehat{p\ell}(\bm c,\bm\delta)$ \eqref{eq:pl_c_delta}
into the following three substeps.
\begin{enumerate}[label=S\arabic*.]
    \item Fixing arbitrary value of $\bm\delta\in\mathbb R^{MK}$,
    	the minimizer of \eqref{eq:pl_c_delta} with respect to $\bm c$
    	is exactly
        \begin{align}
            \label{eq:c_delta}
            \hat{\bm c}_{\widehat{\bm U}^*,\bm\delta}^{(0)}
            =
            \hat{\bm c}^{(0)}-\widehat{\bm U}^*(\hat{\bm\delta}-\bm\delta).
        \end{align}
    \item
    	Plugging $\bm c=\hat{\bm c}_{\bm U^*,\bm\delta}^{(0)}$ back into \eqref{eq:pl_c_delta},
        the minimization of \eqref{eq:pl_c_delta} is equivalent to locating
        \begin{equation}\label{eq:delta_star}
            \bm{\delta}^\zeta
            =\argmin_{\bm\delta\in\mathbb R^{MK}}
            (\hat{\bm\delta}-\bm\delta)^\top
            \widehat{\rm var}^{-1}(\hat{\bm\delta}\mid\mathcal Z)
            (\hat{\bm\delta}-\bm\delta)+
            \zeta\sum_{k=1}^K\|\bm\delta^{(k)}\|_2,
        \end{equation}
        where
        $\widehat{\rm var}^{-1}(\hat{\bm\delta}\mid\bm Z^{(0)})=\widehat{\bm B}_1-\widehat{\bm B}_2$
        is the empirical version of ${\rm var}^{-1}(\hat{\bm\delta}\mid\bm Z^{(0)})$
        (as elaborated in Appendix \ref{sec:derive_U_star}) with
        $$
           \widehat{\bm B}_1={\rm diag}\{
           \widehat{\rm var}^{-1}(\hat{\bm c}^{(1)}\mid\bm Z^{(1)}),\ldots,
           \widehat{\rm var}^{-1}(\hat{\bm c}^{(K)}\mid\bm Z^{(K)})
           \}
        $$
        and
        \begin{align*}
            \widehat{\bm B}_2
            &=
            [
            \widehat{\rm var}^{-1}(\hat{\bm c}^{(1)}\mid\bm Z^{(1)}),
            \ldots,
            \widehat{\rm var}^{-1}(\hat{\bm c}^{(K)}\mid\bm Z^{(K)})
            ]^\top
            \\
            &\qquad
            \bigg\{
            \sum_{k=0}^K
            \widehat{\rm var}^{-1}(\hat{\bm c}^{(k)}\mid\bm Z^{(k)})
            \bigg\}^{-1}
            [
            \widehat{\rm var}^{-1}(\hat{\bm c}^{(1)}\mid\bm Z^{(1)}),
            \ldots,
            \widehat{\rm var}^{-1}(\hat{\bm c}^{(K)}\mid\bm Z^{(K)})
            ].
        \end{align*}
        In practice,
        one may select $\zeta$ and then give $\bm{\delta}^\zeta$ via existing software packages, e.g., 
        \texttt{R} packages \texttt{gglasso} \citep{YangZouBhatnagar2024}
        and \texttt{sparsegl} \citep{LiangEtal2024}.
    \item The minimization of $\widehat{p\ell}(\bm c,\bm\delta)$ is finalized by
    	substituting $\bm{\delta}^\zeta$ \eqref{eq:delta_star} for $\bm\delta$ in \eqref{eq:c_delta}.
\end{enumerate}

Given the minimizer of $\widehat{p\ell}(\bm c,\bm\delta)$, 
i.e., $(\hat{\bm c}_{\widehat{\bm U}^*,\bm{\delta}^\zeta}^{(0)},\bm{\delta}^\zeta)$,
another improvement of $\hat\beta^{(0)}$ in \eqref{eq:hat_beta_k} follows:
\begin{equation}\label{eq:hat_beta_PC}
    \hat\beta_{PC}^{(0)}=
    \bm\phi^\top
    \hat{\bm c}_{\widehat{\bm U}^*,{\bm\delta}^\zeta}^{(0)}=
    \bm\phi^\top
    \{\hat{\bm c}^{(0)}-\widehat{\bm U}^*(\hat{\bm\delta}-{\bm\delta}^\zeta)\}.
\end{equation}
The information transfer in \eqref{eq:hat_beta_PC} parallels that of CVS but incorporates an additional penalty term. 
We therefore refer to it as the penalized CVS (pCVS).

\begin{algorithm}[ht]
\caption{pCVS for SoFR}\label{algo:pCVS}
\begin{algorithmic}[1]
   \State Input
   basis $\bm\phi$ and
   contaminated paired observations
   $\{(\bm Z_i^{(k)}, Y_i^{(k)})\}_{i=1}^{n_k}$,
   $k=0,\ldots,K$.
   \State Fixing $k$ 
   and following Section \ref{sec:problem}, 
   smooth $\bm Z_i^{(k)}$ and generate smoothed curves $\widehat X_i^{(k)}$, $i=1,\ldots,n_k$.
   \State Compute the control variates $\hat{\bm\delta}$ \eqref{eq:ctrl_var}.
   \State Approximate 
   ${\rm var}(\hat{\bm c}^{(k)}\mid\bm Z^{(k)})$
   with $\widehat{\rm var}(\hat{\bm c}^{(k)}\mid\bm Z^{(k)})$
   following Appendix \ref{sec:approx_exp_var}, $k=0,\ldots,K$.
   \State Numerically figure out $\bm{\delta}^\zeta$ \eqref{eq:delta_star}.
   \State Output $\hat\beta_{PC}^{(0)}$ \eqref{eq:hat_beta_PC}.
\end{algorithmic}
\end{algorithm}

\begin{remark}\label{rmk:bridge}
    Although CVS and pCVS
    seem different from 
    O-TL (with known transferable set $\mathcal K=\{k_1,\ldots,k_{|\mathcal K|}\}$)
    and
    AO-TL (with unknown transferable set)
    in Section \ref{sec:otl}, 
    we can still bridge these four
    in a non-rigorous manner.
    Notably, in \eqref{eq:hat_beta_OTL},
    $$
        \hat{\bm c}^{(\mathcal K)}+\bm o^{(\mathcal K)}=
        \hat{\bm c}^{(0)}
        -
        (\bm\Omega^{(0)}+\lambda_{\bm O}^{(\mathcal K)}\bm W)^{-1}
        \big[
        \big\{
        \bm\Omega^{(0)}+
        (\lambda_{\bm O}^{(\mathcal K)}-\lambda^{(0)})\bm W
        \big\}
        \hat{\bm c}^{(0)}
        +
        \lambda_{\bm O}^{(\mathcal K)}\bm W\hat{\bm c}^{(\mathcal K)}
        \big].
    $$
    As pointed out by
    \citet[Corollary 3]{ZhangDuchiWainwright2015},
    $\hat{\bm c}^{(\mathcal K)}\approx|\mathcal K|^{-1}\sum_{k\in\mathcal K}\hat{\bm c}^{(k)}$,
    if $\lambda^{(k_1)}=\cdots=\lambda^{(k_{|\mathcal K|})}$ and $n_{k_1}=\cdots=n_{k_{|\mathcal K|}}$ with relatively small $|\mathcal K|$ and 
    considerably small $\max_{k\in\mathcal K} M\sigma_{\varepsilon^{(k)}}^2/\sum_{k\in\mathcal K}n_k$.
    Accordingly, 
    estimators via CVS, pCVS, and O-TL, i.e.,
    \eqref{eq:hat_beta_C}, \eqref{eq:hat_beta_PC}, and \eqref{eq:hat_beta_OTL},
    adjust local estimator $\hat\beta^{(0)}$ in a similar way;
    that is,
    $\hat\beta^{(0)}-\bm\phi^\top\bm c$ with 
    certain $\bm c\in\mathbb R^M$ depending on $\hat{\bm c}^{(k)}$ in \eqref{eq:hat_beta_k}.
    
    Accordingly, 
    this connection holds even for $\hat\beta_{AO}^{(0)}$ in \eqref{eq:hat_beta_AOTL},
    the estimator given by AO-TL, 
    because
    $
        \hat\beta_{AO}^{(0)}=
        a\hat\beta_{O,\mathcal K_{k^*}}^{(0_1)}+(1-a)\hat\beta_{O,\mathcal K_k}^{(0_1)}
    $
    for certain $\mathcal K_k$ in \eqref{eq:K_k} and $a\in[0,1]$.
    Recall that $\hat\beta_{O,\mathcal K_k}^{(0_1)}$ denotes the O-TL estimator
    taking $\mathcal K_k$ as the transferable set and 
    ${\mathcal D}^{(0_1)}$ (a random half of ${\mathcal D}^{(0)}$) as the target data.
    As long as 
    $\hat\beta^{(0_1)}$ (the local estimator utilizing ${\mathcal D}^{(0_1)}$ only)
    well approximates $\hat\beta^{(0)}$
    (arguably correct when $n_0$ is large),
    $\hat\beta_{AO}^{(0)}$ can be viewed as the difference between
    $\hat\beta^{(0)}$ and offset $\bm\phi^\top\bm c$
    with $\bm c$ depending on $\hat{\bm c}^{(k)}$ in \eqref{eq:hat_beta_k}.
\end{remark}

\section{Theoretical results}\label{sec:theory}

For convenience,
we assume 
$n_k$, $J_k$, $\rho^{(k)}$, and $\lambda^{(k)}$ are identical across all datasets
and are hence abbreviated as $n$, $J$, $\rho$, and $\lambda$, respectively;
i.e.,
$n_0=\cdots=n_K=n$,
$J_0=\cdots=J_K=J$,
$\rho^{(0)}=\cdots=\rho^{(K)}=\rho$, and
$\lambda^{(0)}=\cdots=\lambda^{(K)}=\lambda$.
Additionally, 
basis functions $\phi_m(t)$ are specified as
$$  
    \phi_m(t)=
    \begin{cases}
        1, &m=1,\\
        \sqrt{2}\cos(2\pi(m/2)t), &m\geq 2 \text{ and }m \text{ is even},\\
        \sqrt{2}\sin(2\pi\{(m-1)/2\}t), &m\geq 3 \text{ and }m \text{ is odd}.
    \end{cases}
$$
As a result,
$\bm\Psi$ in \eqref{eq:matrix_Psi} reduces to an identity matrix.
Also,
in this section,
we set 
$$
    M=1+2\lfloor(J-1)/2\rfloor\approx J
$$ 
in which $\lfloor a\rfloor$ denoting the largest integer less than or equal to $a$.
$\bm W$ in \eqref{eq:matrix_W} is thus diagonal:
$$
    \bm W=
    16\pi^4{\rm diag}\{0,1^4,1^4,2^4,2^4,\ldots,\lfloor(J-1)/2\rfloor^4,\lfloor(J-1)/2\rfloor^4\}.
$$

Our theoretical results are built upon following conditions.
Among them, 
\eqref{cond:trajectories} through \eqref{cond:lambda_n} 
are borrowed from \cite{LiHsing2007},
ensuring satisfactory recovery of underlying trajectories 
and the convergence of local estimators without information transfer.
The introduction of \eqref{cond:xi} is bifold.
Firstly, it is comparable to Assumption 4 of \cite{TianFeng2023},
restricting the dissimilarity between 
the target and sources.
In the most ideal case 
(when covariance functions $C^{(k)}$ are quite the same across $k$),
the largest eigenvalue of
$\bm\Omega^{(k)-1}\bm\Omega^{(0)}$
is bounded in probability.
On the other hand,
\eqref{cond:xi} expects a larger value of $n$ than that in \eqref{cond:lambda_n}
to accommodate the potential difference across multiple covariance functions.
The remaining  \eqref{cond:zeta},
specific for the convergence of pCVS,
covers the requirement on the additional tuning parameter $\zeta$.
In particular,
if $J^\xi=O(1)$ and $\lambda\sim n^{-3}$,
then the conditions outlined in \citet[Theorems 1 \& 2]{DingLiXieWang2024}
constitute a special case of \eqref{cond:zeta}.

\begin{enumerate}[label=C\arabic*]
    \item\label{cond:trajectories}
        For each $k\in\{0,\ldots,K\}$,
        with probability one,
        $X_i^{(k)}$ belongs to the periodic Sobolev space
        \begin{multline*}
            W_{2,{\rm per}}^2=
            \{f\in L^2[0,1]: 
            f\text{ is twice differentiable where $f'$ is absolutely continuous}
            \\
            \text{ with $f''\in L^2[0,1]$, $f(0)=f(1)$ and $f'(0)=f'(1)$}\};
        \end{multline*}
        $\sup_{t\in[0,1]}{\rm E}|{\rm d}^2X_1^{(k)}(t)/{\rm d}t^2|^2<\infty$;
        ${\rm E}\|{\rm d}^2X_1^{(k)}(t)/{\rm d}t^2\|_{L^2}^4<\infty$;
        ${\rm E}|\epsilon_{ij}^{(k)}|^4<\infty$.
    \item\label{cond:rho_J} 
        $\rho\to0$ and $J^{-1}\rho^{-1/4}\to0$.
    \item\label{cond:lambda_n}
        $\lambda\to0$ 
        and
        $n^{-1}\lambda^{-1/4}\to0$.
    \item\label{cond:xi}
        There is $\xi\geq 0$ such that
        \begin{enumerate}[label=\theenumi.\arabic*]
            \item\label{cond:cov_ratio}
                the largest eigenvalue of $M\times M$ matrix $\bm\Omega^{(k)-1}\bm\Omega^{(0)}$
                is of order $J^\xi$ ($=M^\xi$) in probability,
                i.e.,
                $\|\bm\Omega^{(k)-1}\bm\Omega^{(0)}\|_2
                =O_p(J^\xi)$;
            \item
                $\lambda\to 0$ and
                $n^{-1}\lambda^{-1/4}J^\xi\to0$.
        \end{enumerate}
    \item\label{cond:zeta}
        $n^{-2}\lambda^{-1}\zeta^2J^\xi\to 0$
        with $\xi\geq 0$ satisfying \eqref{cond:cov_ratio}.
\end{enumerate}

Let $\widehat C^{(k)}$ denote the empirical version of covariance function $C^{(k)}$.
Specifically,
$$
    \widehat C^{(k)}(s,t)=
    \frac{1}{n_k}\sum_{i=1}^{n_k}\widehat X_i(s)\widehat X_i(t).
$$
Define $\langle\cdot,\cdot\rangle_{\widehat C^{(k)}}$ and $\|\cdot\|_{\widehat C^{(k)}}$ such that
\begin{equation}\label{eq:C_norm}
    \langle f, g\rangle_{\widehat C^{(k)}}
    =
    \int_0^1\int_0^1f(s)\widehat C^{(k)}(s,t)g(t){\rm d}s{\rm d}t
    \quad\text{and}\quad
    \|f\|_{\widehat C^{(k)}}
    =
    \sqrt{\langle f, f\rangle_{\widehat C^{(k)}}}.
\end{equation}
We first present results on CVS.
Recall that the estimator given by CVS is $\hat\beta_C^{(0)}$ in \eqref{eq:hat_beta_C}
rather than the ideal (but infeasible) $\tilde\beta^{(0)}$ in \eqref{eq:tilde_beta_0}
which shrinks the mean squared error of 
local estimator $\hat\beta^{(0)}$ in \eqref{eq:hat_beta_k} pointwisely.
Proposition \ref{prop:hat_beta_c_estimation_err}
justifies $\hat\beta_C^{(0)}$ by showing that
$\hat\beta_C^{(0)}$ converges to $\tilde\beta^{(0)}$
in $\|\cdot\|_{\widehat C^{(0)}}$-norm.
Also,
it reveals
the rate for $\hat\beta_C^{(0)}$ converging to the true $\beta^{(0)}$.
This rate is comparable to Theorem 3 in \cite{LiHsing2007},
which considers the target data only.
The additional term $J^\xi$ captures the influence of source covariance functions $C^{(k)}$, $k=1,\ldots,K$: 
when the covariance functions of the source data are close to that of the target data, 
the convergence rate is improved.

\begin{proposition}\label{prop:hat_beta_c_estimation_err}
    Under conditions 
    \eqref{cond:trajectories},
    \eqref{cond:rho_J},
    and \eqref{cond:xi},
    $$
        {\rm E}(\|\hat\beta_C^{(0)}-\tilde\beta^{(0)}\|_{\widehat C^{(0)}}^2\mid\mathcal Z)
        =O_p(n^{-1}\lambda^{-1/4}J^\xi),
    $$
    and
    $$
        {\rm E}(\|\hat\beta_C^{(0)}-\beta^{(0)}\|_{\widehat C^{(0)}}^2\mid\mathcal Z)
        =
        O_p(\lambda+\rho+J^{-1}\rho^{-1/4}+n^{-1}\lambda^{-1/4}J^\xi).
    $$
\end{proposition}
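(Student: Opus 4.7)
The plan is to first establish the first claim --- that $\hat\beta_C^{(0)}$ tracks the idealized $\tilde\beta^{(0)}$ in $\|\cdot\|_{\widehat C^{(0)}}$-norm --- and then to leverage the fact that $\tilde\beta^{(0)}$ enjoys a smaller conditional MSE than the unadjusted local estimator $\hat\beta^{(0)}$. The starting observation is that, because we chose the Fourier basis with $\bm\Psi=\bm I$, one has $\|\bm\phi^\top\bm c\|_{\widehat C^{(0)}}^2 = n^{-1}\bm c^\top\bm\Omega^{(0)}\bm c$, so every distance in the $\widehat C^{(0)}$-norm reduces to a quadratic form of the coefficient vectors against $n^{-1}\bm\Omega^{(0)}$.

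For the first claim I would expand
\begin{equation*}
\hat\beta_C^{(0)}-\tilde\beta^{(0)}
= -\bm\phi^\top(\widehat{\bm U}^*-\bm U^*)\{\hat{\bm\delta}-{\rm E}(\hat{\bm\delta}\mid\mathcal Z)\}
+ \bm\phi^\top\widehat{\bm U}^*\{\widehat{\rm E}(\hat{\bm\delta}\mid\mathcal Z)-{\rm E}(\hat{\bm\delta}\mid\mathcal Z)\},
\end{equation*}
take the conditional expectation of the squared norm, and handle the two cross-free pieces separately by Cauchy--Schwarz. Each resulting bound is translated to the target metric using $\bm c^\top\bm\Omega^{(0)}\bm c\le\|\bm\Omega^{(k)-1}\bm\Omega^{(0)}\|_2\,\bm c^\top\bm\Omega^{(k)}\bm c$ for the appropriate index $k$, where condition \eqref{cond:cov_ratio} contributes exactly the factor $J^\xi$. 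Using the explicit formulas from the appendix for $\widehat{\rm E}(\hat{\bm c}^{(k)}\mid\bm Z^{(k)})$ and $\widehat{\rm var}(\hat{\bm c}^{(k)}\mid\bm Z^{(k)})$, I would control $\|\widehat{\bm U}^*-\bm U^*\|_2$ and $\|\widehat{\rm E}(\hat{\bm\delta}\mid\mathcal Z)-{\rm E}(\hat{\bm\delta}\mid\mathcal Z)\|_2$ via matrix perturbation arguments; multiplied by the variance of the local estimators of order $n^{-1}\lambda^{-1/4}$ (inherited from the analysis of \cite{LiHsing2007} under \eqref{cond:trajectories}--\eqref{cond:lambda_n}), this delivers the target rate $O_p(n^{-1}\lambda^{-1/4}J^\xi)$, which the second part of \eqref{cond:xi} guarantees is $o_p(1)$.

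The second claim is obtained by combining the first with a variance-domination argument. Because $\bm U^*$ is defined to minimize the conditional variance matrix, one has ${\rm var}(\hat{\bm c}^{(0)}\mid\bm Z^{(0)})\succeq{\rm var}(\hat{\bm c}_{\bm U^*,{\rm E}(\hat{\bm\delta}\mid\mathcal Z)}^{(0)}\mid\mathcal Z)$, and since $\widehat C^{(0)}$ is a positive semidefinite kernel, integrating preserves this ordering and yields
\begin{equation*}
{\rm E}(\|\tilde\beta^{(0)}-\beta^{(0)}\|_{\widehat C^{(0)}}^2\mid\mathcal Z)
\le{\rm E}(\|\hat\beta^{(0)}-\beta^{(0)}\|_{\widehat C^{(0)}}^2\mid\mathcal Z)
= O_p(\lambda+\rho+J^{-1}\rho^{-1/4}+n^{-1}\lambda^{-1/4}),
\end{equation*}
the last equality being a conditional-on-$\mathcal Z$ adaptation of Theorem~3 in \cite{LiHsing2007}. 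A squared triangle inequality then combines this with the first claim to deliver the second rate.

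The main obstacle will be cleanly propagating the smoothing and estimation errors through the nested matrix inversions in $\widehat{\bm U}^*$. The difference $\widehat{\bm U}^*-\bm U^*$ is built from perturbations of $K+1$ inverted conditional variance matrices whose conditioning depends on both $\lambda$ and the source-to-target covariance ratios, so careful matrix perturbation bounds combined with \eqref{cond:cov_ratio} are needed to ensure that neither $\|\widehat{\bm U}^*\|_2$ nor $\|\widehat{\bm U}^*-\bm U^*\|_2$ inflates the rate beyond the allotted $J^\xi$ factor, i.e., that the plug-in step does not erode the gains delivered by the optimal $\bm U^*$.
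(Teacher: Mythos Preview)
Your plan for the second claim via variance domination is sound and in fact slightly cleaner than the paper's route, which bypasses $\tilde\beta^{(0)}$ entirely and bounds $\|\hat\beta_C^{(0)}-\beta^{(0)}\|_{\widehat C^{(0)}}^2\le 2\|\hat\beta^{(0)}-\beta^{(0)}\|_{\widehat C^{(0)}}^2+2\|\bm\phi^\top\widehat{\bm U}^*\{\hat{\bm\delta}-\widehat{\rm E}(\hat{\bm\delta}\mid\mathcal Z)\}\|_{\widehat C^{(0)}}^2$ directly.

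For the first claim, however, there is a real gap. Your intended route is to control $\|\widehat{\bm U}^*-\bm U^*\|_2$ by matrix perturbation and then multiply by a bound on $\hat{\bm\delta}-{\rm E}(\hat{\bm\delta}\mid\mathcal Z)$. This is precisely where the argument would break down: the conditional variance matrices $(\bm\Omega^{(k)}+\lambda\bm W)^{-1}\bm\Omega^{(k)}(\bm\Omega^{(k)}+\lambda\bm W)^{-1}$ are badly conditioned (their smallest eigenvalues are not controlled by anything in \eqref{cond:trajectories}--\eqref{cond:xi}), so bounding the difference of their inverses, and then the difference of the nested inverse-of-sum-of-inverses, cannot deliver merely a $J^\xi$ inflation. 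Moreover, separating $\|\widehat{\bm U}^*-\bm U^*\|_2$ from $\|\hat{\bm\delta}-{\rm E}(\hat{\bm\delta}\mid\mathcal Z)\|_2$ forces you into the unweighted Euclidean norm of $\hat{\bm\delta}$, which is much larger than the $\bm\Omega^{(0)}$-weighted norm actually needed.

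The paper avoids this entirely. Its key device (Lemma~\ref{lemma:aUBUa}) is the elementary inequality
\[
\Big\|\bm B^{1/2}\Big(\sum_{k=0}^K\bm V_k^{-1}\Big)^{-1}\sum_{k=1}^K\bm V_k^{-1}\bm a_k\Big\|_2^2\le K\sum_{k=1}^K\bm a_k^\top\bm B\bm a_k,
\]
valid for \emph{any} positive-definite $\bm V_k$. The point is that both $\bm U^*$ and $\widehat{\bm U}^*$ are of this ``weighted-average'' form, so each is a contraction in the $\bm B=\bm\Omega^{(0)}$ metric \emph{regardless} of whether the variances are the true or the estimated ones; the operator-norm difference $\widehat{\bm U}^*-\bm U^*$ never needs to be controlled. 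One then bounds the three pieces $\bm U^*\{\hat{\bm\delta}-\widehat{\rm E}(\hat{\bm\delta}\mid\mathcal Z)\}$, $\widehat{\bm U}^*\{\hat{\bm\delta}-\widehat{\rm E}(\hat{\bm\delta}\mid\mathcal Z)\}$, and $\bm U^*\{{\rm E}(\hat{\bm\delta}\mid\mathcal Z)-\widehat{\rm E}(\hat{\bm\delta}\mid\mathcal Z)\}$ separately, each reducing via Lemma~\ref{lemma:aUBUa} to sums of $\|\bm\phi^\top\{\hat{\bm c}^{(k)}-\widehat{\rm E}(\hat{\bm c}^{(k)}\mid\bm Z^{(k)})\}\|_{\widehat C^{(0)}}^2$ and the analogous ${\rm E}-\widehat{\rm E}$ terms, which Lemma~\ref{lemma:bounds_c_Ec} handles at rate $O_p(n^{-1}\lambda^{-1/4}J^\xi)$. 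You should replace the perturbation step with this contraction argument.
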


In terms of the prediction accuracy,
$\hat\beta_C^{(0)}$ is comparable to $\tilde\beta^{(0)}$ too.
Specifically,
suppose there is a testing set (indexed by $(00)$)
consisting of $n$ new independent curves,
say $X_1^{(00)},\ldots,X_n^{(00)}$,
from the target population 
but independent of $X_1^{(0)},\ldots,X_n^{(0)}$.
The difference in prediction 
between $\hat\beta_C^{(0)}$ and $\tilde\beta^{(0)}$
is convergent
when the prediction is made with $\widehat X_i^{(00)}$,
which are smoothed counterparts of $X_i^{(00)}$ and
recovered from contaminated observations $\bm Z^{(00)}$.
Even if the predicted values are compared with  the true ones,
the error is guaranteed to be convergent;
see Proposition \ref{prop:hat_beta_c_pred_err}.
\begin{proposition}\label{prop:hat_beta_c_pred_err}
    With the conditions of Proposition \ref{prop:hat_beta_c_estimation_err},
    $$
    {\rm E}\left(
    \frac{1}{n}\sum_{i=1}^n
    \langle\tilde\beta^{(0)}-\hat\beta_C^{(0)},\widehat X_i^{(00)}\rangle_{L^2}^2
    \mid\mathcal Z,\bm Z^{(00)}
    \right)
    =O_p(n^{-1}\lambda^{-1/4}J^\xi),
    $$
    and
    $$
    {\rm E}\left(
    \frac{1}{n}\sum_{i=1}^n
    \left\{
    \langle\beta^{(0)},X_i^{(00)}\rangle_{L^2}-
    \langle\hat\beta_C^{(0)},\widehat X_i^{(00)}\rangle_{L^2}
    \right\}^2
    \mid\mathcal Z,\bm Z^{(00)}
    \right)
    =
    O_p(\lambda+\rho+J^{-1}\rho^{-1/4}+n^{-1}\lambda^{-1/4}J^\xi).
    $$
\end{proposition}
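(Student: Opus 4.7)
The plan is to reduce each display to the $\|\cdot\|_{\widehat C^{(0)}}$-norm bound in Proposition~\ref{prop:hat_beta_c_estimation_err} via the identity
$$
\frac{1}{n}\sum_{i=1}^n\langle g,\widehat X_i^{(00)}\rangle_{L^2}^2=\|g\|_{\widehat C^{(00)}}^2,\quad g\in L^2([0,1]),
$$
which holds by the definition analogous to \eqref{eq:C_norm} applied to the testing-sample empirical covariance $\widehat C^{(00)}(s,t)=n^{-1}\sum_{i=1}^n\widehat X_i^{(00)}(s)\widehat X_i^{(00)}(t)$. The crux is a norm-equivalence step: because $X_i^{(00)}$ and $X_i^{(0)}$ are i.i.d.\ independent draws from the common target population, under \eqref{cond:trajectories} a standard covariance-operator concentration argument yields that the Gram matrices of $\widehat C^{(00)}$ and $\widehat C^{(0)}$ in the basis $\bm\phi$ are operator-norm close; hence $\|\bm\phi^\top\bm c\|_{\widehat C^{(00)}}^2=(1+o_p(1))\|\bm\phi^\top\bm c\|_{\widehat C^{(0)}}^2$ uniformly in $\bm c\in\mathbb R^M$, with the $o_p(1)$ term deterministic given $(\mathcal Z,\bm Z^{(00)})$.

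Applying this to $g=\tilde\beta^{(0)}-\hat\beta_C^{(0)}$, whose coefficients in $\bm\phi$ depend only on $(\mathcal Z,\{Y_i^{(k)}\}_{i,k})$ and are therefore independent of $\bm Z^{(00)}$, the first display follows by taking the conditional expectation over the $Y$'s and invoking Proposition~\ref{prop:hat_beta_c_estimation_err}. For the second display I would decompose
$$
\langle\beta^{(0)},X_i^{(00)}\rangle_{L^2}-\langle\hat\beta_C^{(0)},\widehat X_i^{(00)}\rangle_{L^2}=\langle\beta^{(0)}-\hat\beta_C^{(0)},\widehat X_i^{(00)}\rangle_{L^2}+\langle\beta^{(0)},X_i^{(00)}-\widehat X_i^{(00)}\rangle_{L^2}
$$
and apply $(a+b)^2\le 2a^2+2b^2$. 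Averaging the first summand over $i$ produces $\|\beta^{(0)}-\hat\beta_C^{(0)}\|_{\widehat C^{(00)}}^2$, which inherits the rate $O_p(\lambda+\rho+J^{-1}\rho^{-1/4}+n^{-1}\lambda^{-1/4}J^\xi)$ from Proposition~\ref{prop:hat_beta_c_estimation_err} via the same norm-equivalence step. The second summand, after Cauchy--Schwarz in $L^2$, is controlled by the smoothing-error estimate $\frac{1}{n}\sum_{i=1}^n\|X_i^{(00)}-\widehat X_i^{(00)}\|_{L^2}^2=O_p(\rho+J^{-1}\rho^{-1/4})$, a direct consequence of \eqref{cond:trajectories}, \eqref{cond:rho_J}, and the periodic-Sobolev spline calculations underlying \cite{LiHsing2007}; the factor $\|\beta^{(0)}\|_{L^2}^2=O(1)$ is absorbed into the constant.

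The main obstacle is making the norm-equivalence step sharp enough to apply to the random direction $\hat\beta_C^{(0)}-\beta^{(0)}$. The Gram matrix of $\widehat C^{(0)}$ has an eigen-spectrum that decays with frequency, so the training--testing operator-norm gap must be controlled relative to the effective eigenscale induced by the roughness penalty $\bm W$ and the smoothing parameter $\lambda$, rather than merely in an absolute sense. Conditions \eqref{cond:lambda_n} and \eqref{cond:xi}, which together enforce $n^{-1}\lambda^{-1/4}J^\xi\to 0$, supply exactly the budget needed to absorb this relative comparison into the leading-order rates already obtained in Proposition~\ref{prop:hat_beta_c_estimation_err} without inflation.
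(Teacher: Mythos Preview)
Your approach is essentially the paper's: reduce the $\widehat C^{(00)}$-quadratic form to the $\widehat C^{(0)}$-norm via the Rayleigh ratio $\max_{\bm a\neq\bm 0}\bm a^\top\bm\Omega^{(00)}\bm a/\bm a^\top\bm\Omega^{(0)}\bm a$ (which the paper controls by $O_p(1)$ through Lemma~\ref{lemma:bound_X}), then invoke Proposition~\ref{prop:hat_beta_c_estimation_err} and handle the smoothing term $\langle\beta^{(0)},X_i^{(00)}-\widehat X_i^{(00)}\rangle_{L^2}$ by Cauchy--Schwarz plus Lemma~\ref{lemma:bound_X}; the paper splits the second display three ways through $\hat\beta^{(0)}$ (calling on Lemmas~\ref{lemma:pred_err_external} and~\ref{lemma:bounds_c_Ec} for the two estimator pieces) whereas your two-way split goes directly through the second bound of Proposition~\ref{prop:hat_beta_c_estimation_err}, but this is only cosmetic. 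One caveat: operator-norm closeness of the Gram matrices does \emph{not} deliver the uniform $(1+o_p(1))$ equivalence you assert, since the smallest eigenvalue of $\bm\Omega^{(0)}$ decays with $M$; fortunately only the one-sided $O_p(1)$ bound on the ratio is actually needed, and that is precisely what the paper uses.
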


Proposition \ref{prop:hat_beta_pc_err} showcases that
the convergence rates of pCVS are comparable to those established 
in Propositions \ref{prop:hat_beta_c_estimation_err} and \ref{prop:hat_beta_c_pred_err}, 
with one more term accounting for the extra tuning parameter $\zeta$.

\begin{proposition}\label{prop:hat_beta_pc_err}
    Holding conditions     
    \eqref{cond:trajectories},
    \eqref{cond:rho_J},
    \eqref{cond:xi},
    and \eqref{cond:zeta},
    $$
        {\rm E}(\|\hat\beta_{PC}^{(0)}-\tilde\beta^{(0)}\|_{\widehat C^{(0)}}^2\mid\mathcal Z)
        =
        O_p(n^{-1}\lambda^{-1/4}J^\xi+n^{-2}\lambda^{-1}\zeta^2J^\xi),
    $$
    $$
        {\rm E}(\|\hat\beta_{PC}^{(0)}-\beta^{(0)}\|_{\widehat C^{(0)}}^2\mid\mathcal Z)
        =O_p(\lambda+\rho+J^{-1}\rho^{-1/4}+n^{-1}\lambda^{-1/4}J^\xi+n^{-2}\lambda^{-1}\zeta^2J^\xi),
    $$
    $$
    {\rm E}\left(
    \frac{1}{n}\sum_{i=1}^n
    \langle\tilde\beta^{(0)}-\hat\beta_{PC}^{(0)},\widehat X_i^{(00)}\rangle_{L^2}^2
    \mid\mathcal Z,\bm Z^{(00)}
    \right)
    =O_p(n^{-1}\lambda^{-1/4}J^\xi+n^{-2}\lambda^{-1}\zeta^2J^\xi),
    $$
    and
    \begin{multline*}
    {\rm E}\left(
    \frac{1}{n}\sum_{i=1}^n
    \left\{
    \langle\beta^{(0)},X_i^{(00)}\rangle_{L^2}-
    \langle\hat\beta_{PC}^{(0)},\widehat X_i^{(00)}\rangle_{L^2}
    \right\}^2
    \mid\mathcal Z,\bm Z^{(00)}
    \right)
    \\
    =
    O_p(\lambda+\rho+J^{-1}\rho^{-1/4}+n^{-1}\lambda^{-1/4}J^\xi+n^{-2}\lambda^{-1}\zeta^2J^\xi).
    \end{multline*}
\end{proposition}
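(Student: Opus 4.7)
All four assertions share the same architecture---the CVS rates of Proposition \ref{prop:hat_beta_c_estimation_err} or \ref{prop:hat_beta_c_pred_err} augmented by a single extra term $n^{-2}\lambda^{-1}\zeta^2 J^\xi$---so the plan is to decompose
$$\hat\beta_{PC}^{(0)}-\tilde\beta^{(0)} \;=\; \bigl(\hat\beta_{PC}^{(0)}-\hat\beta_C^{(0)}\bigr)+\bigl(\hat\beta_C^{(0)}-\tilde\beta^{(0)}\bigr),$$
absorb the second summand into the existing CVS bounds, and prove that
$$\hat\beta_{PC}^{(0)}-\hat\beta_C^{(0)} \;=\; \bm\phi^\top\widehat{\bm U}^*\bigl\{\bm\delta^\zeta-\widehat{\rm E}(\hat{\bm\delta}\mid\mathcal Z)\bigr\},$$
obtained by differencing \eqref{eq:hat_beta_C} and \eqref{eq:hat_beta_PC}, carries exactly the additional order $n^{-2}\lambda^{-1}\zeta^2 J^\xi$. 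The two estimation statements then follow by the triangle inequality together with Proposition \ref{prop:hat_beta_c_estimation_err}; the two prediction statements follow the same template once $\widehat C^{(0)}$ is replaced by the Gram matrix formed from the testing-set smoothed trajectories $\widehat X_i^{(00)}$ and Proposition \ref{prop:hat_beta_c_pred_err} is invoked in place of Proposition \ref{prop:hat_beta_c_estimation_err}.

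To bound the new summand, I would first use the orthonormality of the Fourier basis (so $\bm\Psi=\bm I$) together with the identity $\widehat C^{(0)}(s,t)=n^{-1}\bm\phi(s)^\top\bm\Omega^{(0)}\bm\phi(t)$ to rewrite
$$\|\hat\beta_{PC}^{(0)}-\hat\beta_C^{(0)}\|_{\widehat C^{(0)}}^2 \;=\; n^{-1}(\bm\delta^\zeta-\bm\delta^*)^\top\widehat{\bm U}^{*\top}\bm\Omega^{(0)}\widehat{\bm U}^*(\bm\delta^\zeta-\bm\delta^*), \qquad \bm\delta^*:=\widehat{\rm E}(\hat{\bm\delta}\mid\mathcal Z).$$
Second, I would apply the standard group-lasso basic inequality to \eqref{eq:delta_star}: comparing its objective at $\bm\delta^\zeta$ and at $\bm\delta^*$, and then using Cauchy--Schwarz on the resulting cross term plus the triangle inequality on the group-norm penalty difference, collapses the bound into
$$(\bm\delta^\zeta-\bm\delta^*)^\top\widehat{\rm var}^{-1}(\hat{\bm\delta}\mid\mathcal Z)(\bm\delta^\zeta-\bm\delta^*) \;=\; O_p(\zeta^2).$$

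The main obstacle is to bridge the quadratic form $\widehat{\bm U}^{*\top}\bm\Omega^{(0)}\widehat{\bm U}^*$ appearing in the target with the $\widehat{\rm var}^{-1}(\hat{\bm\delta}\mid\mathcal Z)$ produced by the basic inequality. Exploiting the closed form $\widehat{\rm var}(\hat{\bm c}^{(k)}\mid\bm Z^{(k)})=\sigma_{\varepsilon^{(k)}}^2(\bm\Omega^{(k)}+\lambda\bm W)^{-1}\bm\Omega^{(k)}(\bm\Omega^{(k)}+\lambda\bm W)^{-1}$ from Appendix \ref{sec:approx_exp_var} together with the block-inverse identity derived in Appendix \ref{sec:derive_U_star}, I plan to establish a Loewner-order bound of the form $\widehat{\bm U}^{*\top}\bm\Omega^{(0)}\widehat{\bm U}^* \preceq c\,n^{-1}\lambda^{-1}J^\xi\,\widehat{\rm var}^{-1}(\hat{\bm\delta}\mid\mathcal Z)$, where the factor $\lambda^{-1}$ reflects the ill-conditioning of the smoothing matrix $\bm\Omega^{(k)}+\lambda\bm W$ and the factor $J^\xi$ is absorbed through Condition \eqref{cond:cov_ratio} on the spectrum of $\bm\Omega^{(k)-1}\bm\Omega^{(0)}$. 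Combining this Loewner domination with the $O_p(\zeta^2)$ bound immediately yields $\|\hat\beta_{PC}^{(0)}-\hat\beta_C^{(0)}\|_{\widehat C^{(0)}}^2=O_p(n^{-2}\lambda^{-1}\zeta^2 J^\xi)$, and Condition \eqref{cond:zeta} ensures that this residual is at worst comparable to---and in many regimes negligible relative to---the CVS rates, so that the group-lasso penalty does not dominate the final error.
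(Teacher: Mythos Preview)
Your decomposition and the reduction of all four displays to the single quantity
$\|\hat\beta_{PC}^{(0)}-\hat\beta_C^{(0)}\|_{\widehat C^{(0)}}^2
=n^{-1}(\bm\delta^\zeta-\bm\delta^*)^\top\widehat{\bm U}^{*\top}\bm\Omega^{(0)}\widehat{\bm U}^*(\bm\delta^\zeta-\bm\delta^*)$
is sound, and the final splicing via Propositions~\ref{prop:hat_beta_c_estimation_err} and~\ref{prop:hat_beta_c_pred_err} would indeed work. The gap is in the step where you claim that the basic inequality of \eqref{eq:delta_star} ``collapses'' to
$(\bm\delta^\zeta-\bm\delta^*)^\top\widehat{\rm var}^{-1}(\hat{\bm\delta}\mid\mathcal Z)(\bm\delta^\zeta-\bm\delta^*)=O_p(\zeta^2)$.
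Expanding the quadratic part of $f(\bm\delta^\zeta)\le f(\bm\delta^*)$ around $\bm\delta^*$ leaves the cross term
$2(\hat{\bm\delta}-\bm\delta^*)^\top\widehat{\rm var}^{-1}(\hat{\bm\delta}\mid\mathcal Z)(\bm\delta^\zeta-\bm\delta^*)$,
and after Cauchy--Schwarz you inherit the noise contribution
$(\hat{\bm\delta}-\bm\delta^*)^\top\widehat{\rm var}^{-1}(\hat{\bm\delta}\mid\mathcal Z)(\hat{\bm\delta}-\bm\delta^*)$.
Since $\bm\delta^*=\widehat{\rm E}(\hat{\bm\delta}\mid\mathcal Z)$, this last term behaves like a $\chi^2_{MK}$ quantity and is of order $J$, not $\zeta^2$; passed through your proposed Loewner bound it would produce $n^{-2}\lambda^{-1}J^{\xi+1}$, which is strictly worse than the target rate. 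So the basic-inequality route, as written, cannot isolate the pure $\zeta^2$ dependence.

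The paper sidesteps the cross term entirely by not comparing objectives at all: it writes down the KKT conditions of \eqref{eq:delta_star}, partitions $\hat{\bm\delta}$ into the active and inactive blocks of $\bm\delta^\zeta$, and solves the resulting linear system to obtain an \emph{explicit} expression for $\hat{\bm\delta}-\bm\delta^\zeta$ (not $\bm\delta^*-\bm\delta^\zeta$) of the form $\tfrac{\zeta}{2}$ times block matrices built from $\widehat{\rm var}(\hat{\bm c}^{(k)}\mid\bm Z^{(k)})$ applied to unit-norm subgradient vectors. The $\zeta^2$ scaling is then automatic, and the remaining work is a spectral bound
$\gamma_1(\bm\Omega^{(0)1/2}\widehat{\rm var}(\hat{\bm c}^{(k)}\mid\bm Z^{(k)}))^2=O_p(n^{-2}\lambda^{-1}J^\xi)$,
which comes from the $O_p(\lambda^{-1})$ norm of $(\bm\Omega^{(k)}+\lambda\bm W)^{-1}$ combined with \eqref{cond:cov_ratio}. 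This delivers
$\|(\bm I_K\otimes\bm\Omega^{(0)1/2})(\hat{\bm\delta}-\bm\delta^\zeta)\|_2^2=O_p(n^{-2}\lambda^{-1}\zeta^2J^\xi)$
directly, after which Lemma~\ref{lemma:aUBUa} plays the role of your Loewner inequality and the argument closes exactly as in Proposition~\ref{prop:hat_beta_c_estimation_err}. If you wish to keep your architecture, replacing the basic-inequality step with this KKT representation of $\hat{\bm\delta}-\bm\delta^\zeta$ would repair the gap.
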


\section{Numerical illustration}\label{sec:numerical}

In this section, 
we numerically compare estimators produced by O-TL, AO-TL, CVS, and pCVS
with the target-only local estimator, 
focusing on both estimation accuracy and predictive performance.
For a given estimator $\tilde\beta^{(0)}$ of target coefficient $\beta^{(0)}$, 
predictive performance is assessed using the relative prediction error (RPE),
defined as the ratio of the sum of squared prediction errors obtained by 
$\tilde\beta^{(0)}$ to that of the target-only local estimator $\hat\beta^{(0)}$:
$$
	{\rm RPE}(\tilde\beta^{(0)})=
	\frac{\sum_{i\in\text{testing set}}(Y_i^{(0)}-\langle\widehat X_i^{(0)},\tilde\beta^{(0)}\rangle_{L^2})^2}
	{\sum_{i\in\text{testing set}}(Y_i^{(0)}-\langle\widehat X_i^{(0)},\hat\beta^{(0)}\rangle_{L^2})^2}.
$$
When $\beta^{(0)}$ is known,
we further evaluate estimation accuracy via the relative estimation error (REE) 
in $\|\cdot\|_{\widehat C^{(0)}}$-norm:
$$
	{\rm REE}(\tilde\beta^{(0)})=
	\frac{\|\tilde\beta^{(0)}-\beta^{(0)}\|_{\widehat C^{(0)}}^2}
	{\|\hat\beta^{(0)}-\beta^{(0)}\|_{\widehat C^{(0)}}^2},
$$
with $\|\cdot\|_{\widehat C^{(0)}}$ defined in \eqref{eq:C_norm}.

\subsection{Simulation study}

We generate 100 simulated datasets.
In each dataset, 
there is one target dataset and four source datasets, 
each consisting of 300 independent subjects.
The underlying functional predictors $X_i^{(k)}$ 
are assumed to be zero-mean Gaussian processes with covariance function
$$
	C^{(k)}(s,t)=
	\begin{cases}
		10\times\exp(-15|s-t|)
		&\text{if $k=0$}
		\\
		\eta \exp(-15|s-t|)
		&\text{if $k\in\{1,\ldots,K\}$},
	\end{cases}
$$
where $\eta\in\{100, 50, 10, 5, 1\}$.
Contaminated observations $Z_{i,j}^{(k)}$ are collected at 50 evenly spaced spots on $[0,1]$.
The target and source models share the identical coefficient function
$$
	\beta^{(k)}(t)=P_1(t)+P_2(t)
$$
in which $P_1$ and $P_2$ are (normalized) shifted Legendre polynomials 
of orders 1 and 2, respectively;
see \citet[pp.~773--774]{Hochstrasser1972}.
Generated using the \texttt{R} package \texttt{orthopolynom} \citep{R-orthopolynom},
polynomials $P_1$ and $P_2$ are unit-normed and mutually orthogonal on $[0,1]$.
Error terms $\epsilon_{i,j}^{(k)}$ and $\varepsilon_i^{(k)}$ 
are normally distributed with mean zero and a small variance of .01.
Responses $Y_i^{(k)}$ are then generated with zero mean.
For the target data, 
20\% of the subjects are randomly reserved for testing, 
and the remaining 80\% are used for training. 
RPE values are computed on the testing set,
while REEs are checked using the training set.

\begin{figure}[!ht]
    \centering
    \begin{subfigure}{0.48\textwidth}
        \includegraphics[width=\linewidth]{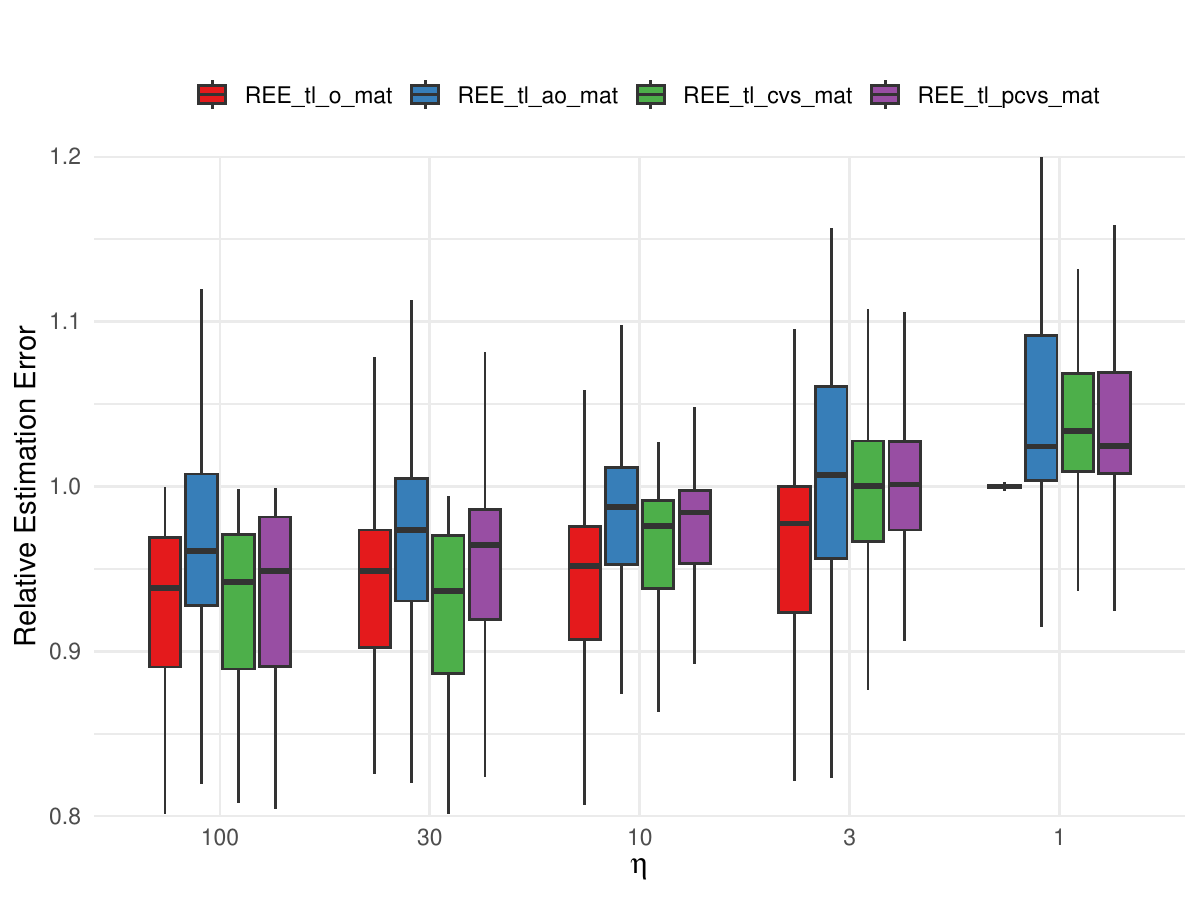}
        \caption{\scriptsize REE values versus $\eta$}
        \label{fig:simu_ree}
    \end{subfigure}\hfill
    \begin{subfigure}{0.48\textwidth}
        \includegraphics[width=\linewidth]{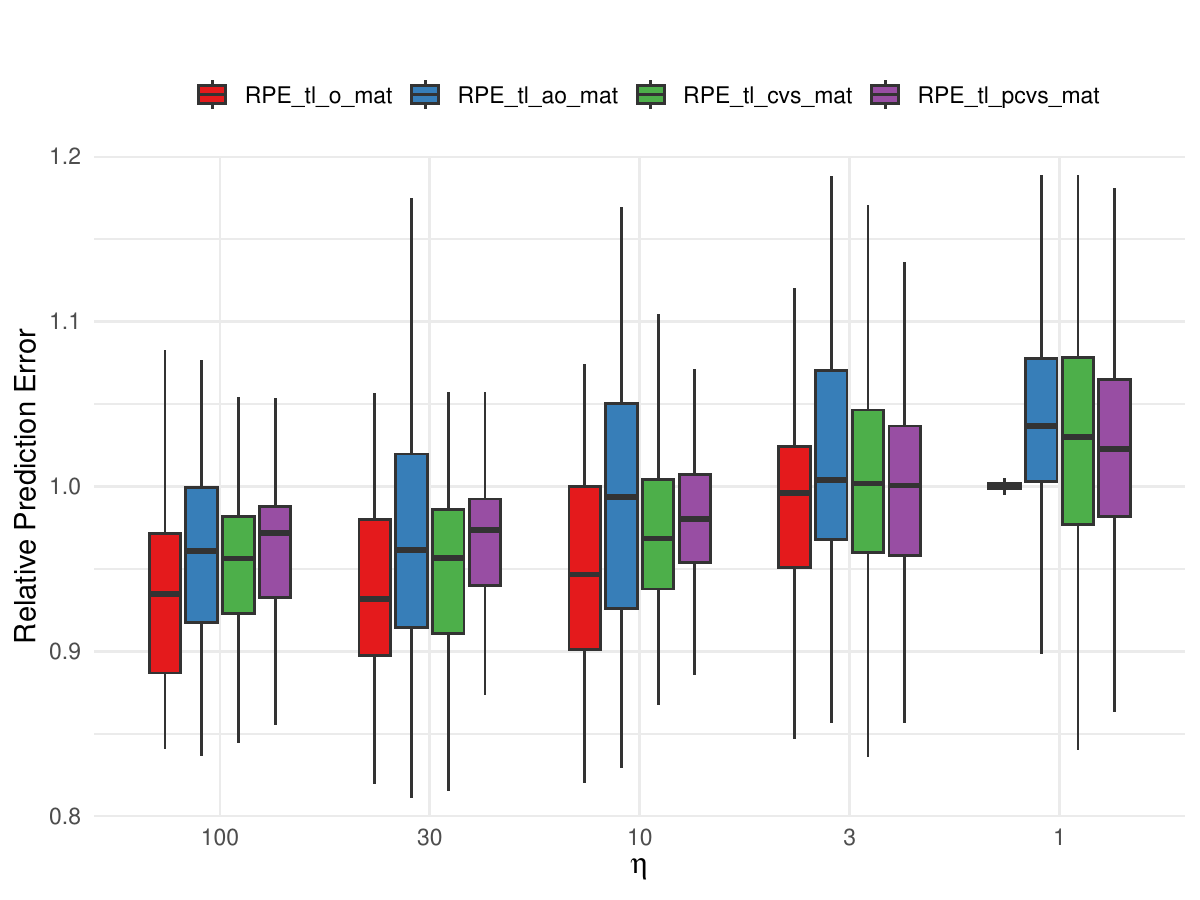}
        \caption{\scriptsize RPE value sversus $\eta$}
        \label{fig:simu_rpe}
    \end{subfigure}
    
    \caption{
    	\small
		Boxplots of values of REE (left panel) and RPE (right panel) for O-TL, AO-TL, CVS, and pCVS based on 100 simulative datasets.
		In each subfigure,
		boxplots are generated for different values of $\eta$.
    }
    \label{fig:simu}
\end{figure}

As shown in Figure \ref{fig:simu},
O-TL achieves the best performance for almost all values of $\eta$,
which is expected since all source datasets share the same distribution and, 
moreover, the coefficient functions are identical across the target and sources.
The estimation and prediction accuracy of CVS and pCVS are comparable to that of O-TL under this setting.
By contrast, 
AO-TL performs worse than O-TL, 
as it does not necessarily incorporate all available source datasets,
even though all sources are transferable in this setting.
In addition, 
Figure \ref{fig:simu} 
indicates that the performance of CVS and pCVS deteriorates as $\eta$ decreases. 
This observation supports
the convergence rates derived in Section~\ref{sec:theory},
which depend on $\|\bm\Omega^{(k)-1}\bm\Omega^{(0)}\|_2$;
under this simulative setup, $\|\bm\Omega^{(k)-1}\bm\Omega^{(0)}\|_2\approx 10/\eta$.

\subsection{Application to the prediction of stock returns}

The objective of portfolio management is to balance expected returns and risk, 
which motivates investors to periodically rebalance their portfolios. 
Consequently, some investors are interested in predicting future stock returns within one specific sector. 
TF can facilitate this task by borrowing information from related sectors in the broader market.
Specifically, 
denote by $s_i^{(k)}(t_{j,\ell})$ the daily closing prices of 
the $i$th stock in the $k$th section at the $j$th day of the $\ell$th month.
Fixing $\ell$,
SoFR may help predict the monthly (close-to-close) return (MR) in the $(\ell+1)$th month
$$
	Y_i^{(k,\ell+1)}=\frac{s_i^{(k)}(t_{J,\ell+1})-s_i^{(k)}(t_{1,\ell+1})}{s^{(k)}(t_{1,\ell+1})},
$$
utilizing the monthly cumulative (close-to-close) return (MCR) from the $\ell$th month, i.e.,
$$
	Z_{i,j}^{(k,\ell)}=\frac{s_i^{(k)}(t_{j,\ell})-s_i^{(k)}(t_{1,\ell})}{s^{(k)}(t_{1,\ell})}.
$$

We use the dataset preprocessed by \cite{LinReimherr2024} 
(available at \url{github.com/haotianlin/HTL-FLM/}).
Spanning from May 1, 2021 to September 30, 2021,
it consists of 11 sectors of 
Nasdaq-listed stocks with market capitalizations exceeding 20 billion USD: 
basic industries (BI), capital goods (CG),
consumer durable (CD), consumer non-durable (CND), consumer services (CS), energy (E), 
finance (Fin), health care (HC), public utility (PU), technology (Tech), and transportation (Trans). 
The numbers of stocks in these sectors are 60, 58, 31, 30, 104, 55, 70, 68, 46, 103, and 41, respectively. 
When a single sector's data are used for model fitting, 
the small sample size may limit the performance.

We repeat the following experiment 100 times.
In each repetition,
we cycle through the 11 sectors, 
treating one sector as the target and the remaining sectors as sources. 
For a fixed target sector, 
we randomly split its data into training and testing sets (80/20) and 
compute RPE values for O-TL 
(taking all the source sectors as transferable), 
AO-TL, CVS, and pCVS. 
The resulting RPE values are presented in Figure \ref{fig:stocks}.

\begin{figure}[!ht]
    \centering
    \begin{subfigure}{0.48\textwidth}
        \includegraphics[width=\linewidth]{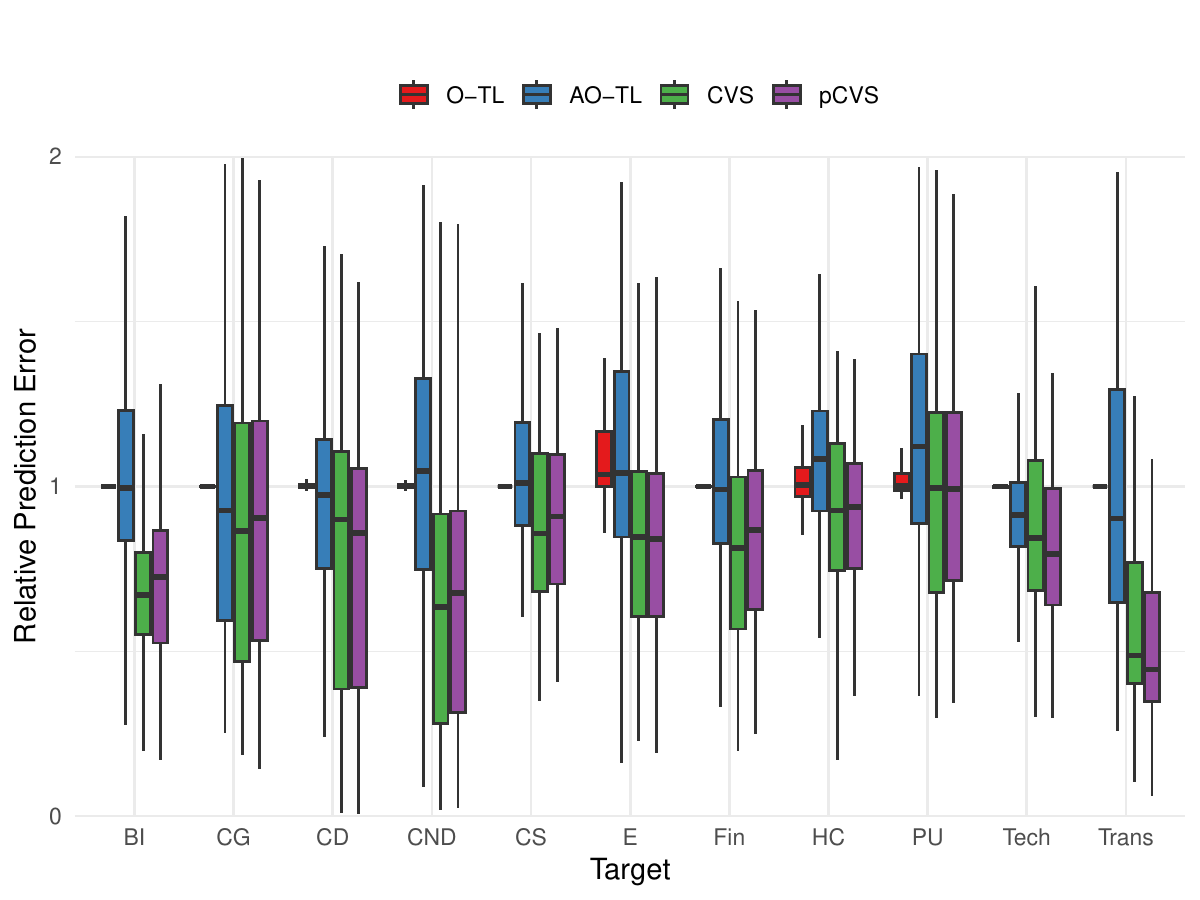}
        \caption{\scriptsize Predicting MRs in June based on MCRs in May}
        \label{fig:stocks56}
    \end{subfigure}\hfill
    \begin{subfigure}{0.48\textwidth}
        \includegraphics[width=\linewidth]{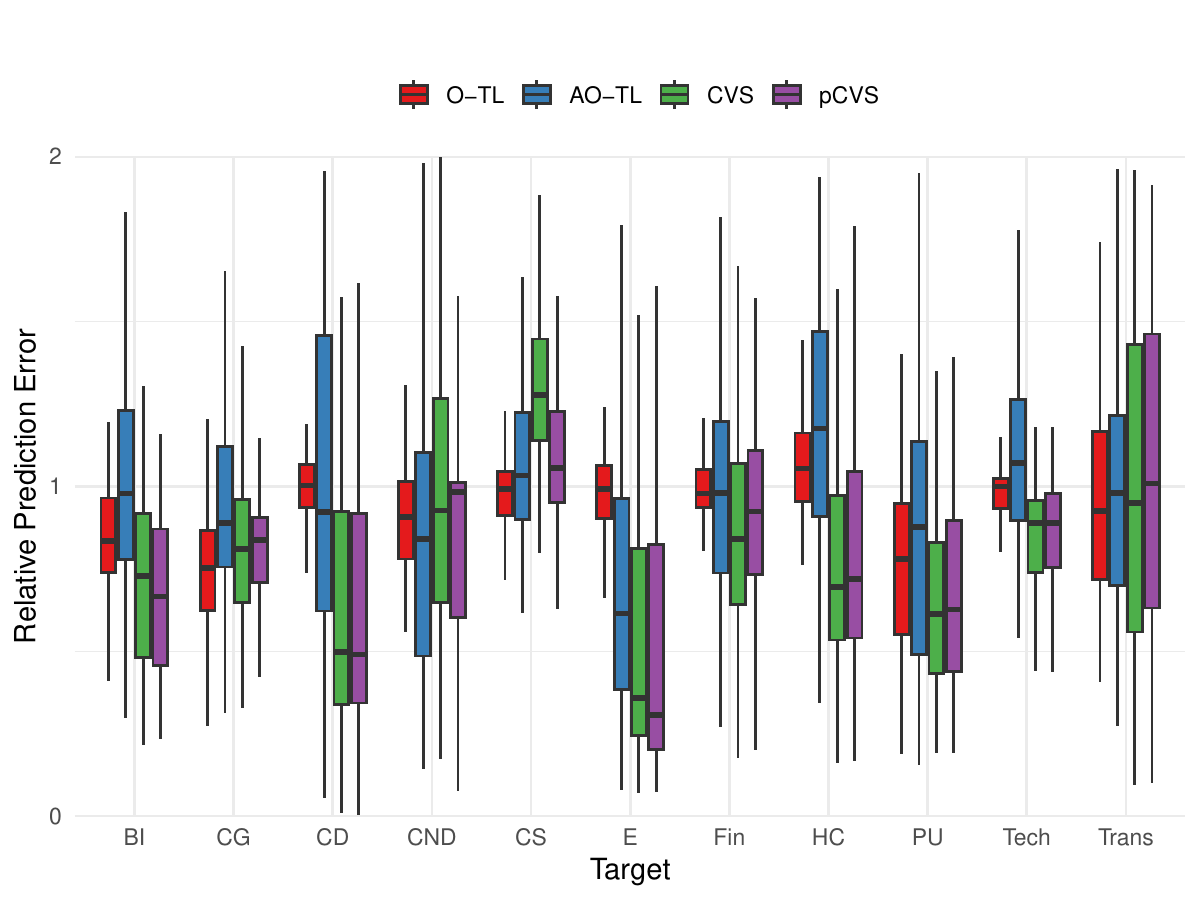}
        \caption{\scriptsize Predicting MRs in July based on MCRs in June}
        \label{fig:1b}
    \end{subfigure}
    
    \begin{subfigure}{0.48\textwidth}
        \includegraphics[width=\linewidth]{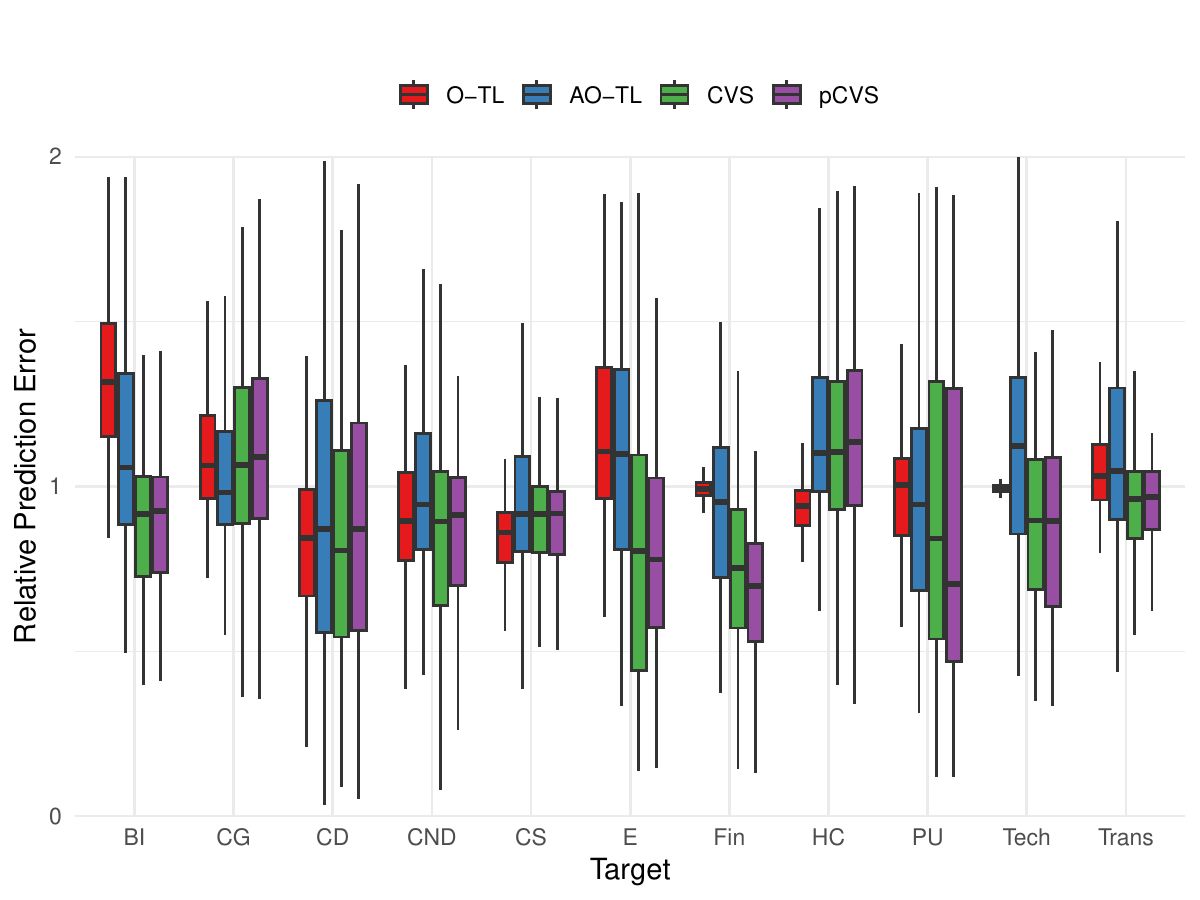}
        \caption{\scriptsize Predicting MRs in August based on MCRs in July}
        \label{fig:1c}
    \end{subfigure}\hfill
    \begin{subfigure}{0.48\textwidth}
        \includegraphics[width=\linewidth]{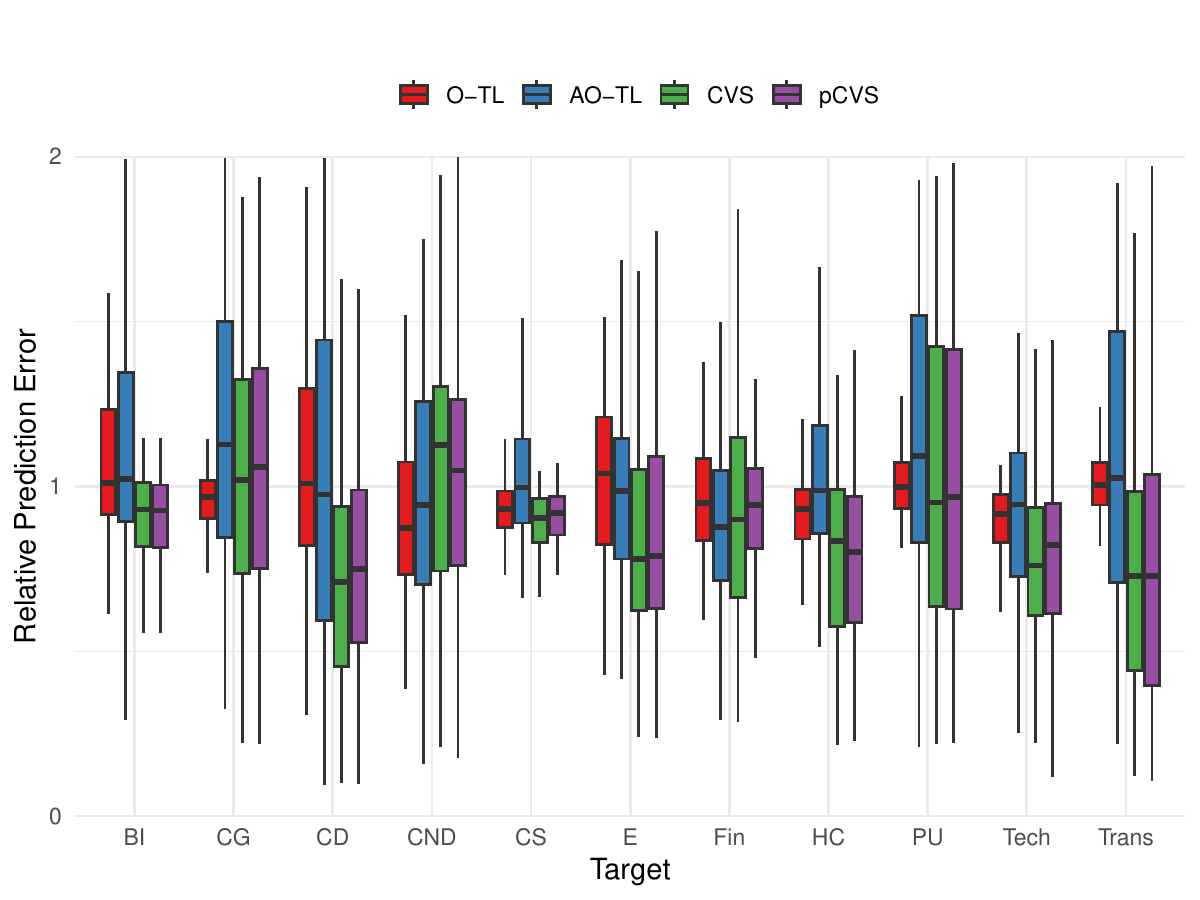}
        \caption{\scriptsize Predicting MRs in September based on MCRs in August}
        \label{fig:1d}
    \end{subfigure}
    
    \caption{
    	\small
		Boxplots of RPE values for O-TL, AO-TL, CVS, and pCVS based on 100 replications.
		Each subfigure corresponds to the prediction of MRs in a given month using MCRs from the preceding month.
		For each month pair, 
		we cycle through the 11 sectors: 
		basic industries (BI), capital goods (CG),
		consumer durable (CD), consumer non-durable (CND), consumer services (CS), energy (E), 
		finance (Fin), health care (HC), public utility (PU), technology (Tech), and transportation (Trans).
		Values of RPE are computed by
		treating each sector in turn as the target and the remaining sectors as sources.
    }
    \label{fig:stocks}
\end{figure}

We observe that O-TL (treating all source sectors as transferable) 
reduces prediction error only in a limited number of scenarios, 
while in most cases it yields little improvement or even degrades predictive performance; 
see in particular Figure \ref{fig:stocks56}.
In the absence of prior knowledge about the transferable set, 
O-TL can behave erratically: 
it improves prediction when the target sector shares strong similarities with the sources, 
but performs poorly when such similarities are weak.
By contrast, 
AO-TL exhibits overall better performance than O-TL, 
highlighting the importance of identifying the transferable set prior to aggregation. 
However, because this identification relies on predictive performance evaluated on validation data, 
its effectiveness is constrained by the limited sample size here.
Compared with AO-TL, 
CVS and pCVS achieve RPE values of similar magnitude across almost all scenarios, 
although their predictive performance exhibits greater variability. 
We conjecture that this increased variability arises 
because both CVS and pCVS depend on estimating the variances of local estimators, 
which is also sensitive to small sample sizes.

\section{Conclusion and discussion}\label{sec:conclusion}

In this work, 
we repurpose the CVS method for TL in the context of SoFR. 
Our results demonstrate that 
CVS-based TL provides an effective alternative to O-TL and its variants, 
particularly in settings where subject-level data cannot be shared. 
Beyond methodological development, 
our analysis reveals a deeper connection between these two TL strategies. 
Although they originate from different motivations, 
we show that they are closely related in how they enhance local estimators, 
offering new theoretical insight into the structure of TL procedures. 
In addition, our theoretical results underscore the importance of accounting for smoothing error and highlight the critical role played by the covariance similarity across datasets. 

While the proposed estimators perform competitively in numerical studies, 
several challenges remain. 
In particular, 
accurately estimating and inverting the covariance structure of local estimators 
continues to be a key practical bottleneck, 
especially in small-sample regimes. 
Addressing this issue through improved variance estimation techniques or alternative regularization strategies represents an important direction for future research.

Compared with O-TL and AO-TL, 
CVS-based TL has received relatively less attention in practice, 
partly due to its less straightforward implementation,
even though it offers distinct advantages in privacy-sensitive settings.
More broadly, 
the CVS framework is not limited to SoFR or even FDA.
As long as covariance structures of local estimators can be reliably estimated, 
CVS-based TL may serve as a unifying framework for developing privacy-aware and theoretically grounded TL-enhanced estimators across a wide range of parametric models.

\bibliographystyle{apalike}
\bibliography{mybib}

\if0\blind{
\section*{Acknowledgment}
}\fi

\appendix
\setcounter{equation}{0}
\renewcommand{\theequation}{S\arabic{equation}}

\section{Technical details}

\subsection{Deriving Eq. \eqref{eq:U_star}}
\label{sec:derive_U_star}

Look into
$$
    {\rm var}(\hat{\bm\delta}\mid\mathcal Z)=
    (\mathbf 1_K\mathbf 1_K^\top)\otimes{\rm var}(\hat{\bm c}^{(0)}\mid\bm Z^{(0)})+
    {\rm diag}\big\{
        {\rm var}(\hat{\bm c}^{(1)}\mid\bm Z^{(1)}),\ldots,{\rm var}(\hat{\bm c}^{(K)}\mid\bm Z^{(K)})
    \big\}.
$$
By a generalized Woodbury's matrix identity
\citep[Theorem 5.15]{Schott2017},
\begin{align}
    \notag
    {\rm var}^{-1}&(\hat{\bm\delta}\mid\mathcal Z)
    \\ \notag
    &=
    \Big[
    (\mathbf 1_K\mathbf 1_K^\top)\otimes{\rm var}(\hat{\bm c}^{(0)}\mid\bm Z^{(0)})+
    {\rm diag}\big\{
        {\rm var}(\hat{\bm c}^{(1)}\mid\bm Z^{(1)}),\ldots,{\rm var}(\hat{\bm c}^{(K)}\mid\bm Z^{(K)})
    \big\}
    \Big]^{-1}
    \\ \label{eq:inv_var_delta_hat}
    &=\bm B_1-\bm B_2,
\end{align}
where
$$
   \bm B_1={\rm diag}\{
   {\rm var}^{-1}(\hat{\bm c}^{(1)}\mid\bm Z^{(1)}),\ldots,
   {\rm var}^{-1}(\hat{\bm c}^{(K)}\mid\bm Z^{(K)})
   \}
$$
and
\begin{align*}
    \bm B_2
    &=
    [
    {\rm var}^{-1}(\hat{\bm c}^{(1)}\mid\bm Z^{(1)}),
    \ldots,
    {\rm var}^{-1}(\hat{\bm c}^{(K)}\mid\bm Z^{(K)})
    ]^\top
    \\
    &\qquad
    \bigg\{
    \sum_{k=0}^K
    {\rm var}^{-1}(\hat{\bm c}^{(k)}\mid\bm Z^{(k)})
    \bigg\}^{-1}
    [
    {\rm var}^{-1}(\hat{\bm c}^{(1)}\mid\bm Z^{(1)}),
    \ldots,
    {\rm var}^{-1}(\hat{\bm c}^{(K)}\mid\bm Z^{(K)})
    ].
\end{align*}
A simple algebra gives that
\begin{align*}
    \notag
    \bm U^*
    &=
    \{\mathbf 1_K^\top\otimes{\rm var}(\hat{\bm c}^{(0)}\mid\bm Z^{(0)})\}\bm B_1-
    \{\mathbf 1_K^\top\otimes{\rm var}(\hat{\bm c}^{(0)}\mid\bm Z^{(0)})\}\bm B_2
    \\\notag
    &=
    \bigg\{
    \sum_{k=0}^K
    {\rm var}^{-1}(\hat{\bm c}^{(k)}\mid\bm Z^{(k)})
    \bigg\}^{-1}
    [
    {\rm var}^{-1}(\hat{\bm c}^{(1)}\mid\bm Z^{(1)}),
    \ldots,
    {\rm var}^{-1}(\hat{\bm c}^{(K)}\mid\bm Z^{(K)})
    ].
\end{align*}

\subsection{Approximating expectations and variances conditional on contaminated observations}
\label{sec:approx_exp_var}

For sufficiently large $J_k$,
approximate $Y_i^{(k)}$ by the trapezoidal rule:
\begin{align*}
    Y_i^{(k)}
    &=
    \langle X_i^{(k)},\beta^{(k)}\rangle_{L^2}+\varepsilon_i^{(k)}
    \\
    &= 
    J_k^{-1}\sum_{j=1}^{J_k}\beta^{(k)}(t_j^{(k)})(Z_{i,j}^{(k)}-\epsilon_{i,j}^{(k)})
    \\
	&\qquad
	-(2J_k)^{-1}
	\{
		\beta^{(k)}(t_1^{(k)})(Z_{i,1}^{(k)}-\epsilon_{i,1}^{(k)})+
		\beta^{(k)}(t_{J_k}^{(k)})(Z_{i,J_k}^{(k)}-\epsilon_{i,J_k}^{(k)})
	\}+
	\varepsilon_i^{(k)}
    +O_p(J_k^{-2}).
\end{align*}
It follows that
\begin{align*}
    {\rm E}(\bm Y^{(k)}&\mid\bm Z^{(k)})
    \\
    &\approx 
    J_k^{-1}\left[\sum_{j=1}^{J_k}\beta^{(k)}(t_j^{(k)})Z_{1,j}^{(k)},\ldots,
    \sum_{j=1}^{J_k}\beta^{(k)}(t_j^{(k)})Z_{n_k,j}^{(k)}\right]^\top
    \\
   	&\qquad
   	-(2J_k)^{-1}
   	\left\{
   		\beta^{(k)}(t_1^{(k)})\left[Z_{1,1}^{(k)},\ldots,Z_{n_k,1}^{(k)}\right]^\top+
   		\beta^{(k)}(t_{J_k}^{(k)})\left[Z_{1,J_k}^{(k)},\ldots,Z_{n_k,J_k}^{(k)}\right]^\top
   	\right\}
\end{align*}
and 
\begin{align*}
    {\rm var}(\bm Y^{(k)}&\mid\bm Z^{(k)})
    \\
    &\approx 
    \left(
    	\sigma_{\varepsilon^{(k)}}^2+
    	J_k^{-2}\sigma_{\epsilon^{(k)}}^2
    	\left[
	    	\sum_{j=2}^{J_k-1}\{\beta^{(k)}(t_j^{(k)})\}^2+
	    	\frac{1}{4}\{\beta^{(k)}(t_1^{(k)})\}^2+
	    	\frac{1}{4}\{\beta^{(k)}(t_{J_k}^{(k)})\}^2
    	\right]
    \right)
    \bm I_{n_k}.
\end{align*}
Plugging in empirical counterparts,
their estimators are, respectively,
\begin{align*}
    \widehat{\rm E}(\bm Y^{(k)}&\mid\bm Z^{(k)})
    \\
    &=
    J_k^{-1}\sum_{j=1}^{J_k}\hat\beta^{(k)}(t_j^{(k)})\left[Z_{1,j}^{(k)},\ldots,Z_{n_k,j}^{(k)}\right]^\top
    \\
   	&\qquad
   	-(2J_k)^{-1}
   	\left\{
   		\hat\beta^{(k)}(t_1^{(k)})\left[Z_{1,1}^{(k)},\ldots,Z_{n_k,1}^{(k)}\right]^\top+
   		\hat\beta^{(k)}(t_{J_k}^{(k)})\left[Z_{1,J_k}^{(k)},\ldots,Z_{n_k,J_k}^{(k)}\right]^\top
   	\right\}
    \\
    &=
    J_k^{-1}\bm Z^{(k)\top}\bm D\bm\Phi^{(k)}\hat{\bm c}^{(k)}
\end{align*}
and
\begin{align*}
    \widehat{\rm var}(\bm Y^{(k)}&\mid\bm Z^{(k)})
    \\
    &= 
    \left(
    	\sigma_{\varepsilon^{(k)}}^2+
    	J_k^{-2}\sigma_{\epsilon^{(k)}}^2
    	\left[
	    	\sum_{j=2}^{J_k-1}\{\hat\beta^{(k)}(t_j^{(k)})\}^2+
	    	\frac{1}{4}\{\hat\beta^{(k)}(t_1^{(k)})\}^2+
	    	\frac{1}{4}\{\hat\beta^{(k)}(t_{J_k}^{(k)})\}^2
    	\right]
    \right)
    \bm I_{n_k}
    \\
    &=
    (
    	\hat\sigma_{\varepsilon^{(k)}}^2+J_k^{-2}\hat\sigma_{\epsilon^{(k)}}^2
    	\hat{\bm c}^{(k)\top}\bm\Phi^{(k)\top}\bm D^2\bm\Phi^{(k)}\hat{\bm c}^{(k)}
    )
    \bm I_{n_k},
\end{align*}
where $\bm D={\rm diag}\{1/2,1,1,\ldots,1,1,1/2\}\in\mathbb R^{J_k\times J_k}$,
\begin{align*}
    \hat\sigma_{\epsilon^{(k)}}^2
    &=
    (n_kJ_k)^{-1}
    \sum_{i=1}^{n_k}
    (\bm Z_i^{(k)}-\bm\Phi^{(k)}\bm P^{(k)}\bm Z_i^{(k)})^\top
    (\bm Z_i^{(k)}-\bm\Phi^{(k)}\bm P^{(k)}\bm Z_i^{(k)})
    \\
    &=
    (n_kJ_k)^{-1}
    {\rm tr}\{
	    (\bm Z^{(k)}-\bm\Phi^{(k)}\bm P^{(k)}\bm Z^{(k)})^\top
	    (\bm Z^{(k)}-\bm\Phi^{(k)}\bm P^{(k)}\bm Z^{(k)})
    \},
    \\
    \hat\sigma_{\varepsilon^{(k)}}^2
    &=
    n_k^{-1}
    (\bm Y^{(k)}-\bm Z^{(k)\top}\bm P^{(k)\top}\bm\Psi\hat{\bm c}^{(k)})^\top
    (\bm Y^{(k)}-\bm Z^{(k)\top}\bm P^{(k)\top}\bm\Psi\hat{\bm c}^{(k)}),
\end{align*}
and $\bm I$ denotes an identity matrix.
Alternatively,
one may follow \cite{MacKinnonWhite1985}
and make use of
the following estimator of ${\rm var}(\bm Y^{(k)}\mid\bm Z^{(k)})$:
\begin{multline*}
	\widehat{\rm var}(\bm Y^{(k)}\mid\bm Z^{(k)})
	=
	\frac{n_k}{n_k-M}
	{\rm diag}\{
		(Y_1^{(k)}-\widehat{\rm E}(Y_1^{(k)}\mid\bm Z^{(k)}))^2,
		\ldots,
		(Y_{n_k}^{(k)}-\widehat{\rm E}(Y_{n_k}^{(k)}\mid\bm Z^{(k)}))^2
	\},
\end{multline*}
with $\widehat{\rm E}(Y_i^{(k)}\mid\bm Z^{(k)})$
denoting the $i$th entry of $\widehat{\rm E}(\bm Y^{(k)}\mid\bm Z^{(k)})$.


We thus approximate 
${\rm E}(\hat{\bm c}^{(k)}\mid\bm Z^{(k)})$ and
${\rm var}(\hat{\bm c}^{(k)}\mid\bm Z^{(k)})$, respectively, by
\begin{equation}\label{eq:hat_E_hat_c}
    \widehat{\rm E}(\hat{\bm c}^{(k)}\mid\bm Z^{(k)})=
    J_k^{-1}
    (\bm\Omega^{(k)}+\lambda^{(k)}\bm W)^{-1}
    \bm\Psi\bm P^{(k)}\bm Z^{(k)}
    \bm Z^{(k)\top}\bm D\bm\Phi^{(k)}\hat{\bm c}^{(k)}
\end{equation}
and
\begin{multline}\label{eq:hat_var_hat_c}
    \widehat{\rm var}(\hat{\bm c}^{(k)}\mid\bm Z^{(k)})=
    \\
    (\bm\Omega^{(k)}+\lambda^{(k)}\bm W)^{-1}
    \bm\Psi\bm P^{(k)}\bm Z^{(k)}
    \widehat{\rm var}(\bm Y^{(k)}\mid\bm Z^{(k)})
    \bm Z^{(k)\top}\bm P^{(k)\top}\bm\Psi
    (\bm\Omega^{(k)}+\lambda^{(k)}\bm W)^{-1}.
\end{multline}

\subsection{Inverse matrices in Eqs. \eqref{eq:l_c_delta} and \eqref{eq:pl_c_delta}}
\label{sec:inverse_matrix}

According to the inverse formula of $2\times2$ partitioned matrix
\citep[see, e.g.,][Theorem 7.1]{Schott2017},
\begin{align*}
    {\rm cov}^{-1}(\hat{\bm c}^{(0)}, \hat{\bm\delta}\mid\mathcal Z)
    =
    \begin{bmatrix}
        {\rm var}(\hat{\bm c}^{(0)}\mid\bm Z^{(0)}) &
        \mathbf 1_K^\top\otimes{\rm var}(\hat{\bm c}^{(0)}\mid\bm Z^{(0)})\\
        \mathbf 1_K\otimes{\rm var}(\hat{\bm c}^{(0)}\mid\bm Z^{(0)}) &
        {\rm var}(\hat{\bm\delta}\mid\mathcal Z)
    \end{bmatrix}^{-1}
    =
    \begin{bmatrix}
        \bm B_{11} &\bm B_{12}\\
        \bm B_{12}^\top &\bm B_{22}
    \end{bmatrix}
\end{align*}
where 
\begin{align*}
    \bm B_{11}
    &=
    [{\rm var}(\hat{\bm c}^{(0)}\mid\bm Z^{(0)})-
    \{\mathbf 1_K^\top\otimes{\rm var}(\hat{\bm c}^{(0)}\mid\bm Z^{(0)})\}
    {\rm var}^{-1}(\hat{\bm\delta}\mid\mathcal Z)
    \{\mathbf 1_K\otimes{\rm var}(\hat{\bm c}^{(0)}\mid\bm Z^{(0)})\}]^{-1}
    \\
    &=\sum_{k=0}^K{\rm var}^{-1}(\hat{\bm c}^{(k)}\mid\bm Z^{(k)}),
    \quad\text{(by \eqref{eq:inv_var_delta_hat})}
\end{align*}
\begin{align*}
    \bm B_{22}
    &=
    [{\rm var}(\hat{\bm\delta}\mid\mathcal Z)-
    \{\mathbf 1_K\otimes{\rm var}(\hat{\bm c}^{(0)}\mid\bm Z^{(0)})\}
    {\rm var}^{-1}(\hat{\bm c}^{(0)}\mid\bm Z^{(0)})
    \{\mathbf 1_K^\top\otimes{\rm var}(\hat{\bm c}^{(0)}\mid\bm Z^{(0)})\}]^{-1}
    \\
    &=
    {\rm diag}\big\{
        {\rm var}^{-1}(\hat{\bm c}^{(1)}\mid\bm Z^{(1)}),\ldots,
        {\rm var}^{-1}(\hat{\bm c}^{(K)}\mid\bm Z^{(K)})
    \big\},
\end{align*}
and
\begin{align*}
    \bm B_{12}
    &=
    -\bm B_{11}
    \{\mathbf 1_K^\top\otimes{\rm var}(\hat{\bm c}^{(0)}\mid\bm Z^{(0)})\}
    {\rm var}^{-1}(\hat{\bm\delta}\mid\mathcal Z)
    \\
    &=-[{\rm var}^{-1}(\hat{\bm c}^{(1)}\mid\bm Z^{(1)}),\ldots,
    {\rm var}^{-1}(\hat{\bm c}^{(K)}\mid\bm Z^{(K)})].
    \quad\text{(by \eqref{eq:inv_var_delta_hat})}
\end{align*}
Its empirical counterpart follows:
\begin{align*}
    \widehat{\rm cov}^{-1}(\hat{\bm c}^{(0)}, \hat{\bm\delta}\mid\mathcal Z)
    =
    \begin{bmatrix}
        \widehat{\bm B}_{11} &\widehat{\bm B}_{12}\\
        \widehat{\bm B}_{12}^\top &\widehat{\bm B}_{22}
    \end{bmatrix},
\end{align*}
where 
$
    \widehat{\bm B}_{11}
    =\sum_{k=0}^K\widehat{\rm var}^{-1}(\hat{\bm c}^{(k)}\mid\bm Z^{(k)})
$,
$
    \widehat{\bm B}_{22}
    =
    {\rm diag}\big\{
        \widehat{\rm var}^{-1}(\hat{\bm c}^{(1)}\mid\bm Z^{(1)}),\ldots,
        \widehat{\rm var}^{-1}(\hat{\bm c}^{(K)}\mid\bm Z^{(K)})
    \big\}
$,
and
$
    \widehat{\bm B}_{12}
    =
    -[\widehat{\rm var}^{-1}(\hat{\bm c}^{(1)}\mid\bm Z^{(1)}),\ldots,
    \widehat{\rm var}^{-1}(\hat{\bm c}^{(K)}\mid\bm Z^{(K)})]
$,
with $\widehat{\rm var}(\hat{\bm c}^{(k)}\mid\bm Z^{(k)})$ defined as \eqref{eq:hat_var_hat_c}.

\subsection{Constants in Algorithm \ref{algo:AOTL}}\label{sec:constants}

As suggested by \citet[Remark 4]{GaiffasLecue2011},
$b_1=4(1+9b_3)$,
with $b_3$ as an upper bound of 
$\max\{\sup_i|Y_i^{(0)}|,\max_{\beta\in\mathbb B}\sup_i|\langle X_i^{(0)},\beta\rangle_{L^2}|\}$ 
and $\mathbb B$ as defined in Algorithm \ref{algo:AOTL}.
We understand that 
$\max\{\sup_i|Y_i^{(0)}|,\max_{\beta\in\mathbb B}\sup_i|\langle X_i^{(0)},\beta\rangle_{L^2}|\}$ 
is unknown and might be unbounded in practice.
A surrogate is 
$\max\{\max_{1\leq i\leq n_0}|Y_i^{(0)}|,\max_{\beta\in\mathbb B}\max_{1\leq i\leq n_0}|\langle \widehat X_i^{(0)},\beta\rangle_{L^2}|\}$.
As for $b_2$,
it is set to be $b_3[\{\ln(K+1)+\alpha\}/n_0]^{1/2}$
for confidence level $\alpha\in(0,1)$.

\subsection{Lemmas}

\begin{lemma}[\citealp{MarshallOlkinArnold2011}, pp. 340--341]
\label{lemma:trace_inequality}
    Let $\gamma_i(\cdot)$
    denote the $i$th largest singular value of a matrix.
    If $\bm A$ and $\bm B$ are both $n\times n$ matrices,
    then
    $$
        \sum_{i=1}^n\gamma_i(\bm A\bm B)\leq\sum_{i=1}^n\gamma_i(\bm A)\gamma_i(\bm B).
    $$
    Further, if $\bm A$ and $\bm B$ are both Hermitian,
    then
    $$
        \sum_{i=1}^n\gamma_i(\bm A\bm B)\geq\sum_{i=1}^n\gamma_i(\bm A)\gamma_{n-i+1}(\bm B).
    $$
\end{lemma}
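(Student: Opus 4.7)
\medskip

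\noindent\textbf{Proof proposal for Lemma \ref{lemma:trace_inequality}.}
Since the lemma is a classical pair of weak (sub- and super-) majorization statements for singular values of matrix products, my plan is to reduce both inequalities to well-known building blocks in matrix analysis rather than to prove them from scratch, and then to record the citation to \citet{MarshallOlkinArnold2011}. Throughout I will use the Ky Fan $k$-norm characterization
\[
    \sum_{i=1}^{k}\gamma_i(\bm M)=\max_{\bm U,\bm V}|\mathrm{tr}(\bm U^{*}\bm M\bm V)|,
\]
where the maximum is taken over $n\times k$ matrices with orthonormal columns, and I will use the SVDs $\bm A=\bm U_A\bm\Sigma_A\bm V_A^{*}$ and $\bm B=\bm U_B\bm\Sigma_B\bm V_B^{*}$ with $\bm\Sigma_A,\bm\Sigma_B$ diagonal holding $\gamma_i(\bm A),\gamma_i(\bm B)$ in decreasing order.

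For the first (upper) inequality I would invoke Horn's multiplicative singular-value inequality
\[
    \prod_{i=1}^{k}\gamma_i(\bm A\bm B)\leq\prod_{i=1}^{k}\gamma_i(\bm A)\gamma_i(\bm B),\qquad k=1,\ldots,n,
\]
i.e.\ weak log-majorization of $(\gamma_i(\bm A\bm B))$ by $(\gamma_i(\bm A)\gamma_i(\bm B))$. Because both sequences are nonnegative and decreasing, the standard lemma that weak log-majorization implies weak (additive) majorization under the convex increasing map $x\mapsto e^{x}$ (see \citet{MarshallOlkinArnold2011}, Ch.~3) upgrades the product bound to $\sum_{i=1}^{k}\gamma_i(\bm A\bm B)\leq\sum_{i=1}^{k}\gamma_i(\bm A)\gamma_i(\bm B)$ for every $k$; specializing $k=n$ gives the stated trace-norm bound.

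For the second (lower) inequality, specific to Hermitian $\bm A,\bm B$, I would first use that $\gamma_i(\bm A)=|\lambda_{\pi_A(i)}(\bm A)|$ and $\gamma_i(\bm B)=|\lambda_{\pi_B(i)}(\bm B)|$ for permutations arranging eigenvalues by decreasing modulus, then diagonalize $\bm A=\bm Q_A\bm\Lambda_A\bm Q_A^{*}$, $\bm B=\bm Q_B\bm\Lambda_B\bm Q_B^{*}$. Unitary invariance of singular values yields $\gamma_i(\bm A\bm B)=\gamma_i(\bm\Lambda_A\bm W\bm\Lambda_B)$ for the unitary $\bm W=\bm Q_A^{*}\bm Q_B$, so
\[
    \sum_{i=1}^{n}\gamma_i(\bm A\bm B)=\|\bm\Lambda_A\bm W\bm\Lambda_B\|_{1}.
\]
The hard part is turning this into a lower bound. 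Here I would apply the dual Ky Fan principle together with the classical trace identity $\mathrm{tr}(\bm\Lambda_A\bm W\bm\Lambda_B)=\sum_{i,j}w_{ij}\lambda_i(\bm A)\lambda_j(\bm B)$, taking $\bm W$ to be the anti-diagonal permutation that pairs the largest $|\lambda_i(\bm A)|$ with the smallest $|\lambda_j(\bm B)|$; this choice is admissible inside the maximum that defines the trace norm, hence lower-bounds it, and produces exactly $\sum_{i=1}^{n}\gamma_i(\bm A)\gamma_{n-i+1}(\bm B)$ by the rearrangement inequality. The main obstacle I anticipate is bookkeeping the signs of eigenvalues and confirming that the antimonotone pairing indeed gives the oppositely-ordered lower bound; this is exactly the content of the passage in \citet[pp.~340--341]{MarshallOlkinArnold2011}, which I would cite to close the argument.
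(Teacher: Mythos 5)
The paper itself offers no proof of this lemma---it is quoted directly from \citet[pp.~340--341]{MarshallOlkinArnold2011}---so your proposal can only be judged on its own correctness. Your treatment of the first inequality is sound: Horn's multiplicative inequality $\prod_{i=1}^k\gamma_i(\bm A\bm B)\le\prod_{i=1}^k\gamma_i(\bm A)\gamma_i(\bm B)$ together with the standard fact that weak log-majorization implies weak majorization (apply the increasing convex map $x\mapsto e^x$) is exactly the classical route to the additive bound, and taking $k=n$ finishes it.

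The second inequality is where you have a genuine gap. After writing $\sum_i\gamma_i(\bm A\bm B)=\|\bm\Lambda_A\bm W\bm\Lambda_B\|_1$ with $\bm W=\bm Q_A^{*}\bm Q_B$, you propose ``taking $\bm W$ to be the anti-diagonal permutation'' and claim this choice is admissible inside the maximum defining the trace norm. It is not: $\bm W$ is fixed by the eigenbases of $\bm A$ and $\bm B$ and cannot be chosen. The free variable in the dual characterization $\|\bm M\|_1=\max_{\bm V}|\mathrm{tr}(\bm M\bm V)|$ is the auxiliary unitary $\bm V$, and once this is corrected your rearrangement step no longer closes, since $\mathrm{tr}(\bm\Lambda_A\bm W\bm\Lambda_B\bm V)$ involves the entries of the fixed $\bm W$ and the antimonotone pairing cannot be forced. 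Nor does the natural repair $\bm V=\bm W^{*}$ (which yields the lower bound $|\mathrm{tr}(\bm A\bm B)|$) suffice: for $\bm A=\mathrm{diag}(1,-1)$ and $\bm B$ the $2\times 2$ swap matrix one has $\mathrm{tr}(\bm A\bm B)=0$ while $\sum_i\gamma_i(\bm A)\gamma_{n-i+1}(\bm B)=2$. The sign bookkeeping you defer to the citation is thus the entire content of the result, not a loose end. A correct elementary route is: for Hermitian $\bm A$ write $\bm A=\bm S_A|\bm A|$ with $\bm S_A$ a unitary commuting with $|\bm A|$, and similarly $\bm B=|\bm B|\bm S_B$; unitary invariance of singular values then gives $\gamma_i(\bm A\bm B)=\gamma_i(|\bm A|\,|\bm B|)$ for every $i$, whence
$$
\sum_{i=1}^n\gamma_i(\bm A\bm B)
=\big\||\bm A|\,|\bm B|\big\|_1
\;\ge\;\mathrm{tr}(|\bm A|\,|\bm B|)
\;\ge\;\sum_{i=1}^n\lambda_i(|\bm A|)\lambda_{n-i+1}(|\bm B|)
=\sum_{i=1}^n\gamma_i(\bm A)\gamma_{n-i+1}(\bm B),
$$
where the middle inequality holds because $\mathrm{tr}(|\bm A|\,|\bm B|)=\mathrm{tr}(|\bm A|^{1/2}|\bm B|\,|\bm A|^{1/2})\ge0$, and the last is Ruhe's trace inequality for positive semidefinite matrices.
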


\begin{lemma}[\citealp{MerikoskiKumar2004}, Theorems 1 and 7]
\label{lemma:eigen_inequality}
    Let $\gamma_i(\cdot)$
    denote the $i$th largest singular value of a matrix.
    Suppose $n\times n$ matrices $\bm A$ and $\bm B$ are both Hermitian.
    If $1\leq k\leq i\leq n$ and $1\leq\ell\leq n-i+1$,
    then
    $$
        \gamma_{i+\ell-1}(\bm A)+\gamma_{n-\ell+1}(\bm B)
        \leq
        \gamma_i(\bm A+\bm B)
        \leq
        \gamma_{i-k+1}(\bm A)+\gamma_k(\bm B).
    $$
    If $\bm A$ and $\bm B$ are further assumed to be positive semidefinite,
    then
    $$
        \gamma_{i+\ell-1}(\bm A)\gamma_{n-\ell+1}(\bm B)
        \leq
        \gamma_i(\bm A\bm B)
        \leq
        \gamma_{i-k+1}(\bm A)\gamma_k(\bm B).
    $$
\end{lemma}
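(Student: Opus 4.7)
The plan is to derive all four inequalities from the Courant--Fischer min-max characterization of eigenvalues of Hermitian matrices, supplemented by the similarity $\bm A\bm B\sim \bm A^{1/2}\bm B\bm A^{1/2}$ in the positive semidefinite setting. The starting point is the reduction: since $\bm A$ and $\bm B$ are Hermitian, so is $\bm A+\bm B$, and its singular values agree with the ordered real eigenvalues (interpreted with the usual sign convention); in the multiplicative half, $\bm A,\bm B\succeq 0$ implies $\bm A\bm B$ is similar to the Hermitian positive semidefinite matrix $\bm A^{1/2}\bm B\bm A^{1/2}$, so the eigenvalues of $\bm A\bm B$ are nonnegative reals and equal $\gamma_i(\bm A\bm B)$. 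For any Hermitian $\bm M$,
$$
\gamma_r(\bm M)=\max_{\dim V=r}\,\min_{x\in V,\,\|x\|=1}\langle\bm M x,x\rangle,
$$
which is the variational identity I would use throughout.

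For the additive upper bound $\gamma_i(\bm A+\bm B)\leq\gamma_{i-k+1}(\bm A)+\gamma_k(\bm B)$, let $U_A$ be spanned by the top $i-k$ eigenvectors of $\bm A$ and $U_B$ by the top $k-1$ eigenvectors of $\bm B$, so $\dim U_A+\dim U_B=i-1$. Any $i$-dimensional test space $V$ then intersects $U_A^{\perp}\cap U_B^{\perp}$ nontrivially, giving a unit vector $x$ with $\langle\bm A x,x\rangle\leq\gamma_{i-k+1}(\bm A)$ and $\langle\bm B x,x\rangle\leq\gamma_k(\bm B)$; taking the max over $V$ yields the bound. The additive lower bound $\gamma_{i+\ell-1}(\bm A)+\gamma_{n-\ell+1}(\bm B)\leq\gamma_i(\bm A+\bm B)$ is obtained by applying this upper bound to $-\bm A$ and $-\bm B$ and re-indexing via $\gamma_j(-\bm M)=-\gamma_{n-j+1}(\bm M)$, which is exactly the dual min-max characterization with $\max$ and $\min$ swapped.

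For the multiplicative inequalities, I would work with the Rayleigh-quotient form
$$
\gamma_i(\bm A\bm B)=\gamma_i(\bm A^{1/2}\bm B\bm A^{1/2})=\max_{\dim V=i}\min_{0\neq x\in V}\frac{\langle\bm B\bm A^{1/2}x,\bm A^{1/2}x\rangle}{\|x\|^2}.
$$
The same subspace-intersection technique now uses $\langle\bm B y,y\rangle\leq\gamma_k(\bm B)\|y\|^2$ on the orthogonal complement of the top $k-1$ eigenvectors of $\bm B$, combined with $\|\bm A^{1/2}x\|^2=\langle\bm A x,x\rangle\leq\gamma_{i-k+1}(\bm A)\|x\|^2$ on the orthogonal complement of the top $i-k$ eigenvectors of $\bm A$, producing the multiplicative upper bound. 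Alternatively, since both factors are positive, I could pass to logarithms and reduce directly to the additive inequality. The multiplicative lower bound follows by applying the upper bound to $\bm A^{-1}$ and $\bm B^{-1}$ in the nonsingular case and then extending by a continuity/limiting argument via $\bm A+\varepsilon\bm I$, $\bm B+\varepsilon\bm I$ as $\varepsilon\downarrow 0$.

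The principal obstacle is handling the two free parameters $k$ and $\ell$ simultaneously: the standard Weyl form fixes a single auxiliary index, whereas here the subspaces must be chosen so that dimensions line up for both parameters under the stated ranges $1\leq k\leq i\leq n$ and $1\leq\ell\leq n-i+1$. The bookkeeping is delicate, requiring that the top-eigenspace of $\bm A$ (of dimension $i-k$), the top-eigenspace of $\bm B$ (of dimension $k-1$), and the test subspace (of dimension $i$) have dimensions that force a nontrivial intersection of codimension-adjusted complements; verifying that the stated ranges precisely permit this, and that the dual argument for the lower bound respects the same combinatorics, is the main technical check. Once admissibility is confirmed, the proof reduces to routine application of Courant--Fischer.
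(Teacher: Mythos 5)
The paper offers no proof of this lemma at all: it is imported verbatim from \citet[Theorems 1 and 7]{MerikoskiKumar2004}, so there is no internal argument to compare against; the only question is whether your self-contained derivation is sound. Your Courant--Fischer route is the standard one for these Weyl-type inequalities and is essentially correct. The subspace-intersection argument for the additive upper bound works exactly as you say: with $\dim U_A = i-k$ and $\dim U_B = k-1$ (admissible since $1\le k\le i$), any $i$-dimensional test space meets $U_A^{\perp}\cap U_B^{\perp}$ because $i + \{n-(i-k)-(k-1)\} = n+1 > n$. The duality $\gamma_j(-\bm M) = -\gamma_{n-j+1}(\bm M)$ then yields the additive lower bound precisely under $1\le\ell\le n-i+1$, and the reduction $\bm A\bm B \sim \bm A^{1/2}\bm B\bm A^{1/2}$ together with inversion (for positive definite matrices) and a continuity limit $\varepsilon\downarrow 0$ handles the multiplicative pair. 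One repair is needed in the multiplicative upper bound: the requirement that $\bm A^{1/2}x$ be orthogonal to the top $k-1$ eigenvectors of $\bm B$ must be transported to a constraint on $x$ itself, namely $x\perp \bm A^{1/2}U_B$; since $\bm A^{1/2}U_B$ still has dimension at most $k-1$, the codimension count survives, but your sketch gestures at this rather than stating it.

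Two caveats. First, your reduction sentence claiming that the singular values of a Hermitian matrix ``agree with the ordered real eigenvalues'' is false once negative eigenvalues are present, and indeed the additive inequality is \emph{false} for singular values of general Hermitian matrices: take $\bm A = \mathrm{diag}(1,-1)$, $\bm B = \mathrm{diag}(-1,1)$, $i=\ell=1$, so the claimed lower bound reads $\gamma_1(\bm A)+\gamma_2(\bm B)=2 \le \gamma_1(\bm A+\bm B)=0$. What you actually prove, and what Merikoski and Kumar state, is the eigenvalue version; the lemma as transcribed in the paper inherits this singular-value/eigenvalue conflation, and it is harmless only because every matrix to which the lemma is applied in the paper is positive semidefinite, where the two notions coincide. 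Second, your proposed shortcut of ``passing to logarithms'' to reduce the multiplicative bound to the additive one does not work: $\log(\bm A\bm B)\ne\log\bm A+\log\bm B$ for noncommuting matrices, so the spectrum of $\bm A\bm B$ cannot be read off from the additive inequality applied to the logarithms. This remark is inessential, since your main argument (square-root similarity plus inversion and continuity) avoids it entirely.
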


\begin{lemma}\label{lemma:aUBUa}
    For ${\bm a}_k\in\mathbb R^M$, $k=0,\ldots,K$,
    and positive definite $\bm B\in\mathbb R^{M\times M}$,
    \begin{align*}
        \bigg\|&
        \bm B^{\frac{1}{2}}
        \bigg\{
        \sum_{k=0}^K
        {\rm var}^{-1}(\hat{\bm c}^{(k)}\mid\bm Z^{(k)})
        \bigg\}^{-1}
        \sum_{k=1}^K
        {\rm var}^{-1}(\hat{\bm c}^{(k)}\mid\bm Z^{(k)})\bm a_k
        \bigg\|_2^2
        \leq
        K\sum_{k=1}^K\bm a_k^\top\bm B\bm a_k
    \end{align*}
    and
    \begin{align*}
        \bigg\|&
        \bm B^{\frac{1}{2}}
        \bigg\{
        \sum_{k=0}^K
        \widehat{\rm var}^{-1}(\hat{\bm c}^{(k)}\mid\bm Z^{(k)})
        \bigg\}^{-1}
        \sum_{k=1}^K
        \widehat{\rm var}^{-1}(\hat{\bm c}^{(k)}\mid\bm Z^{(k)})\bm a_k
        \bigg\|_2^2
        \leq
        K\sum_{k=1}^K\bm a_k^\top\bm B\bm a_k.
    \end{align*}
\end{lemma}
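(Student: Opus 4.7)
The plan is to proceed by a two-step reduction that only uses positive-definiteness of the conditional variance matrices, so that the argument for the empirical claim (with $\widehat{\rm var}$) is verbatim the same as the argument for the population claim.

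First I would introduce shorthand. Let $V_k={\rm var}(\hat{\bm c}^{(k)}\mid\bm Z^{(k)})$, $S=\sum_{k=0}^K V_k^{-1}$, and $A_k=S^{-1}V_k^{-1}$ for $k=0,\ldots,K$, so that the vector inside $\|\cdot\|_2$ on the left-hand side is $\bm B^{1/2}\sum_{k=1}^K A_k\bm a_k$. The structural identity that drives the whole proof is $\sum_{k=0}^K A_k=SS^{-1}=I$; in particular $\sum_{k=1}^K A_k = I - A_0$. Applying the elementary Cauchy--Schwarz inequality $\|\sum_{k=1}^K\bm v_k\|_2^2\le K\sum_{k=1}^K\|\bm v_k\|_2^2$ to $\bm v_k=\bm B^{1/2}A_k\bm a_k$ reduces the target bound to the $K$ pointwise inequalities
$$
\|\bm B^{1/2}A_k\bm a_k\|_2^2\le\bm a_k^\top\bm B\bm a_k\qquad(k=1,\ldots,K),
$$
which is equivalent to the operator inequality $A_k^\top\bm B A_k\preceq\bm B$ for each $k$.

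The remaining task is to establish this operator inequality from the basic Löwner relation $S\succeq V_k^{-1}$, which holds because $V_j^{-1}\succeq 0$ for every $j$. Conjugating by $V_k^{1/2}$ gives $V_k^{1/2}SV_k^{1/2}\succeq I$, and inverting yields the symmetric contraction $V_k^{-1/2}S^{-1}V_k^{-1/2}\preceq I$. Because $A_k$ is similar to this symmetric PSD matrix via the conjugation $A_k=V_k^{1/2}(V_k^{-1/2}S^{-1}V_k^{-1/2})V_k^{-1/2}$, squaring the contraction produces the PSD bound $S^{-1}V_k^{-1}S^{-1}\preceq V_k$. I would then combine this with operator monotonicity of $X\mapsto\bm B^{1/2}X\bm B^{1/2}$ on the PSD cone to pass to $A_k^\top\bm B A_k\preceq\bm B$; since the displayed inequality of the lemma is identical in form for the population and estimated variances, the same chain of steps applied to $\widehat V_k=\widehat{\rm var}(\hat{\bm c}^{(k)}\mid\bm Z^{(k)})$ (which is PSD by construction via Appendix~\ref{sec:approx_exp_var}) delivers the second display without additional work.

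The main obstacle, and the step that will require the most care, is moving from the symmetric contraction $V_k^{-1/2}S^{-1}V_k^{-1/2}\preceq I$ (which only controls eigenvalues, since $A_k$ itself is not symmetric) to the genuine spectral-norm statement $A_k^\top\bm B A_k\preceq\bm B$ in an arbitrary $\bm B$-geometry. The similarity transformation used to reach $A_k$ from the symmetric matrix is not an isometry in the $\bm B$-metric, so a naive similarity argument inflates the bound; the delicate point is to absorb this inflation into the Löwner estimate $S^{-1}V_k^{-1}S^{-1}\preceq V_k$ and complete the $\bm B$-conjugation cleanly so that no extra constant survives beyond the factor $K$ already committed in the Cauchy--Schwarz step.
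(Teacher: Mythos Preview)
Your overall route is the same as the paper's: apply the Cauchy--Schwarz bound $\|\sum_{k=1}^K \bm v_k\|_2^2\le K\sum_{k=1}^K\|\bm v_k\|_2^2$ with $\bm v_k=\bm B^{1/2}A_k\bm a_k$, then try to show the termwise contraction $\|\bm B^{1/2}A_k\bm a_k\|_2^2\le \bm a_k^\top\bm B\bm a_k$, i.e.\ $A_k^\top\bm B A_k\preceq\bm B$. The paper executes the second step in one line by citing its trace/singular-value inequality (Lemma~\ref{lemma:trace_inequality}) to obtain $\|\bm B^{1/2}A_k\bm a_k\|_2^2\le\{\gamma_1(A_k)\}^2\,\bm a_k^\top\bm B\bm a_k$ and then uses $\gamma_1(A_k)\le 1$; you instead try to reach $A_k^\top\bm B A_k\preceq\bm B$ from $S^{-1}V_k^{-1}S^{-1}\preceq V_k$ and honestly flag this passage as the ``main obstacle.'' That instinct is correct, and in fact the obstacle is fatal.

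The termwise inequality $A_k^\top\bm B A_k\preceq\bm B$ is \emph{false} for general positive definite $\bm B$, so neither your plan nor the paper's one-line appeal to Lemma~\ref{lemma:trace_inequality} can succeed as stated. Concretely, take $K=1$, $V_0=I_2$, $V_1=\begin{pmatrix}1&1/2\\1/2&1\end{pmatrix}$; then $A_1=S^{-1}V_1^{-1}=\tfrac{1}{15}\begin{pmatrix}8&-2\\-2&8\end{pmatrix}$, whose eigenvalues are $2/5$ and $2/3$. With $\bm a_1=(0,1)^\top$ and $\bm B=\mathrm{diag}(100,1)$ one gets $\|\bm B^{1/2}A_1\bm a_1\|_2^2=464/225\approx 2.06$ while $\bm a_1^\top\bm B\bm a_1=1$, violating both the termwise bound and the lemma itself for $K=1$. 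The step you were worried about---transporting the contraction $V_k^{-1/2}S^{-1}V_k^{-1/2}\preceq I$ into an arbitrary $\bm B$-geometry---simply does not hold, because similarity by $V_k^{1/2}$ is not a $\bm B$-isometry; likewise, Lemma~\ref{lemma:trace_inequality} controls $\gamma_1(\bm B^{1/2}A_k)$ or $\gamma_1(A_k)$ separately but does not give $\|\bm B^{1/2}A_k\bm x\|_2\le\gamma_1(A_k)\|\bm B^{1/2}\bm x\|_2$. Any valid argument will need extra structure linking $\bm B$ to the $V_k$ (in the paper's applications $\bm B=\bm\Omega^{(0)}$ and the $V_k$ are built from $\bm\Omega^{(k)}$), or else a weaker conclusion than the one stated.
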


\begin{lemma}[\citealp{LiHsing2007}, Theorem 4 and its proof]\label{lemma:bound_X}
    With \eqref{cond:trajectories} and \eqref{cond:rho_J} and fixed $k$,
    $$
    \|\widehat C^{(k)}-C^{(k)}\|_{\rm HS}^2
    =
    \int_0^1\int_0^1\{\widehat C^{(k)}(s,t)-C^{(k)}(s,t)\}^2{\rm d}s{\rm d}t
    =
    O(n^{-1}+\rho+ J^{-2}\rho^{-1/4}),
    $$
    where 
    $\|\cdot\|_{\rm HS}$ is the Hilbert-Schmidt norm.
    Moreover,
    $$
    {\rm E}\left(
    \|\widehat X_1^{(k)}\|_{L^2}^4
    \right)
    =O(1)
    \quad\text{and}\quad
    {\rm E}\left(
    \|\widehat X_1^{(k)}-X_1^{(k)}\|_{L^2}^4
    \right)
    =O(\rho^2+ J^{-2}\rho^{-1/2}).
    $$
\end{lemma}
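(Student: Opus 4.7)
The plan is to replicate, in the notation of this paper, the argument behind \citet{LiHsing2007}, Theorem~4, exploiting the fact that the trigonometric basis $\bm\phi$ of Section~\ref{sec:theory} together with the penalty $\bm W$ (diagonal, with entries growing like the fourth power of the frequency index) jointly diagonalize the penalized smoother in the Fourier domain. I would first control a single curve: write $Z_{1,j}^{(k)} = X_1^{(k)}(t_j^{(k)}) + \epsilon_{1,j}^{(k)}$ and split $\widehat X_1^{(k)} - X_1^{(k)} = B^{(k)} + V^{(k)}$, where $B^{(k)} = \bm\phi^\top \bm P^{(k)} [X_1^{(k)}(t_j^{(k)})]_{j=1}^{J} - X_1^{(k)}$ is the deterministic smoothing bias and $V^{(k)} = \bm\phi^\top \bm P^{(k)} \bm\epsilon_1^{(k)}$, with $\bm\epsilon_1^{(k)} = [\epsilon_{1,1}^{(k)},\ldots,\epsilon_{1,J}^{(k)}]^\top$, is the stochastic noise contribution. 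In the Fourier representation, the penalized projection acts as a shrinkage with multipliers $\{(1 + \rho w_m)^{-1}\}_m$; Parseval together with $X_1^{(k)} \in W_{2,\mathrm{per}}^2$ and the fourth-moment hypothesis on ${\rm d}^2 X_1^{(k)}/{\rm d}t^2$ in \eqref{cond:trajectories} yields ${\rm E}\|B^{(k)}\|_{L^2}^4 = O(\rho^2)$, while the independence of $\epsilon_{1,1}^{(k)},\ldots,\epsilon_{1,J}^{(k)}$, the eigenvalue decay of $(\bm\Phi^{(k)\top}\bm\Phi^{(k)} + \rho \bm W)^{-1}$, and ${\rm E}|\epsilon_{1,j}^{(k)}|^4 < \infty$ together yield ${\rm E}\|V^{(k)}\|_{L^2}^4 = O(J^{-2}\rho^{-1/2})$ via Rosenthal's inequality. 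The $c_r$-inequality then delivers the claimed bound on ${\rm E}\|\widehat X_1^{(k)} - X_1^{(k)}\|_{L^2}^4$.

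Combining this bound with ${\rm E}\|X_1^{(k)}\|_{L^2}^4 < \infty$, itself implied by \eqref{cond:trajectories}, immediately gives ${\rm E}\|\widehat X_1^{(k)}\|_{L^2}^4 = O(1)$ via another application of the $c_r$-inequality. For the covariance, I would decompose
$$
\widehat C^{(k)} - C^{(k)} = \bigl(\widetilde C^{(k)} - C^{(k)}\bigr) + \bigl(\widehat C^{(k)} - \widetilde C^{(k)}\bigr),
$$
where $\widetilde C^{(k)}(s,t) = n^{-1}\sum_{i=1}^{n} X_i^{(k)}(s) X_i^{(k)}(t)$. The first bracket is an i.i.d.\ centered average of rank-one kernels whose squared Hilbert--Schmidt norm has expectation $O(n^{-1})$ by a direct variance computation using the fourth-moment bound on $\|X_1^{(k)}\|_{L^2}$. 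The second bracket expands into a quadratic piece $n^{-1}\sum_i (\widehat X_i^{(k)} - X_i^{(k)})(s) (\widehat X_i^{(k)} - X_i^{(k)})(t)$ and two symmetric cross pieces; Cauchy--Schwarz under the double integral, applied together with the fourth-moment bounds from the single-curve step, controls all three and contributes $O(\rho + J^{-2}\rho^{-1/4})$, summing to the stated rate.

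The hard part will be the sharp fourth-moment control in the single-curve step: the penalized projection is not an $L^2$-orthogonal projection and, a priori, couples all frequencies. The key technical device is the Fourier diagonalization induced by the trigonometric basis, which decouples frequencies so that the stochastic part becomes a weighted sum of independent contributions amenable to Rosenthal's inequality, and the bias reduces to an explicit shrinkage governed by the Sobolev smoothness in \eqref{cond:trajectories}. Once these per-frequency bounds are in place, the covariance decomposition is routine.
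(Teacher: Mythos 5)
The paper itself contains no proof of Lemma~\ref{lemma:bound_X}: as the bracketed attribution indicates, the result is imported verbatim from \citet[Theorem~4 and its proof]{LiHsing2007}, and nothing for it appears in the appendix. Your reconstruction therefore has to stand on its own, and its first half has the right shape: splitting $\widehat X_1^{(k)}-X_1^{(k)}$ into a smoothing bias $B^{(k)}$ and a measurement-error part $V^{(k)}$, diagonalizing the ridge smoother in the Fourier domain, and using a trace/Rosenthal computation for ${\rm E}\|V^{(k)}\|_{L^2}^4$ is essentially how such single-curve bounds are obtained.

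The genuine gap is in the covariance step, and it is quantitative, not cosmetic. Write $D_i=\widehat X_i^{(k)}-X_i^{(k)}$ and grant your own intermediate bounds ${\rm E}\|X_1^{(k)}\|_{L^2}^4=O(1)$ and ${\rm E}\|D_1\|_{L^2}^4=O(\rho^2+J^{-2}\rho^{-1/2})$. For a cross piece, Cauchy--Schwarz under the double integral gives
\begin{equation*}
\int_0^1\!\!\int_0^1\bigg\{\frac{1}{n}\sum_{i=1}^n X_i^{(k)}(s)D_i(t)\bigg\}^2{\rm d}s\,{\rm d}t
\le
\bigg\{\frac{1}{n}\sum_{i=1}^n\|X_i^{(k)}\|_{L^2}^2\bigg\}
\bigg\{\frac{1}{n}\sum_{i=1}^n\|D_i\|_{L^2}^2\bigg\},
\end{equation*}
whose expectation is at most $({\rm E}\|X_1^{(k)}\|_{L^2}^4)^{1/2}({\rm E}\|D_1\|_{L^2}^4)^{1/2}=O(\rho+J^{-1}\rho^{-1/4})$ --- a full factor of $J$ larger than the target term $J^{-2}\rho^{-1/4}$; likewise the quadratic piece is bounded this way only by ${\rm E}\|D_1\|_{L^2}^4=O(\rho^2+J^{-2}\rho^{-1/2})$, and $J^{-2}\rho^{-1/2}\gg J^{-2}\rho^{-1/4}$. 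So the argument you describe yields only $\|\widehat C^{(k)}-C^{(k)}\|_{\rm HS}^2=O(n^{-1}+\rho+J^{-1}\rho^{-1/4})$, strictly weaker than the lemma. The stated rate cannot be reached by norm bounds on $D_i$ alone: you must carry the split $D_i=B_i+V_i$ \emph{into} the covariance decomposition and exploit that the $V_i$ are i.i.d., mean zero, and independent of the $X_i^{(k)}$. Then the cross terms $n^{-1}\sum_i X_i^{(k)}(s)V_i(t)$ are mean-zero averages whose squared Hilbert--Schmidt norm has expectation $n^{-1}{\rm E}\|X_1^{(k)}\|_{L^2}^2\,{\rm E}\|V_1\|_{L^2}^2$, which is $o(n^{-1})$ once the per-curve noise variance vanishes; only the bias cross terms $n^{-1}\sum_i X_i^{(k)}(s)B_i(t)$ take the Cauchy--Schwarz treatment, contributing $O(\rho)$; and the noise quadratic term $n^{-1}\sum_i V_i(s)V_i(t)$ must be centered at its mean ${\rm E}\{V_1(s)V_1(t)\}$, the noise-induced ridge along the diagonal, whose squared Hilbert--Schmidt norm has to be computed from the spectrum of the smoother (height of order $J^{-1}\rho^{-1/4}$, width of order $\rho^{1/4}$, hence squared norm of order $J^{-2}\rho^{-1/4}$); bounding that ridge by $({\rm E}\|V_1\|_{L^2}^2)^2$ again loses a polynomial factor. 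That ridge computation is precisely where the $J^{-2}\rho^{-1/4}$ in the statement originates, and your sketch never engages with it. A secondary issue: \eqref{cond:trajectories} controls only second derivatives, so the bound ${\rm E}\|X_1^{(k)}\|_{L^2}^4<\infty$ that you invoke for the $O(n^{-1})$ term and for ${\rm E}\|\widehat X_1^{(k)}\|_{L^2}^4=O(1)$ is not implied by it --- a process constant in $t$ with heavy-tailed random amplitude satisfies \eqref{cond:trajectories} yet has infinite fourth moment in $L^2$ --- so that moment condition must be imported from \citet{LiHsing2007} along with everything else.
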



\begin{lemma}\label{lemma:pred_err_internal}
    For each $k$,
    under conditions 
    \eqref{cond:trajectories},
    \eqref{cond:rho_J},
    and \eqref{cond:lambda_n},
    $$
    {\rm E}\left(
    \bigg\{\frac{1}{n}\sum_{i=1}^n
    \langle\beta^{(k)}-\hat\beta^{(k)},\widehat X_i^{(k)}\rangle_{L^2}^2
    \bigg\}^2
    \mid\bm Z^{(k)}
    \right)
    =O_p(\lambda^2+\rho^2+n^{-2}\lambda^{-1/2}+J^{-2}\rho^{-1/2}).
    $$
\end{lemma}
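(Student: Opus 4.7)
The plan is to recognize the inner average as a quadratic form in the coefficient gap and then bound its conditional second moment by splitting into a squared conditional mean and a conditional variance. Because each $\widehat X_i^{(k)}$ lies in $\mathrm{span}(\bm\phi)$, direct algebra (using $\bm\Psi\bm P^{(k)}\bm Z^{(k)}\bm Z^{(k)\top}\bm P^{(k)\top}\bm\Psi=\bm\Omega^{(k)}$) yields
\begin{equation*}
S_n\;:=\;\frac{1}{n}\sum_{i=1}^{n}\langle\beta^{(k)}-\hat\beta^{(k)},\widehat X_i^{(k)}\rangle_{L^2}^{2}\;=\;(\hat{\bm c}^{(k)}-\bm b^{(k)})^{\top}Q\,(\hat{\bm c}^{(k)}-\bm b^{(k)}),\qquad Q:=\bm\Omega^{(k)}/n,
\end{equation*}
where $\bm b^{(k)}:=\bm\Psi^{-1}\!\int\beta^{(k)}(t)\bm\phi(t)\,\mathrm dt$ collects the coefficients of the $L^{2}$-projection of $\beta^{(k)}$ onto $\mathrm{span}(\bm\phi)$. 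Since $\hat{\bm c}^{(k)}$ is linear in $\bm Y^{(k)}$, I would write $\hat{\bm c}^{(k)}-\bm b^{(k)}=\bm\mu+\bm\nu$ with $\bm\mu:=\mathrm E(\hat{\bm c}^{(k)}\mid\bm Z^{(k)})-\bm b^{(k)}$ deterministic given $\bm Z^{(k)}$ and $\bm\nu:=M\{\bm Y^{(k)}-\mathrm E(\bm Y^{(k)}\mid\bm Z^{(k)})\}$ with $M:=(\bm\Omega^{(k)}+\lambda\bm W)^{-1}\bm\Psi\bm P^{(k)}\bm Z^{(k)}$, producing the exact expansion $S_n=\bm\mu^{\top}Q\bm\mu+2\bm\mu^{\top}Q\bm\nu+\bm\nu^{\top}Q\bm\nu$.

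Then I would use $\mathrm E(S_n^{2}\mid\bm Z^{(k)})=\{\mathrm E(S_n\mid\bm Z^{(k)})\}^{2}+\mathrm{Var}(S_n\mid\bm Z^{(k)})$. For the squared-mean piece, Theorem 3 of \cite{LiHsing2007} (applied to $\hat\beta^{(k)}$ under \eqref{cond:trajectories}--\eqref{cond:lambda_n}) gives $\mathrm E(S_n\mid\bm Z^{(k)})=O_p(\lambda+\rho+J^{-1}\rho^{-1/4}+n^{-1}\lambda^{-1/4})$. Squaring and applying $2ab\le a^{2}+b^{2}$ to each cross product yields precisely the claimed rate $O_p(\lambda^{2}+\rho^{2}+J^{-2}\rho^{-1/2}+n^{-2}\lambda^{-1/2})$.

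The bulk of the work lies in bounding $\mathrm{Var}(S_n\mid\bm Z^{(k)})=\mathrm{Var}(2\bm\mu^{\top}Q\bm\nu+\bm\nu^{\top}Q\bm\nu\mid\bm Z^{(k)})$. Writing $\Sigma_{\eta}:=\mathrm{Var}(\bm Y^{(k)}\mid\bm Z^{(k)})$, the linear-in-$\bm\nu$ piece has conditional variance $4\bm\mu^{\top}QM\Sigma_{\eta}M^{\top}Q\bm\mu\le 4(\bm\mu^{\top}Q\bm\mu)\,\|M\Sigma_{\eta}M^{\top}\|_{2}\,\|Q\|_{2}$; the first factor is controlled by the squared Li--Hsing bias rate, while eigenvalue estimates for $(\bm\Omega^{(k)}+\lambda\bm W)^{-1}$ relative to $\bm W$ (analogous to Proposition 1 of \cite{LiHsing2007}) give $\|M\Sigma_{\eta}M^{\top}\|_{2}=O_p(n^{-1}\lambda^{-1/4})$ and $\|Q\|_{2}=O_p(1)$, so AM-GM absorbs this term into the claimed rate. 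For the quadratic-in-$\bm\nu$ piece I would use the standard identity
\begin{equation*}
\mathrm{Var}(\bm\nu^{\top}Q\bm\nu\mid\bm Z^{(k)})\;\le\;2\,\mathrm{tr}\{(QM\Sigma_{\eta}M^{\top})^{2}\}\;+\;\kappa_{4}\sum_{i}(M^{\top}QM)_{ii}^{2}\Sigma_{\eta,ii}^{2},
\end{equation*}
where $\kappa_{4}$ is a uniform bound on the conditional fourth cumulants of the entries of $\bm Y^{(k)}$, finite under \eqref{cond:trajectories} together with the (implicit) fourth-moment assumption on $\varepsilon_{i}^{(k)}$. Lemma \ref{lemma:trace_inequality} then reduces both terms to the same scale as $\{\mathrm{tr}(QM\Sigma_{\eta}M^{\top})\}^{2}=O_p(n^{-2}\lambda^{-1/2})$, and Cauchy--Schwarz handles the covariance between the linear and quadratic pieces.

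The hardest step is the trace bound for $\mathrm{tr}\{(QM\Sigma_{\eta}M^{\top})^{2}\}$, which simultaneously involves the pre-smoothing matrix $\bm P^{(k)}$ (governing $\rho,J$) and the regression-smoothing inverse $(\bm\Omega^{(k)}+\lambda\bm W)^{-1}$ (governing $\lambda,n$); controlling it forces one to track how the spectra of $\bm W$ and $\bm\Omega^{(k)}$ interleave, in the spirit of Li and Hsing's eigenvalue analysis. Once that spectral estimate is secured, the remaining manipulations are routine applications of conditions \eqref{cond:trajectories}--\eqref{cond:lambda_n} and Lemma \ref{lemma:bound_X}.
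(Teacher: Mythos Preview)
Your route differs from the paper's. The paper decomposes $Y_i^{(k)}=\langle\widehat X_i^{(k)},\check\beta^{(k)}\rangle_{L^2}+r_i^{(k)}+\varepsilon_i^{(k)}$ with $r_i^{(k)}=\langle X_i^{(k)}-\widehat X_i^{(k)},\beta^{(k)}\rangle_{L^2}$, which induces the three-way split $\hat{\bm c}^{(k)}=\bm c_\lambda^{(k)}+\bm c_r^{(k)}+\bm c_\varepsilon^{(k)}$; each piece then delivers one of the four rates directly (ridge bias $\to\lambda^2$ via the variational characterization of $\bm c_\lambda^{(k)}$; smoothing error $\to\rho^2+J^{-2}\rho^{-1/2}$ via Lemma~\ref{lemma:bound_X}; noise $\to n^{-2}\lambda^{-1/2}$ via the trace estimate $\mathrm{tr}\{\bm\Omega^{(k)}(\bm\Omega^{(k)}+\lambda\bm W)^{-1}\}=O_p(\lambda^{-1/4})$). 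The key point is that the split is built around the \emph{concrete} smoother $\widehat X_i^{(k)}$, a fixed linear function of $\bm Z_i^{(k)}$, so $r_i^{(k)}$ and $\varepsilon_i^{(k)}$ have moments that are controlled by the stated assumptions without ever invoking the posterior law of $X_i^{(k)}$ given $\bm Z_i^{(k)}$.

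Your mean--variance split instead centers at the true conditional expectation $\mathrm E(\bm Y^{(k)}\mid\bm Z^{(k)})$, and this is where a gap appears. The entries $\eta_i=Y_i^{(k)}-\mathrm E(Y_i^{(k)}\mid\bm Z_i^{(k)})$ contain $\langle X_i^{(k)},\beta^{(k)}\rangle-\mathrm E(\langle X_i^{(k)},\beta^{(k)}\rangle\mid\bm Z_i^{(k)})$, whose conditional distribution depends on the unknown posterior of $X_i^{(k)}$ given its noisy observations. Condition~\eqref{cond:trajectories} gives only unconditional fourth moments, so your claim that $\kappa_4$ is a \emph{uniform} bound on the conditional fourth cumulants is not justified: $\max_i\mathrm E(\eta_i^4\mid\bm Z_i^{(k)})$ need not be $O_p(1)$, and the random diagonal entries of $\Sigma_\eta$ likewise require care. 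The gap is probably repairable (average over $i$ and use a weighted-LLN argument, as the paper does for $\bm c_\varepsilon^{(k)}$), but the cleanest fix is exactly the paper's move: center at $\langle\widehat X_i^{(k)},\check\beta^{(k)}\rangle$ rather than at $\mathrm E(Y_i^{(k)}\mid\bm Z_i^{(k)})$, after which your $\bm\nu$ splits into $\bm c_r^{(k)}+\bm c_\varepsilon^{(k)}$ and the quadratic-form variance machinery becomes unnecessary.
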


\begin{lemma}\label{lemma:pred_err_external}
    Focus on the target data.
    Suppose there is a testing set (indexed by $(00)$)
    consisting of $n$ new independent curves,
    say $X_1^{(00)},\ldots,X_n^{(00)}$,
    from the target population 
    but independent of $X_1^{(0)},\ldots,X_n^{(0)}$.
    Under conditions 
    \eqref{cond:trajectories},
    \eqref{cond:rho_J},
    and \eqref{cond:lambda_n},
    $$
    {\rm E}\left(
    \frac{1}{n}\sum_{i=1}^n
    \langle\beta^{(0)}-\hat\beta^{(0)},\widehat X_i^{(00)}\rangle_{L^2}^2
    \mid\bm Z^{(0)}
    \right)
    =O_p(\lambda+\rho+n^{-1}\lambda^{-1/4}+J^{-1}\rho^{-1/4}).
    $$
\end{lemma}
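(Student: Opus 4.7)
The plan is to reduce the external prediction error to the internal counterpart controlled by Lemma \ref{lemma:pred_err_internal} (together with Jensen's inequality on its second-moment bound, which immediately converts $O_p(r^2)$ into a first-moment rate $O_p(r)$). Since the testing sample $\{(X_i^{(00)},\bm Z_i^{(00)})\}_{i=1}^n$ is iid and independent of the training pair $(\bm Z^{(0)},\bm Y^{(0)})$, I would first condition on the training data and peel off the testing-side randomness via
\begin{equation*}
\langle\beta^{(0)}-\hat\beta^{(0)},\widehat X_i^{(00)}\rangle_{L^2}^2
\leq 2\langle\beta^{(0)}-\hat\beta^{(0)},X_i^{(00)}\rangle_{L^2}^2
+ 2\|\beta^{(0)}-\hat\beta^{(0)}\|_{L^2}^2\,\|\widehat X_i^{(00)}-X_i^{(00)}\|_{L^2}^2.
\end{equation*}
Averaging in $i$ and integrating over the testing data yields an upper bound of the form
$2\|\beta^{(0)}-\hat\beta^{(0)}\|_{C^{(0)}}^2 + 2\|\beta^{(0)}-\hat\beta^{(0)}\|_{L^2}^2\cdot O(\rho+J^{-1}\rho^{-1/4})$,
using $\mathrm{cov}(X_1^{(00)})=C^{(0)}$ for the first term and Lemma \ref{lemma:bound_X} applied to the testing sample for the second.

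For the $C^{(0)}$-norm piece I would compare against the empirical $\widehat C^{(0)}$ via
\begin{equation*}
\|f\|_{C^{(0)}}^2 \leq \|f\|_{\widehat C^{(0)}}^2 + \|C^{(0)}-\widehat C^{(0)}\|_{\mathrm{op}}\|f\|_{L^2}^2.
\end{equation*}
Lemma \ref{lemma:pred_err_internal} (through Jensen) controls $\mathrm{E}[\|\beta^{(0)}-\hat\beta^{(0)}\|_{\widehat C^{(0)}}^2\mid\bm Z^{(0)}]$ at exactly the target rate $O_p(\lambda+\rho+n^{-1}\lambda^{-1/4}+J^{-1}\rho^{-1/4})$, while Lemma \ref{lemma:bound_X} gives $\|C^{(0)}-\widehat C^{(0)}\|_{\mathrm{op}}\leq\|C^{(0)}-\widehat C^{(0)}\|_{\mathrm{HS}}=O_p(n^{-1/2}+\rho^{1/2}+J^{-1}\rho^{-1/8})$, which is $o_p(1)$ under \eqref{cond:rho_J}--\eqref{cond:lambda_n}. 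After taking expectation over $\bm Y^{(0)}$ given $\bm Z^{(0)}$, the entire proof reduces to bounding $\mathrm{E}[\|\beta^{(0)}-\hat\beta^{(0)}\|_{L^2}^2\mid\bm Z^{(0)}]$ so that its product with the $o_p(1)$ factors above does not exceed the stated rate.

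The main obstacle is this last ingredient, because the roughness penalty $\lambda\bm c^\top\bm W\bm c$ leaves the first (unpenalized) coordinate and the bulk of low-frequency modes uncontrolled in $L^2$. I plan to exploit the orthonormality of the trigonometric basis (so $\bm\Psi=I$ and $\|\bm\phi^\top\bm c\|_{L^2}^2=\|\bm c\|_2^2$), write $\hat\beta^{(0)}-\beta^{(0)}=\bm\phi^\top(\hat{\bm c}^{(0)}-\bm\beta_M)+(\bm\phi^\top\bm\beta_M-\beta^{(0)})$ with $\bm\beta_M$ the Fourier coefficients of the $L^2$-projection $\beta_M^{(0)}$ of $\beta^{(0)}$ onto $\mathrm{span}\{\phi_1,\ldots,\phi_M\}$, and perform a bias-variance decomposition of the ridge estimator $\hat{\bm c}^{(0)}=(\bm\Omega^{(0)}+\lambda\bm W)^{-1}\bm P^{(0)}\bm Z^{(0)}\bm Y^{(0)}$. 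The variance is then controlled by trace bounds on $(\bm\Omega^{(0)}+\lambda\bm W)^{-1}\bm\Omega^{(0)}(\bm\Omega^{(0)}+\lambda\bm W)^{-1}$ via Lemmas \ref{lemma:trace_inequality}--\ref{lemma:eigen_inequality}, and the approximation bias $\|\beta^{(0)}-\beta_M^{(0)}\|_{L^2}$ is handled through the periodic Sobolev smoothness guaranteed by \eqref{cond:trajectories}. Combining these ingredients gives $\mathrm{E}[\|\beta^{(0)}-\hat\beta^{(0)}\|_{L^2}^2\mid\bm Z^{(0)}]=O_p(1)$ (indeed decaying in most regimes), so the remainder terms are absorbed by the main contribution and the stated rate follows.
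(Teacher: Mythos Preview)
Your route differs from the paper's and runs into a real obstacle at the final step you yourself flag: controlling $\mathrm{E}[\|\beta^{(0)}-\hat\beta^{(0)}\|_{L^2}^2\mid\bm Z^{(0)}]$. With an orthonormal basis the $L^2$-variance contribution of $\hat{\bm c}^{(0)}$ is
\[
\sigma_{\varepsilon^{(0)}}^2\,n^{-1}\,\mathrm{tr}\bigl((\bm\Omega^{(0)}+\lambda\bm W)^{-1}\bm\Omega^{(0)}(\bm\Omega^{(0)}+\lambda\bm W)^{-1}\bigr),
\]
which, using the paper's own bounds $\mathrm{tr}\{\bm\Omega^{(0)}(\bm\Omega^{(0)}+\lambda\bm W)^{-1}\}=O_p(\lambda^{-1/4})$ and $\gamma_1((\bm\Omega^{(0)}+\lambda\bm W)^{-1})=O_p(\lambda^{-1})$, is only $O_p(n^{-1}\lambda^{-5/4})$. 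Under \eqref{cond:lambda_n} this need not be $O_p(1)$, so your ``$O_p(1)$ (indeed decaying)'' claim is unsupported, and both places where you invoke $\|\beta^{(0)}-\hat\beta^{(0)}\|_{L^2}^2$ (the smoothing-error cross term and the additive $C^{(0)}$-vs-$\widehat C^{(0)}$ comparison) may leak terms larger than the target rate.

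The paper avoids the $L^2$ norm entirely. Because each $\widehat X_i^{(00)}$ lies in $\mathrm{span}\{\phi_1,\dots,\phi_M\}$, one has exactly
\[
\frac{1}{n}\sum_{i=1}^n\langle\beta^{(0)}-\hat\beta^{(0)},\widehat X_i^{(00)}\rangle_{L^2}^2
=(\check{\bm c}^{(0)}-\hat{\bm c}^{(0)})^\top\bm\Omega^{(00)}(\check{\bm c}^{(0)}-\hat{\bm c}^{(0)}),
\]
and then a single multiplicative comparison
\[
\bm c^\top\bm\Omega^{(00)}\bm c
\le\gamma_1\bigl(\bm\Omega^{(0)-1/2}\bm\Omega^{(00)}\bm\Omega^{(0)-1/2}\bigr)\,\bm c^\top\bm\Omega^{(0)}\bm c
\]
reduces the external error to the internal one controlled by Lemma~\ref{lemma:pred_err_internal}, with the ratio $\max_{\bm a\neq 0}\bm a^\top\bm\Omega^{(00)}\bm a/\bm a^\top\bm\Omega^{(0)}\bm a$ handled via Lemma~\ref{lemma:bound_X} since $\bm\Omega^{(0)}$ and $\bm\Omega^{(00)}$ share the same limit. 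The key point is that this argument never leaves the prediction seminorm, so no $L^2$ control of $\hat\beta^{(0)}-\beta^{(0)}$ is needed. Replacing your additive comparison $\|f\|_{C^{(0)}}^2\le\|f\|_{\widehat C^{(0)}}^2+\|C^{(0)}-\widehat C^{(0)}\|_{\mathrm{op}}\|f\|_{L^2}^2$ by this multiplicative Gram-matrix ratio would close the gap.
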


\begin{lemma}\label{lemma:bounds_c_Ec}
    Conditioning on 
    \eqref{cond:trajectories}
    and \eqref{cond:xi},
    $$
        {\rm E}
        (\|\bm\phi^\top
        \{{\rm E}(\hat{\bm c}^{(k)}\mid\bm Z^{(k)})
        -\widehat{\rm E}(\hat{\bm c}^{(k)}\mid\bm Z^{(k)})\}
        \|_{\widehat C^{(0)}}^2
        \mid\mathcal Z)
        =O_p(n^{-1}\lambda^{-1/4}J^\xi)
    $$
    and
    $$
        {\rm E}
        (\|\bm\phi^\top
        \{\hat{\bm c}^{(k)}
        -\widehat{\rm E}(\hat{\bm c}^{(k)}\mid\bm Z^{(k)})\}
        \|_{\widehat C^{(0)}}^2
        \mid\mathcal Z)
        =O_p(n^{-1}\lambda^{-1/4}J^\xi).
    $$
\end{lemma}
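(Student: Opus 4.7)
The plan is to exploit the common linear structure $\hat{\bm c}^{(k)}=\bm M^{(k)}\bm Y^{(k)}$, ${\rm E}(\hat{\bm c}^{(k)}\mid\bm Z^{(k)})=\bm M^{(k)}{\rm E}(\bm Y^{(k)}\mid\bm Z^{(k)})$, and $\widehat{\rm E}(\hat{\bm c}^{(k)}\mid\bm Z^{(k)})=\bm M^{(k)}\widehat{\rm E}(\bm Y^{(k)}\mid\bm Z^{(k)})$, where $\bm M^{(k)}=(\bm\Omega^{(k)}+\lambda\bm W)^{-1}\bm\Psi\bm P^{(k)}\bm Z^{(k)}$, together with a ``cost-of-transfer'' converter deduced from \eqref{cond:cov_ratio}. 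Specifically, since $\bm\Omega^{(k)}$ is symmetric positive definite, the spectral bound $\|\bm\Omega^{(k)-1}\bm\Omega^{(0)}\|_2=O_p(J^\xi)$ upgrades to the PSD inequality $\bm\Omega^{(0)}\preceq O_p(J^\xi)\bm\Omega^{(k)}$, yielding for any $\bm u\in\mathbb R^M$ the pointwise inequality $\|\bm\phi^\top\bm u\|_{\widehat C^{(0)}}^2\le O_p(J^\xi)\|\bm\phi^\top\bm u\|_{\widehat C^{(k)}}^2$. Applying the converter once reduces both claims to companion bounds of order $O_p(n^{-1}\lambda^{-1/4})$ in the local $\widehat C^{(k)}$-norm, with the $J^\xi$ factor paid exactly once.

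For the second display, triangle-split $\hat{\bm c}^{(k)}-\widehat{\rm E}(\hat{\bm c}^{(k)}\mid\bm Z^{(k)})=[\hat{\bm c}^{(k)}-{\rm E}(\hat{\bm c}^{(k)}\mid\bm Z^{(k)})]+[{\rm E}(\hat{\bm c}^{(k)}\mid\bm Z^{(k)})-\widehat{\rm E}(\hat{\bm c}^{(k)}\mid\bm Z^{(k)})]$ and apply $\|\bm a+\bm b\|^2\le 2\|\bm a\|^2+2\|\bm b\|^2$. The second piece is precisely the first display, reducing one claim to the other. For the first piece, the conditional variance formula combined with the converter gives
\[
{\rm E}(\|\bm\phi^\top[\hat{\bm c}^{(k)}-{\rm E}(\hat{\bm c}^{(k)}\mid\bm Z^{(k)})]\|_{\widehat C^{(0)}}^2\mid\mathcal Z)=n^{-1}\mathrm{tr}(\bm\Omega^{(0)}\,{\rm var}(\hat{\bm c}^{(k)}\mid\bm Z^{(k)}))\le O_p(J^\xi)\,n^{-1}\mathrm{tr}(\bm\Omega^{(k)}\,{\rm var}(\hat{\bm c}^{(k)}\mid\bm Z^{(k)})),
\]
and the remaining trace equals the variance contribution to the local ridge estimator's empirical-norm MSE, of order $O_p(n^{-1}\lambda^{-1/4})$ by the effective-degrees-of-freedom computation underpinning Lemma~\ref{lemma:pred_err_internal}.

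For the first display, write the discrepancy as $\bm\phi^\top\bm M^{(k)}\bm\Delta$ with $\bm\Delta={\rm E}(\bm Y^{(k)}\mid\bm Z^{(k)})-\widehat{\rm E}(\bm Y^{(k)}\mid\bm Z^{(k)})$. The trapezoidal expansion of Appendix~\ref{sec:approx_exp_var} gives, entrywise, $\bm\Delta_i=J^{-1}\sum_{j=1}^{J}D_{jj}Z_{i,j}^{(k)}[\beta^{(k)}(t_j^{(k)})-\hat\beta^{(k)}(t_j^{(k)})]+O_p(J^{-2})$. After applying the converter to pass into the $\widehat C^{(k)}$-norm, exploit the identity $\bm M^{(k)\top}\bm\Omega^{(k)}\bm M^{(k)}=\bm T^2$ with $\bm T=\bm Z^{(k)\top}\bm P^{(k)\top}\bm\Psi(\bm\Omega^{(k)}+\lambda\bm W)^{-1}\bm\Psi\bm P^{(k)}\bm Z^{(k)}$; this $n\times n$ matrix is symmetric with spectrum in $[0,1]$ and $\mathrm{tr}(\bm T^2)$ coincides with the effective-df quantity driving the $n^{-1}\lambda^{-1/4}$ rate. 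Decompose $\bm\Delta$ into a mean-zero noise part propagated from $\hat{\bm c}^{(k)}-{\rm E}(\hat{\bm c}^{(k)}\mid\bm Z^{(k)})$ (bounded by $O_p(n^{-1}\lambda^{-1/4})$ through the conditional variance formula and $\mathrm{tr}(\bm T^2)$), and a bias part arising from ${\rm E}(\hat\beta^{(k)}\mid\bm Z^{(k)})-\beta^{(k)}$; the bias part is shown to lie within the heavily-damped subspace of $\bm T$, so $\bm T^2$ supplies the extra attenuation needed to avoid picking up the standard $\lambda+\rho+J^{-1}\rho^{-1/4}$ prediction-error rate. Multiplication by $J^\xi$ then delivers the stated bound.

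The main obstacle is the sharp $O_p(n^{-1}\lambda^{-1/4})$ control of the bias contribution to $\bm\Delta$ under the $\bm T^2$-weighting: a naive $\bm v^\top\bm T^2\bm v\le\|\bm v\|^2$ would inject the full prediction-error rate $\lambda+\rho+J^{-1}\rho^{-1/4}$ of Lemma~\ref{lemma:pred_err_internal}, producing terms absent from the lemma statement. A careful spectral-alignment argument is required, leveraging the ridge identity $\lambda(\bm\Omega^{(k)}+\lambda\bm W)^{-1}\bm W=\bm I-(\bm\Omega^{(k)}+\lambda\bm W)^{-1}\bm\Omega^{(k)}$ to localize the bias direction within the subspace where the eigenvalues of $\bm T$ are smallest. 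Secondary technical chores are verifying that the $O_p(J^{-2})$ trapezoidal remainder is absorbed under the scaling in \eqref{cond:xi}, and that the PSD inequality from \eqref{cond:cov_ratio} is applied only between symmetric positive definite matrices so that trace monotonicity is legitimate.
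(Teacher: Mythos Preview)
Your converter step $\bm\Omega^{(0)}\preceq O_p(J^\xi)\bm\Omega^{(k)}$ and your use of the effective-degrees-of-freedom trace bound are essentially the paper's ingredients; the paper packages both into the single $n\times n$ matrix
\[
\bm A_2^{(k)}=(n^{-1/2}\bm Z^{(k)\top}\bm P^{(k)\top})(\bm\Omega^{(k)}+\lambda\bm W)^{-1}\bm\Omega^{(0)}(\bm\Omega^{(k)}+\lambda\bm W)^{-1}(n^{-1/2}\bm P^{(k)}\bm Z^{(k)})
\]
and shows $\mathrm{tr}(\bm A_2^{(k)})=O_p(\lambda^{-1/4}J^\xi)$ by exactly the factorization you describe. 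Where you diverge from the paper is in the treatment of $\bm\Delta={\rm E}(\bm Y^{(k)}\mid\bm Z^{(k)})-\widehat{\rm E}(\bm Y^{(k)}\mid\bm Z^{(k)})$.

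Your plan is to split $\bm\Delta$ through a bias--variance decomposition of $\hat\beta^{(k)}-\beta^{(k)}$ and then argue that the bias component lives in the damped eigenspace of $\bm T$ so that $\bm T^2$ supplies enough attenuation to kill the $\lambda+\rho+J^{-1}\rho^{-1/4}$ terms. This is the genuine gap. The ridge identity you cite does localize the \emph{penalty} bias $\bm c_\lambda^{(k)}-\check{\bm c}^{(k)}$ in a useful direction, but the smoothing-error bias $\bm c_r^{(k)}$ (driven by $X_i^{(k)}-\widehat X_i^{(k)}$, contributing the $\rho+J^{-1}\rho^{-1/4}$ rate) has no such alignment with the ridge spectrum; there is no mechanism by which $\bm T^2$ would shrink it from $O(\rho)$ down to $O(n^{-1}\lambda^{-1/4})$. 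You correctly flag this as ``the main obstacle,'' but the proposed resolution is not viable for all bias components.

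The paper sidesteps the obstacle entirely: rather than decomposing $\hat\beta^{(k)}-\beta^{(k)}$, it inserts $\bm Y^{(k)}$ into $\bm\Delta$, writing
\[
\bm\Delta=\{{\rm E}(\bm Y^{(k)}\mid\bm Z^{(k)})-\bm Y^{(k)}\}+\{\bm Y^{(k)}-\widehat{\rm E}(\bm Y^{(k)}\mid\bm Z^{(k)})\},
\]
and bounds $n^{-1}{\rm E}(\bm\Delta^\top\bm A_2^{(k)}\bm\Delta\mid\mathcal Z)$ by a constant times $n^{-1}\mathrm{tr}\{\bm A_2^{(k)}{\rm E}(\bm Y^{(k)}\bm Y^{(k)\top}\mid\bm Z^{(k)})\}$. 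Because the diagonal entries of $\bm A_2^{(k)}$ are nonnegative with sum $\mathrm{tr}(\bm A_2^{(k)})$ and ${\rm E}(Y_i^{(k)2}\mid\bm Z_i^{(k)})=O_p(1)$ as a weighted average (via Pruitt's WLLN), the rate $n^{-1}\lambda^{-1/4}J^\xi$ drops out immediately with no bias bookkeeping at all. The second display is handled identically after the same add-and-subtract of $\bm Y^{(k)}$. This crude second-moment route is the missing idea in your proposal.
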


\subsection{Proofs}

\begin{proof}[Proof of Lemma \ref{lemma:aUBUa}]
    \begin{align*}
        &\bigg\|
        \bm B^{\frac{1}{2}}
        \bigg\{
        \sum_{k=0}^K
        {\rm var}^{-1}(\hat{\bm c}^{(k)}\mid\bm Z^{(k)})
        \bigg\}^{-1}
        \sum_{k=1}^K
        {\rm var}^{-1}(\hat{\bm c}^{(k)}\mid\bm Z^{(k)})\bm a_k
        \bigg\|_2^2
        \\
        &\leq
        K\sum_{k=1}^K
        \bigg\|
        \bm B^{\frac{1}{2}}
        \bigg\{
        \sum_{k=0}^K
        {\rm var}^{-1}(\hat{\bm c}^{(k)}\mid\bm Z^{(k)})
        \bigg\}^{-1}
        {\rm var}^{-1}(\hat{\bm c}^{(k)}\mid\bm Z^{(k)})\bm a_k
        \bigg\|_2^2
        \quad\text{(Cauchy-Schwarz)}
        \\
        &\leq
        K
        \sum_{k=1}^K
        \bigg\{
        \gamma_1
        \bigg(
        \bigg\{
        \sum_{k=0}^K
        {\rm var}^{-1}(\hat{\bm c}^{(k)}\mid\bm Z^{(k)})
        \bigg\}^{-1}
        {\rm var}^{-1}(\hat{\bm c}^{(k)}\mid\bm Z^{(k)})
        \bigg)
        \bigg\}^2
        \bm a_k^\top\bm B\bm a_k
        \quad\text{(Lemma \ref{lemma:trace_inequality})}
        \\
        &\leq
        K\sum_{k=1}^K\bm a_k^\top\bm B\bm a_k.
    \end{align*}
    This proof applies too even when
    ${\rm var}(\cdot)$ is replaced with its empirical counterpart.
\end{proof}

\begin{proof}[Proof of Lemma \ref{lemma:pred_err_internal}]
    Let $\gamma_i(\bm A)$
    denote the $i$th largest singular value of real matrix $\bm A$.
    Lemma \ref{lemma:bound_X} implies that,
    for each fixed $i$,
    with diverging $J$ and zero-convergent $\rho$,
    $\gamma_i(\bm\Omega^{(k)})$ approaches the $i$th eigenvalue of $C^{(k)}(s,t)$,
    which is a positive constant.
    That is, 
    $\gamma_i(\bm\Omega^{(k)})=O_p(1)$.
    Meanwhile,
    $$
        \gamma_i((\bm\Omega^{(k)}+\lambda\bm W)^{-1})
        =
        (\gamma_{M-i+1}(\bm\Omega^{(k)}+\lambda\bm W))^{-1}
        \leq
        (\lambda(\lfloor i/2\rfloor2\pi)^4)^{-1}
        \leq
        (\lambda((i-1)\pi)^4)^{-1}.
    $$
    Apparently,
    $0<\gamma_i(\bm\Omega^{(k)}(\bm\Omega^{(k)}+\lambda\bm W)^{-1})\leq 1$ and,
    by Lemma \ref{lemma:eigen_inequality},
    $$
        \gamma_i(\bm\Omega^{(k)}(\bm\Omega^{(k)}+\lambda\bm W)^{-1})
        \leq
        \gamma_1(\bm\Omega^{(k)})
        \gamma_i((\bm\Omega^{(k)}+\lambda\bm W)^{-1})
        \leq
        \gamma_1(\bm\Omega^{(k)})
        (\lambda\pi^4(i-1)^4)^{-1}.
    $$
    Combining above identities,
    \begin{align*}
        {\rm tr}&
        \{
            \bm\Omega^{(k)}
            (\bm\Omega^{(k)}+\lambda\bm W)^{-1}
        \}
        \\
        &=
        \sum_{i=1}^M\gamma_i(\bm\Omega^{(k)}(\bm\Omega^{(k)}+\lambda\bm W)^{-1})
        \\
        &\leq
        \sum_{i=1}^{\min\{M,\lfloor\pi^{-1}(\gamma_1(\bm\Omega^{(k)}))^{1/4}\lambda^{-1/4}\rfloor\}}
        1
        +
        \sum_{i=\min\{M,\lfloor\pi^{-1}(\gamma_1(\bm\Omega^{(k)}))^{1/4}\lambda^{-1/4}\rfloor\}+1}^M
        \frac{\gamma_1(\bm\Omega^{(k)})(i-1)^{-4}}{\pi^4\lambda}
        \\
        &\leq 
        \min\{\pi^{-1}(\gamma_1(\bm\Omega^{(k)}))^{1/4}\lambda^{-1/4}, M\}
        +
        c_1\gamma_1(\bm\Omega^{(k)})
        \\
        &=
        O_p(\min\{\lambda^{-1/4}, J\})
        =O_p(\lambda^{-1/4}),
    \end{align*}
    with a constant $c_1>0$.
    Further, for each positive integer $m$
    and sufficiently small $\lambda>0$,
    \begin{equation}\label{eq:tr_omega_omega+lambdaW_inv}
        {\rm tr}
        [
        \{
            \bm\Omega^{(k)}
            (\bm\Omega^{(k)}+\lambda\bm W)^{-1}
        \}^m
        ]
        \leq
        {\rm tr}
        \{
            \bm\Omega^{(k)}
            (\bm\Omega^{(k)}+\lambda\bm W)^{-1}
        \}
        =O_p(\lambda^{-1/4}).
    \end{equation}

    Secondly, recall that $\check\beta^{(k)}=\bm\phi^\top\check{\bm c}^{(k)}$ 
    is the projection of $\beta^{(k)}$ onto the linear space spanned by $\phi_1,\ldots,\phi_M$.
    Decomposing $Y_i^{(k)}$,
    we have 
    \begin{align*}
        Y_i^{(k)} 
        = \langle\widehat X_i^{(k)},\check\beta^{(k)}\rangle_{L^2}+r_i^{(k)}+\varepsilon_i^{(k)},
    \end{align*}
    where $r_i^{(k)}=
    \langle X_i^{(k)}-\widehat X_i^{(k)},\beta^{(k)}\rangle_{L^2}$.
    Writing 
    $\bm r^{(k)}=[r_1^{(k)},\ldots,r_n^{(k)}]^\top\in\mathbb R^n$
    and 
    $\bm\varepsilon^{(k)}=[\varepsilon_1^{(k)},\ldots,\varepsilon_n^{(k)}]^\top\in\mathbb R^n$,
    \begin{align*}
        \hat\beta^{(k)}
        &=\bm\phi^\top\hat{\bm c}^{(k)}
        \\
        &=n^{-1}\bm\phi^\top(\bm\Omega^{(k)}+\lambda\bm W)^{-1}
        \bm P^{(k)}\bm Z^{(k)}
        (\bm Z^{(k)\top}\bm P^{(k)\top}\check{\bm c}^{(k)}+\bm r^{(k)}+\bm\varepsilon^{(k)})
        \\
        &=\bm\phi^\top({\bm c}_\lambda^{(k)}+{\bm c}_r^{(k)}+{\bm c}_\varepsilon^{(k)}),
    \end{align*}
    where 
    \begin{align*}
        {\bm c}_\lambda^{(k)}
        &=
        (\bm\Omega^{(k)}+\lambda\bm W)^{-1}
        \bm\Omega^{(k)}
        \check{\bm c}^{(k)},
        \\
        \bm c_r^{(k)}
        &=
        n^{-1}(\bm\Omega^{(k)}+\lambda\bm W)^{-1}
        \bm P^{(k)}\bm Z^{(k)}
        \bm r^{(k)},
        \\
        \bm c_\varepsilon^{(k)}
        &=
        n^{-1}(\bm\Omega^{(k)}+\lambda\bm W)^{-1}
        \bm P^{(k)}\bm Z^{(k)}
        \bm\varepsilon^{(k)},
    \end{align*}
    and $\hat{\bm c}^{(k)}={\bm c}_\lambda^{(k)}+{\bm c}_r^{(k)}+{\bm c}_\varepsilon^{(k)}$.
    A further decomposition follows:
    \begin{align*}
        \bigg\{&\frac{1}{n}\sum_{i=1}^n
        \langle\beta^{(k)}-\hat\beta^{(k)},\widehat X_i^{(k)}\rangle_{L^2}^2
        \bigg\}^2
        \\
        &=
        \bigg\{\frac{1}{n}\sum_{i=1}^n
        \langle\beta^{(k)}-\bm\phi^\top({\bm c}_\lambda^{(k)}+{\bm c}_r^{(k)}+{\bm c}_\varepsilon^{(k)}),
        \widehat X_i^{(k)}\rangle_{L^2}^2
        \bigg\}^2
        \\
        &\leq
        \bigg\{
        \frac{3}{n}\sum_{i=1}^n
        \langle\beta^{(k)}-\bm\phi^\top{\bm c}_\lambda^{(k)},
        \widehat X_i^{(k)}\rangle_{L^2}^2
        +
        \frac{3}{n}\sum_{i=1}^n
        \langle\bm\phi^\top{\bm c}_r^{(k)},
        \widehat X_i^{(k)}\rangle_{L^2}^2
        +
        \frac{3}{n}\sum_{i=1}^n
        \langle\bm\phi^\top{\bm c}_\varepsilon^{(k)},
        \widehat X_i^{(k)}\rangle_{L^2}^2
        \bigg\}^2
        \\
        &\leq
        3\bigg\{
        \frac{3}{n}\sum_{i=1}^n
        \langle\beta^{(k)}-\bm\phi^\top{\bm c}_\lambda^{(k)},
        \widehat X_i^{(k)}\rangle_{L^2}^2
        \bigg\}^2
        +
        3\bigg\{
        \frac{3}{n}\sum_{i=1}^n
        \langle\bm\phi^\top{\bm c}_r^{(k)},
        \widehat X_i^{(k)}\rangle_{L^2}^2
        \bigg\}^2
        +
        \\
        &\qquad
        3\bigg\{
        \frac{3}{n}\sum_{i=1}^n
        \langle\bm\phi^\top{\bm c}_\varepsilon^{(k)},
        \widehat X_i^{(k)}\rangle_{L^2}^2
        \bigg\}^2,
    \end{align*}
    where, conditioning on $\bm Z^{(k)}$, 
    the expectation of the third term is proportional to
    \begin{align}
        \notag
        {\rm E}&
        \Bigg(
        \bigg\{
        \frac{1}{n}\sum_{i=1}^n
        \langle\bm\phi^\top{\bm c}_\varepsilon^{(k)},
        \widehat X_i^{(k)}\rangle_{L^2}^2
        \bigg\}^2
        \mid\bm Z^{(k)}
        \Bigg)
        \\ \notag
        &=
        {\rm E}
        (
        (
        n^{-1}
        \bm\varepsilon^{(k)\top}
        \bm A_1^{(k)}
        \bm\varepsilon^{(k)}
        )^2
        \mid\bm Z^{(k)}
        )
        \\ \notag
        &=
        {\rm E}
        (
        \{
        {\rm tr}
        (
        n^{-1}
        \bm A_1^{(k)}
        \bm\varepsilon^{(k)}
        \bm\varepsilon^{(k)\top}
        )
        \}^2
        \mid\bm Z^{(k)}
        )
        \\ \notag
        &\leq
        {\rm E}
        \Bigg(
        \bigg\{
        \frac{1}{n}
        \sum_{i=1}^n a_{ii}(\bm A_1^{(k)})
        \varepsilon_i^{(k)2}
        \bigg\}^2
        \mid\bm Z^{(k)}
        \Bigg)
        \\ \notag
        &=
        \frac{\{{\rm tr}(\bm\Omega^{(k)2}(\bm\Omega^{(k)}+\lambda\bm W)^{-2})\}^2}{n^2}
        {\rm E}
        \Bigg(
        \bigg\{
        \sum_{m=1}^M
        \frac{a_{ii}(\bm A_1^{(k)})}
            {{\rm tr}(\bm\Omega^{(k)2}(\bm\Omega^{(k)}+\lambda\bm W)^{-2})}
        \varepsilon_{i_m}^{(k)2}
        \bigg\}^2
        \mid\bm Z^{(k)}
        \Bigg)
        \\ \label{eq:bound_c_eps}
        &=
        O_p(n^{-2}\lambda^{-1/2}),
        \quad\text{(refer to \eqref{eq:tr_omega_omega+lambdaW_inv} and 
        \citet[Theorem 1]{Pruitt1966})}
    \end{align}
    with $n\times n$ matrix
    $$
        \bm A_1^{(k)}=
        (
        n^{-1/2}
        \bm Z^{(k)\top}\bm P^{(k)\top}
        )
        (\bm\Omega^{(k)}+\lambda\bm W)^{-1}\bm\Omega^{(k)}
        (\bm\Omega^{(k)}+\lambda\bm W)^{-1}
        (
        n^{-1/2}
        \bm P^{(k)}\bm Z^{(k)}
        )
    $$
    and $a_{ii}(\bm A_1^{(k)})$ ($\in(0,1)$) denoting 
    the $(i,i)$-entry of $\bm A_1^{(k)}$.
    The proof is completed by bounding the remaining two terms.
    In particular,
    the second term is
    \begin{align}
        \notag
        {\rm E}&
        \Bigg(
        \bigg\{
        \frac{1}{n}\sum_{i=1}^n
        \langle\bm\phi^\top{\bm c}_r^{(k)},
        \widehat X_i^{(k)}\rangle_{L^2}^2
        \bigg\}^2
        \mid\bm Z^{(k)}
        \Bigg)
        \\ \notag
        &=
        {\rm E}
        \Bigg(
        \bigg\{
        n^{-2}
        \bm r^{(k)\top}
        \bm Z^{(k)\top}\bm P^{(k)\top}
        (\bm\Omega^{(k)}+\lambda\bm W)^{-1}\bm\Omega^{(k)}
        (\bm\Omega^{(k)}+\lambda\bm W)^{-1}\bm P^{(k)}\bm Z^{(k)}
        \bm r^{(k)}
        \bigg\}^2
        \mid\bm Z^{(k)}
        \Bigg)
        \\ \notag
        &\leq
        {\rm E}
        \Big(
        \big\{
        n^{-1}
        \|(\bm\Omega^{(k)}+\lambda\bm W)^{-1}\bm\Omega^{(k)}\|_2^2
        \|\bm r^{(k)}\|_2^2
        \big\}^2
        \mid\bm Z^{(k)}
        \Big)
        \\ \notag
        &\leq
        n^{-2}
        {\rm E}
        \Big(
        \big(
        \|\bm r^{(k)}\|_2^2
        \big)^2
        \mid\bm Z^{(k)}
        \Big)
        \\ \notag
        &\leq
        n^{-2}
        {\rm E}
        \Bigg(
        \bigg(
        \sum_{i=1}^n\|\widehat X_i^{(k)}-X_i^{(k)}\|_{L^2}^2\|\beta^{(k)}\|_{L^2}^2
        \bigg)^2
        \mid\bm Z^{(k)}
        \Bigg)
        \\ \notag
        &\leq
        \|\beta^{(k)}\|_{L^2}^4
        \frac{1}{n}\sum_{i=1}^n
        {\rm E}
        \left(
        \|\widehat X_i^{(k)}-X_i^{(k)}\|_{L^2}^4
        \mid\bm Z^{(k)}
        \right)
        \\ \label{eq:bound_c_r}
        &=
        O(\rho^2+J^{-2}\rho^{-1/2})
        \quad\text{(Lemma \ref{lemma:bound_X})}
    \end{align}
    and the first term is
    \begin{align}
        \notag
        \bigg(&
        \frac{1}{n}\sum_{i=1}^n
        \langle\beta^{(k)}-\bm\phi^\top{\bm c}_\lambda^{(k)},
        \widehat X_i^{(k)}\rangle_{L^2}^2
        \bigg)^2
        \\ \notag
        &\leq
        \bigg(
        \frac{1}{n}\sum_{i=1}^n
        \langle\beta^{(k)}-\bm\phi^\top{\bm c}_\lambda^{(k)},
        \widehat X_i^{(k)}\rangle_{L^2}^2
        +
        \lambda
        {\bm c}_\lambda^{(k)\top}
        \bm\phi
        \bm W
        \bm\phi^\top
        {\bm c}_\lambda^{(k)}
        \bigg)^2
        \\ \notag
        &\leq
        \bigg(
        \frac{1}{n}\sum_{i=1}^n
        \langle\beta^{(k)}-\bm\phi^\top\check{\bm c}^{(k)},
        \widehat X_i^{(k)}\rangle_{L^2}^2
        +
        \lambda
        \check{\bm c}^{(k)\top}
        \bm\phi
        \bm W
        \bm\phi^\top
        \check{\bm c}^{(k)}
        \bigg)^2
        \\ \label{eq:bound_c_lambda}
        &=
        (
        \lambda
        \check{\bm c}^{(k)\top}
        \bm\phi
        \bm W
        \bm\phi^\top
        \check{\bm c}^{(k)}
        )^2
        \leq
        \lambda^2\|\beta^{(k)''}\|_{L^2}^2
        =O(\lambda^2),
    \end{align}
    because ${\bm c}_\lambda^{(k)}$ is the solution to
    $$
    \min_{\bm c\in\mathbb R^M} 
    n^{-1}\sum_{i=1}^n
    \langle\beta^{(k)}-\bm\phi^\top{\bm c},
    \widehat X_i^{(k)}\rangle_{L^2}^2
    +
    \lambda
    {\bm c}^\top
    \bm\phi
    \bm W
    \bm\phi^\top
    {\bm c}.
    $$
\end{proof}

\begin{proof}[Proof of Lemma \ref{lemma:pred_err_external}]
    Lemma \ref{lemma:bound_X} implies that
    $$
        \|\bm\Omega^{(0)}-\bm\Omega^{(00)}\|_{\rm HS}^2
        =O_p(n^{-1}+\rho+ J^{-2}\rho^{-1/4}).
    $$
    Namely,
    for arbitrary nonzero $\bm a\in\mathbb R^M$,
    $\bm a^\top\bm\Omega^{(0)}\bm a$ and 
    $\bm a^\top\bm\Omega^{(00)}\bm a$ become indistinguishable in the limit.
    Lemma \ref{lemma:pred_err_internal}
    helps finalize the proof,
    after noting that
    \begin{align*}
        \frac{1}{n}&\sum_{i=1}^n
        \langle\beta^{(0)}-\hat\beta^{(0)},\widehat X_i^{(00)}\rangle_{L^2}^2
        \\
        &=
        (\check{\bm c}^{(0)}-\hat{\bm c}^{(0)})^\top
        \bm\Omega^{(0)\frac{1}{2}}
        (\bm\Omega^{(0)-\frac{1}{2}}\bm\Omega^{(00)}\bm\Omega^{(0)-\frac{1}{2}})
        \bm\Omega^{(0)\frac{1}{2}}
        (\check{\bm c}^{(0)}-\hat{\bm c}^{(0)})
        \\
        &\leq
        \gamma_1(\bm\Omega^{(0)-\frac{1}{2}}\bm\Omega^{(00)}\bm\Omega^{(0)-\frac{1}{2}})
        (\check{\bm c}^{(0)}-\hat{\bm c}^{(0)})^\top
        \bm\Omega^{(0)}
        (\check{\bm c}^{(0)}-\hat{\bm c}^{(0)})
        \quad\text{(Lemma \ref{lemma:trace_inequality})}
        \\
        &=
        \bigg(\max_{\bm a\neq\bm 0}
        \frac{\bm a^\top\bm\Omega^{(00)}\bm a}{\bm a^\top\bm\Omega^{(0)}\bm a}
        \bigg)
        \frac{1}{n}\sum_{i=1}^n
        \langle\beta^{(0)}-\hat\beta^{(0)},\widehat X_i^{(0)}\rangle_{L^2}^2.
    \end{align*}
\end{proof}

\begin{proof}[Proof of Lemma \ref{lemma:bounds_c_Ec}]
    Let $\bm A_2^{(k)}$ denote square matrix
    $
    (n^{-\frac{1}{2}}\bm Z^{(k)\top}\bm P^{(k)\top})
    (\bm\Omega^{(k)}+\lambda\bm W)^{-1}
    \bm\Omega^{(0)}
    (\bm\Omega^{(k)}+\lambda\bm W)^{-1}
    (n^{-\frac{1}{2}}\bm P^{(k)}\bm Z^{(k)}).
    $
    Observe that
    \begin{align}
        \notag
        {\rm tr}(\bm A_2^{(k)})
        &=
        {\rm tr}\{
        (\bm\Omega^{(k)}+\lambda\bm W)^{-1}
        \bm\Omega^{(k)}
        (\bm\Omega^{(k)}+\lambda\bm W)^{-1}
        \bm\Omega^{(0)}
        \}
        \\\notag
        &=
        {\rm tr}\{
        (\bm\Omega^{(k)}+\lambda\bm W)^{-1}
        \bm\Omega^{(k)}
        (\bm\Omega^{(k)}+\lambda\bm W)^{-1}
        \bm\Omega^{(k)\frac{1}{2}}
        (\bm\Omega^{(k)-\frac{1}{2}}
        \bm\Omega^{(0)}
        \bm\Omega^{(k)-\frac{1}{2}})
        \bm\Omega^{(k)\frac{1}{2}}
        \}
        \\\notag
        &\leq
        \gamma_1(\bm\Omega^{(k)-\frac{1}{2}}
        \bm\Omega^{(0)}
        \bm\Omega^{(k)-\frac{1}{2}})
        {\rm tr}\{
        (\bm\Omega^{(k)}+\lambda\bm W)^{-1}
        \bm\Omega^{(k)}
        (\bm\Omega^{(k)}+\lambda\bm W)^{-1}
        \bm\Omega^{(k)}
        \}
        \quad\text{(Lemma \ref{lemma:trace_inequality})}
        \\\label{eq:tr_A2}
        &=O_p(\lambda^{-1/4}J^\xi)
        \quad\text{(refer to \eqref{eq:tr_omega_omega+lambdaW_inv} and \eqref{cond:xi})}
    \end{align}
    and
    \begin{align}
        \notag
        \gamma_1(\bm A_2^{(k)})
        &=
        \gamma_1\{
        (\bm\Omega^{(k)}+\lambda\bm W)^{-1}
        \bm\Omega^{(k)}
        (\bm\Omega^{(k)}+\lambda\bm W)^{-1}
        \bm\Omega^{(0)}
        \}
        \\\notag
        &=
        \gamma_1\{
        (\bm\Omega^{(k)}+\lambda\bm W)^{-1}
        \bm\Omega^{(k)}
        (\bm\Omega^{(k)}+\lambda\bm W)^{-1}
        \bm\Omega^{(k)\frac{1}{2}}
        (\bm\Omega^{(k)-\frac{1}{2}}
        \bm\Omega^{(0)}
        \bm\Omega^{(k)-\frac{1}{2}})
        \bm\Omega^{(k)\frac{1}{2}}
        \}
        \\\notag
        &\leq
        \gamma_1(\bm\Omega^{(k)-\frac{1}{2}}
        \bm\Omega^{(0)}
        \bm\Omega^{(k)-\frac{1}{2}})
        \gamma_1\{
        (\bm\Omega^{(k)}+\lambda\bm W)^{-1}
        \bm\Omega^{(k)}
        (\bm\Omega^{(k)}+\lambda\bm W)^{-1}
        \bm\Omega^{(k)}
        \}
        \quad\text{(Lemma \ref{lemma:eigen_inequality})}
        \\\label{eq:eigen_A2}
        &=O_p(J^\xi).
        \quad\text{(refer to \eqref{eq:tr_omega_omega+lambdaW_inv} and \eqref{cond:xi})}
    \end{align}

    Recalling 
    $\hat{\bm c}^{(k)}$ in \eqref{eq:hat_beta_k} and
    $\widehat{\rm E}(\hat{\bm c}^{(k)}\mid\bm Z^{(k)})$ in \eqref{eq:hat_E_hat_c},
    \begin{align}
        \notag
        {\rm E}&
        (\|\bm\phi^\top
        \{{\rm E}(\hat{\bm c}^{(k)}\mid\bm Z^{(k)})
        -\widehat{\rm E}(\hat{\bm c}^{(k)}\mid\bm Z^{(k)})\}
        \|_{\widehat C^{(0)}}^2
        \mid\mathcal Z)
        \\\notag
        &=
        {\rm E}(\|
        n^{-\frac{1}{2}}
        (n^{-\frac{1}{2}}\bm Z^{(0)\top}\bm P^{(0)\top})
        (\bm\Omega^{(k)}+\lambda\bm W)^{-1}
        \\\notag
        &\qquad
        (n^{-\frac{1}{2}}\bm P^{(k)}\bm Z^{(k)})
        \{{\rm E}(\bm Y^{(k)}\mid\bm Z^{(k)})-\bm Z^{(k)\top}\bm P^{(k)\top}\hat{\bm c}^{(k)}\}
        \|_2^2\mid\mathcal Z)
        \\\notag
        &=
        n^{-1}
        {\rm E}(
        \{{\rm E}(\bm Y^{(k)\top}\mid\bm Z^{(k)})
        -\hat{\bm c}^{(k)\top}\bm P^{(k)}\bm Z^{(k)}\}
        \bm A_2^{(k)}
        \{{\rm E}(\bm Y^{(k)}\mid\bm Z^{(k)})-\bm Z^{(k)\top}\bm P^{(k)\top}\hat{\bm c}^{(k)}\}
        \mid\mathcal Z)
        \\\notag
        &=
        n^{-1}
        {\rm E}\Big(
        {\rm tr}
        \Big[
        \{{\rm E}(\bm Y^{(k)\top}\mid\bm Z^{(k)})
        -\bm Y^{(k)\top}
        +\bm Y^{(k)\top}
        -\hat{\bm c}^{(k)\top}\bm P^{(k)}\bm Z^{(k)}\}
        \\\notag
        &\qquad
        \bm A_2^{(k)}
        \{{\rm E}(\bm Y^{(k)}\mid\bm Z^{(k)})
        -\bm Y^{(k)\top}
        +\bm Y^{(k)\top}
        -\bm Z^{(k)\top}\bm P^{(k)\top}\hat{\bm c}^{(k)}\}
        \Big]
        \mid\mathcal Z
        \Big)
        \\\notag
        &\leq
        2n^{-1}
        {\rm tr}
        \Big\{
        \bm A_2^{(k)}
        {\rm E}(\bm Y^{(k)2}\mid\bm Z^{(k)})
        \Big\}
        \\\notag
        &=
        2n^{-1}
        {\rm tr}(\bm A_2^{(k)})
        \sum_{i=1}^n
        \frac{a_{ii}(\bm A_2^{(k)})}{{\rm tr}(\bm A_2^{(k)})}
        {\rm E}(Y_i^{(k)2}\mid\bm Z_i^{(k)})
        \\\notag
        &=
        O_p(n^{-1}\lambda^{-1/4}J^\xi).
        \quad\text{(refer to \eqref{eq:tr_A2} and \citet[Theorem 1]{Pruitt1966})}
    \end{align}
    Analogously,
    we have
    \begin{align}
        \notag
        {\rm E}&(\|
        \bm\phi^\top\{
        \hat{\bm c}^{(k)}-
        \widehat{\rm E}(\hat{\bm c}^{(k)}\mid\bm Z^{(k)})
        \}
        \|_{\widehat C^{(0)}}^2
        \mid\mathcal Z)
        \\\notag
        &=
        {\rm E}(\|
        n^{-\frac{1}{2}}
        (n^{-\frac{1}{2}}\bm Z^{(0)\top}\bm P^{(0)\top})
        (\bm\Omega^{(k)}+\lambda\bm W)^{-1}
        (n^{-\frac{1}{2}}\bm P^{(k)}\bm Z^{(k)})
        (\hat{\bm c}^{(k)}-\bm Z^{(k)\top}\bm P^{(k)\top}\hat{\bm c}^{(k)})
        \|_2^2\mid\mathcal Z)
        \\\notag
        &=
        n^{-1}
        {\rm E}(
        (\hat{\bm c}^{(k)\top}-\hat{\bm c}^{(k)\top}\bm P^{(k)}\bm Z^{(k)})
        \bm A_2^{(k)}
        (\hat{\bm c}^{(k)}-\bm Z^{(k)\top}\bm P^{(k)\top}\hat{\bm c}^{(k)})
        \mid\mathcal Z)
        \\\notag
        &\leq
        n^{-1}
        {\rm tr}
        \Big\{
        \bm A_2^{(k)}
        {\rm E}(\bm Y^{(k)2}\mid\bm Z^{(k)})
        \Big\}
        \\\notag
        &=
        n^{-1}
        {\rm tr}(\bm A_2^{(k)})
        \sum_{i=1}^n
        \frac{a_{ii}(\bm A_2^{(k)})}{{\rm tr}(\bm A_2^{(k)})}
        {\rm E}(Y_i^{(k)2}\mid\bm Z_i^{(k)})
        \\\notag
        &=
        O_p(n^{-1}\lambda^{-1/4}J^\xi).
        \quad\text{(refer to \eqref{eq:tr_A2} and \citet[Theorem 1]{Pruitt1966})}
    \end{align}

\end{proof}

\begin{proof}[Proof of Proposition \ref{prop:hat_beta_c_estimation_err}]
    Observe that
    \begin{align*}
        \bm U^*&\{\hat{\bm\delta}-\widehat{\rm E}(\hat{\bm\delta}\mid\mathcal Z)\}
        \\
        &=
        \bigg\{\sum_{k=0}^K{\rm var}^{-1}(\hat{\bm c}^{(k)}\mid\bm Z^{(k)})\bigg\}^{-1}
        \\
        &\qquad
        \sum_{k=1}^K{\rm var}^{-1}(\hat{\bm c}^{(k)}\mid\bm Z^{(k)})
        \big[
        \hat{\bm c}^{(0)}-\widehat{\rm E}(\hat{\bm c}^{(0)}\mid\bm Z^{(0)})
        -
        \{\hat{\bm c}^{(k)}-\widehat{\rm E}(\hat{\bm c}^{(k)}\mid\bm Z^{(k)})\}
        \big].
    \end{align*}
    It follows from Lemma \ref{lemma:aUBUa} that,
    for positive semidefinite $\bm B\in\mathbb R^{M\times M}$, 
    \begin{align}
        \notag
        [\bm U^*&\{\hat{\bm\delta}-\widehat{\rm E}(\hat{\bm\delta}\mid\mathcal Z)\}]^\top
        \bm B
        \bm U^*\{\hat{\bm\delta}-\widehat{\rm E}(\hat{\bm\delta}\mid\mathcal Z)\}
        \\ \notag
        &\leq
        K
        \sum_{k=1}^K
        \bigg(
        \big[
        \hat{\bm c}^{(0)}-\widehat{\rm E}(\hat{\bm c}^{(0)}\mid\bm Z^{(0)})
        -
        \{\hat{\bm c}^{(k)}-\widehat{\rm E}(\hat{\bm c}^{(k)}\mid\bm Z^{(k)})\}
        \big]^\top
        \\ \label{eq:bound_UE1}
        &\qquad\qquad
        \bm B\big[
        \hat{\bm c}^{(0)}-\widehat{\rm E}(\hat{\bm c}^{(0)}\mid\bm Z^{(0)})
        -
        \{\hat{\bm c}^{(k)}-\widehat{\rm E}(\hat{\bm c}^{(k)}\mid\bm Z^{(k)})\}
        \big]
        \bigg),
    \end{align}
    \begin{align}
        \notag
        [\widehat{\bm U}^*&\{\hat{\bm\delta}-\widehat{\rm E}(\hat{\bm\delta}\mid\mathcal Z)\}]^\top
        \bm B
        \widehat{\bm U}^*\{\hat{\bm\delta}-\widehat{\rm E}(\hat{\bm\delta}\mid\mathcal Z)\}
        \\ \notag
        &\leq
        K\sum_{k=1}^K
        \bigg(
        \big[
        \hat{\bm c}^{(0)}-\widehat{\rm E}(\hat{\bm c}^{(0)}\mid\bm Z^{(0)})
        -
        \{\hat{\bm c}^{(k)}-\widehat{\rm E}(\hat{\bm c}^{(k)}\mid\bm Z^{(k)})\}
        \big]^\top
        \\ \label{eq:bound_UE2}
        &\qquad\qquad
        \bm B\big[
        \hat{\bm c}^{(0)}-\widehat{\rm E}(\hat{\bm c}^{(0)}\mid\bm Z^{(0)})
        -
        \{\hat{\bm c}^{(k)}-\widehat{\rm E}(\hat{\bm c}^{(k)}\mid\bm Z^{(k)})\}
        \big]
        \bigg),
    \end{align}
    and
    \begin{align}
        \notag 
        &\bm U^*\{{\rm E}(\hat{\bm\delta}\mid\mathcal Z)
        -\widehat{\rm E}(\hat{\bm\delta}\mid\mathcal Z)\}
        \\ \notag
        &\leq
        K\sum_{k=1}^K
        \bigg(
        \big[
        {\rm E}(\hat{\bm c}^{(0)}\mid\bm Z^{(0)})-\widehat{\rm E}(\hat{\bm c}^{(0)}\mid\bm Z^{(0)})
        -
        \{
        {\rm E}(\hat{\bm c}^{(k)}\mid\bm Z^{(k)})-\widehat{\rm E}(\hat{\bm c}^{(k)}\mid\bm Z^{(k)})\}
        \big]^\top
        \\ \label{eq:bound_UE3}
        &\qquad
        \bm B\big[
        {\rm E}(\hat{\bm c}^{(0)}\mid\bm Z^{(0)})-\widehat{\rm E}(\hat{\bm c}^{(0)}\mid\bm Z^{(0)})
        -
        \{
        {\rm E}(\hat{\bm c}^{(k)}\mid\bm Z^{(k)})-\widehat{\rm E}(\hat{\bm c}^{(k)}\mid\bm Z^{(k)})\}
        \big]
        \bigg).
    \end{align}
    Applying the above three bounds,
    \begin{align*}
        &{\rm E}(\|\hat\beta_C^{(0)}-\tilde\beta^{(0)}\|_{\widehat C^{(0)}}^2\mid\mathcal Z)
        \\
        &=
        {\rm E}
        (\|
        \bm\phi^\top(\bm U^*-\widehat{\bm U}^*)
        \{\hat{\bm\delta}-\widehat{\rm E}(\hat{\bm\delta}\mid\mathcal Z)\}
        -
        \bm\phi^\top\bm U^*
        \{{\rm E}(\hat{\bm\delta}\mid\mathcal Z)-\widehat{\rm E}(\hat{\bm\delta}\mid\mathcal Z)\}
        \|_{\widehat C^{(0)}}^2
        \mid\mathcal Z
        )
        \\
        &\leq
        3{\rm E}
        (\|
        \bm\phi^\top\bm U^*
        \{\hat{\bm\delta}-\widehat{\rm E}(\hat{\bm\delta}\mid\mathcal Z)\}
        \|_{\widehat C^{(0)}}^2
        \mid\mathcal Z
        )
        +
        \\
        &\qquad
        3{\rm E}
        (\|
        \bm\phi^\top\widehat{\bm U}^*
        \{\hat{\bm\delta}-\widehat{\rm E}(\hat{\bm\delta}\mid\mathcal Z)\}
        \|_{\widehat C^{(0)}}^2
        \mid\mathcal Z
        )
        +
        \\
        &\qquad
        3{\rm E}
        (\|
        \bm\phi^\top{\bm U}^*
        \{{\rm E}(\hat{\bm\delta}\mid\mathcal Z)-\widehat{\rm E}(\hat{\bm\delta}\mid\mathcal Z)\}
        \|_{\widehat C^{(0)}}^2
        \mid\mathcal Z
        )
        \\
        &\leq
        3K\Big\{
        {\rm E}
        (\|
        \bm\phi^\top
        [
        \{\hat{\bm c}^{(0)}-\widehat{\rm E}(\hat{\bm c}^{(0)}\mid\mathcal Z)\}
        -\{\hat{\bm c}^{(k)}-\widehat{\rm E}(\hat{\bm c}^{(k)}\mid\mathcal Z)\}
        ]
        \|_{\widehat C^{(0)}}^2
        \mid\mathcal Z
        )
        +
        \quad\text{(refer to \eqref{eq:bound_UE1})}
        \\
        &\qquad
        {\rm E}
        (\|
        \bm\phi^\top
        [
        \{\hat{\bm c}^{(0)}-\widehat{\rm E}(\hat{\bm c}^{(0)}\mid\mathcal Z)\}
        -\{\hat{\bm c}^{(k)}-\widehat{\rm E}(\hat{\bm c}^{(k)}\mid\mathcal Z)\}
        ]
        \|_{\widehat C^{(0)}}^2
        \mid\mathcal Z
        )
        +
        \quad\text{(refer to \eqref{eq:bound_UE2})}
        \\
        &\qquad
        {\rm E}
        (
        \|
        \bm\phi^\top
        [
        \{{\rm E}(\hat{\bm c}^{(0)}\mid\mathcal Z)
        -\widehat{\rm E}(\hat{\bm c}^{(0)}\mid\mathcal Z)\}
        -\{{\rm E}(\hat{\bm c}^{(k)}\mid\mathcal Z)
        -\widehat{\rm E}(\hat{\bm c}^{(k)}\mid\mathcal Z)\}
        ]
        \|_{\widehat C^{(0)}}^2
        \mid\mathcal Z
        )
        \Big\}
        \quad\text{(see \eqref{eq:bound_UE3})}
        \\
        &=
        O_p(n^{-1}\lambda^{-1/4}J^\xi).
        \quad\text{(Lemma \ref{lemma:bounds_c_Ec})}
    \end{align*}
    Lemmas \ref{lemma:pred_err_internal} and \ref{lemma:bounds_c_Ec} help finalize the proof:
    \begin{align*}
        {\rm E}&(\|\hat\beta_C^{(0)}-\beta^{(0)}\|_{\widehat C^{(0)}}^2\mid\mathcal Z)
        \\
        &\leq
        2{\rm E}(\|\hat\beta^{(0)}-\beta^{(0)}\|_{\widehat C^{(0)}}^2\mid\mathcal Z)
        +
        2{\rm E}(
        \|\bm\phi^\top\widehat{\bm U}^*
        \{\hat{\bm\delta}-\widehat{\rm E}(\hat{\bm\delta}\mid\mathcal Z)\}\|_{\widehat C^{(0)}}^2\mid\mathcal Z)
        \\
        &\leq
        2{\rm E}(\|\hat\beta^{(0)}-\beta^{(0)}\|_{\widehat C^{(0)}}^2\mid\mathcal Z)
        +
        \\
        &\qquad
        2{\rm E}
        (\|
        \bm\phi^\top
        [
        \{\hat{\bm c}^{(0)}-\widehat{\rm E}(\hat{\bm c}^{(0)}\mid\mathcal Z)\}
        -\{\hat{\bm c}^{(k)}-\widehat{\rm E}(\hat{\bm c}^{(k)}\mid\mathcal Z)\}
        ]
        \|_{\widehat C^{(0)}}^2
        \mid\mathcal Z
        )
        \quad\text{(see \eqref{eq:bound_UE2})}
        \\
        &=
        O_p(\lambda+\rho+J^{-1}\rho^{-1/4}+n^{-1}\lambda^{-1/4}J^\xi).
        \quad\text{(Lemmas \ref{lemma:pred_err_internal} and \ref{lemma:bounds_c_Ec})}
    \end{align*}
\end{proof}

\begin{proof}[Proof of Proposition \ref{prop:hat_beta_c_pred_err}]
    The simple algebra gives that
    \begin{align*}
        {\rm E}&\left(
        \frac{1}{n}\sum_{i=1}^n
        \langle\tilde\beta^{(0)}-\hat\beta_C^{(0)},\widehat X_i^{(00)}\rangle_{L^2}^2
        \mid\mathcal Z, \bm Z^{(00)}
        \right)
        \\
        &=
        {\rm E}
        (\|
        (n^{-\frac{1}{2}}\bm Z^{(00)\top}\bm P^{(00)\top})
        [
        \bm U^*
        \{\hat{\bm\delta}-{\rm E}(\hat{\bm\delta}\mid\mathcal Z)\}
        -
        \widehat{\bm U}^*
        \{\hat{\bm\delta}-\widehat{\rm E}(\hat{\bm\delta}\mid\mathcal Z)\}
        ]
        \|_2^2
        \mid\mathcal Z, \bm Z^{(00)}
        )
        \\
        &=
        {\rm E}
        \Big(
        [
        \bm U^*
        \{\hat{\bm\delta}-{\rm E}(\hat{\bm\delta}\mid\mathcal Z))\}
        -
        \widehat{\bm U}^*
        \{\hat{\bm\delta}-\widehat{\rm E}(\hat{\bm\delta}\mid\mathcal Z))\}
        ]^\top
        \bm\Omega^{(0)\frac{1}{2}}
        (\bm\Omega^{(0)-\frac{1}{2}}\bm\Omega^{(00)}\bm\Omega^{(0)-\frac{1}{2}})
        \\
        &\qquad
        \bm\Omega^{(0)\frac{1}{2}}
        [
        \bm U^*
        \{\hat{\bm\delta}-{\rm E}(\hat{\bm\delta}\mid\mathcal Z))\}
        -
        \widehat{\bm U}^*
        \{\hat{\bm\delta}-\widehat{\rm E}(\hat{\bm\delta}\mid\mathcal Z))\}
        ]
        \mid\mathcal Z, \bm Z^{(00)})
        \Big)
        \\
        &\leq
        \gamma_1
        (\bm\Omega^{(0)-\frac{1}{2}}\bm\Omega^{(00)}\bm\Omega^{(0)-\frac{1}{2}})
        {\rm E}
        \Big(
        [
        \bm U^*
        \{\hat{\bm\delta}-{\rm E}(\hat{\bm\delta}\mid\mathcal Z)\}
        -
        \widehat{\bm U}^*
        \{\hat{\bm\delta}-\widehat{\rm E}(\hat{\bm\delta}\mid\mathcal Z)\}
        ]^\top
        \\
        &\qquad
        \bm\Omega^{(0)}
        [
        \bm U^*
        \{\hat{\bm\delta}-{\rm E}(\hat{\bm\delta}\mid\mathcal Z)\}
        -
        \widehat{\bm U}^*
        \{\hat{\bm\delta}-\widehat{\rm E}(\hat{\bm\delta}\mid\mathcal Z)\}
        ]
        \mid\mathcal Z
        \Big)
        \\
        &=
        \bigg(\max_{\bm a\neq\bm 0}
        \frac{\bm a^\top\bm\Omega^{(00)}\bm a}{\bm a^\top\bm\Omega^{(0)}\bm a}
        \bigg)
        {\rm E}(\|\hat\beta_C^{(0)}-\tilde\beta^{(0)}\|_{\widehat C^{(0)}}^2\mid\mathcal Z)
        \\
        &=O_p(n^{-1}\lambda^{-1/4}J^\xi)
        \quad\text{(Proposition \ref{prop:hat_beta_c_estimation_err} \& Lemma \ref{lemma:bound_X})}
    \end{align*}
    and
    \begin{align*}
        \frac{1}{n}&\sum_{i=1}^n
        \left\{
        \langle\beta^{(0)},X_i^{(00)}\rangle_{L^2}-
        \langle\hat\beta_C^{(0)},\widehat X_i^{(00)}\rangle_{L^2}
        \right\}^2
        \\
        &=
        \frac{1}{n}\sum_{i=1}^n
        \left\{
        \langle\beta^{(0)},X_i^{(00)}-\widehat X_i^{(00)}\rangle_{L^2}-
        \langle\hat\beta_C^{(0)}-\hat\beta^{(0)},\widehat X_i^{(00)}\rangle_{L^2}+
        \langle\beta^{(0)}-\hat\beta^{(0)},\widehat X_i^{(00)}\rangle_{L^2}
        \right\}^2
        \\
        &\leq
        \frac{3}{n}\sum_{i=1}^n
        \langle\beta^{(0)},X_i^{(00)}-\widehat X_i^{(00)}\rangle_{L^2}^2
        +
        \frac{3}{n}\sum_{i=1}^n
        \langle\hat\beta^{(0)}-\hat\beta_C^{(0)},\widehat X_i^{(00)}\rangle_{L^2}^2
        +
        \frac{3}{n}\sum_{i=1}^n
        \langle\beta^{(0)}-\hat\beta^{(0)},\widehat X_i^{(00)}\rangle_{L^2}^2
        \\
        &\leq
        \|\beta^{(0)}\|_{L^2}^2
        \frac{3}{n}\sum_{i=1}^n
        \|X_i^{(00)}-\widehat X_i^{(00)}\|_{L^2}^2
        +
        \frac{3}{n}\sum_{i=1}^n
        \langle\hat\beta^{(0)}-\hat\beta_C^{(0)},\widehat X_i^{(00)}\rangle_{L^2}^2
        +
        \\
        &\qquad
        \frac{3}{n}\sum_{i=1}^n
        \langle\beta^{(0)}-\hat\beta^{(0)},\widehat X_i^{(00)}\rangle_{L^2}^2.
    \end{align*}
    The proof is completed by combining
    the Cauchy-Schwarz inequality,
    Lemmas \ref{lemma:bound_X} and \ref{lemma:pred_err_external}
    with the following identity
    \begin{align*}
        {\rm E}&\left(
        \frac{1}{n}\sum_{i=1}^n
        \langle\hat\beta^{(0)}-\hat\beta_C^{(0)},\widehat X_i^{(00)}\rangle_{L^2}^2
        \mid\mathcal Z,\bm Z^{(00)}
        \right)
        \\
        &=
        {\rm E}
        (\|
        (n^{-\frac{1}{2}}\bm Z^{(00)\top}\bm P^{(00)\top})
        \widehat{\bm U}^*
        \{\hat{\bm\delta}-\widehat{\rm E}(\hat{\bm\delta}\mid\mathcal Z)\}
        \|_2^2
        \mid\mathcal Z, \bm Z^{(00)}
        )
        \\
        &=
        {\rm E}
        \Big(
        [
        \widehat{\bm U}^*
        \{\hat{\bm\delta}-\widehat{\rm E}(\hat{\bm\delta}\mid\mathcal Z)\}
        ]^\top
        \bm\Omega^{(0)\frac{1}{2}}
        (\bm\Omega^{(0)-\frac{1}{2}}\bm\Omega^{(00)}\bm\Omega^{(0)-\frac{1}{2}})
        \bm\Omega^{(0)\frac{1}{2}}
        [
        \widehat{\bm U}^*
        \{\hat{\bm\delta}-\widehat{\rm E}(\hat{\bm\delta}\mid\mathcal Z)\}
        ]
        \mid\mathcal Z, \bm Z^{(00)}
        \Big)
        \\
        &\leq
        \gamma_1
        (\bm\Omega^{(0)-\frac{1}{2}}\bm\Omega^{(00)}\bm\Omega^{(0)-\frac{1}{2}})
        {\rm E}
        \Big(
        [
        \widehat{\bm U}^*
        \{\hat{\bm\delta}-\widehat{\rm E}(\hat{\bm\delta}\mid\mathcal Z)\}
        ]^\top
        \bm\Omega^{(0)}
        [
        \widehat{\bm U}^*
        \{\hat{\bm\delta}-\widehat{\rm E}(\hat{\bm\delta}\mid\mathcal Z)\}
        ]
        \mid\mathcal Z
        \Big)
        \\
        &=
        2\bigg(\max_{\bm a\neq\bm 0}
        \frac{\bm a^\top\bm\Omega^{(00)}\bm a}{\bm a^\top\bm\Omega^{(0)}\bm a}
        \bigg)
        {\rm E}
        \big(
        \\
        &\qquad
        \|
        \bm\phi^\top
        [
        \{\hat{\bm c}^{(0)}
        -\widehat{\rm E}(\hat{\bm c}^{(0)}\mid\mathcal Z)\}
        -\{\hat{\bm c}^{(k)}
        -\widehat{\rm E}(\hat{\bm c}^{(k)}\mid\mathcal Z)\}
        ]
        \|_{\widehat C^{(0)}}^2
        \mid\mathcal Z
        \big)
        \quad\text{(refer to \eqref{eq:bound_UE2})}
        \\
        &=O_p(n^{-1}\lambda^{-1/4}J^\xi).
        \quad\text{(Lemma \ref{lemma:bounds_c_Ec})}
    \end{align*}
\end{proof}

\begin{proof}[Proof of Proposition \ref{prop:hat_beta_pc_err}]
    First,
    it suffices to bound
    $
    {\rm E}
        (\{
        \gamma_1(\bm\Omega^{(0)\frac{1}{2}}\widehat{\rm var}(\hat{\bm c}^{(k)}\mid\bm Z^{(k)}))
        \}^2
        \mid\mathcal Z)
    $
    from above.
    By the Cauchy-Schwarz inequality,
    \begin{align*}
        \hat\sigma_{\varepsilon^{(k)}}^4
        &\leq
        \left\{
        \frac{3}{n}
        \sum_{i=1}^{n}
        (\varepsilon_i^{(k)2}+
        \langle X_i^{(k)}-\widehat X_i^{(k)},\beta^{(k)}\rangle_{L^2}^2+
        \langle \widehat X_i^{(k)},\beta^{(k)}-\hat\beta^{(k)}\rangle_{L^2}^2
        )
        \right\}^2
        \\
        &\leq
        3\left\{
        \frac{3}{n}
        \sum_{i=1}^{n}
        \varepsilon_i^{(k)2}
        \right\}^2
        +
        3\left\{
        \frac{3}{n}
        \sum_{i=1}^{n}
        \langle X_i^{(k)}-\widehat X_i^{(k)},\beta^{(k)}\rangle_{L^2}^2
        \right\}^2
        +
        \\
        &\qquad
        3\left\{
        \frac{3}{n}
        \sum_{i=1}^{n}
        \langle \widehat X_i^{(k)},\beta^{(k)}-\hat\beta^{(k)}\rangle_{L^2}^2
        \right\}^2
        \\
        &\leq
        \frac{27}{n}
        \sum_{i=1}^{n}
        \varepsilon_i^{(k)4}
        +
        \frac{27}{n}
        \|\beta^{(k)}\|_{L^2}^4
        \sum_{i=1}^{n}
        \|X_i^{(k)}-\widehat X_i^{(k)}\|_{L^2}^4
        +
        27\left\{
        \frac{1}{n}
        \sum_{i=1}^{n}
        \langle \widehat X_i^{(k)},\beta^{(k)}-\hat\beta^{(k)}\rangle_{L^2}^2
        \right\}^2,
    \end{align*}
    which is followed by 
    ${\rm E}(\hat\sigma_{\varepsilon^{(k)}}^4\mid\mathcal Z)=O_p(1)$;
    see Lemmas \ref{lemma:bound_X} and \ref{lemma:pred_err_internal}.
    Furthermore,
    \begin{align}
        \notag
        &{\rm E}
        (\{
        \gamma_1(\bm\Omega^{(0)\frac{1}{2}}\widehat{\rm var}(\hat{\bm c}^{(k)}\mid\bm Z^{(k)}))
        \}^2
        \mid\mathcal Z)
        \\ \notag
        &= 
        \frac{{\rm E}(\hat\sigma_{\varepsilon^{(k)}}^4\mid\mathcal Z)}{n^2}
        \gamma_1(
        (\bm\Omega^{(k)}+\lambda\bm W)^{-1}
        \bm\Omega^{(k)}
        (\bm\Omega^{(k)}+\lambda\bm W)^{-1}
        \bm\Omega^{(0)}
        (\bm\Omega^{(k)}+\lambda\bm W)^{-1}
        \bm\Omega^{(k)}(\bm\Omega^{(k)}+\lambda\bm W)^{-1})
        \\ \notag
        &\leq
        \frac{{\rm E}(\hat\sigma_{\varepsilon^{(k)}}^4\mid\mathcal Z)}{n^2}
        \gamma_1(
        (\bm\Omega^{(k)}+\lambda\bm W)^{-1}
        \bm\Omega^{(k)\frac{1}{2}}
        (\bm\Omega^{(k)-\frac{1}{2}}
        \bm\Omega^{(0)}
        \bm\Omega^{(k)-\frac{1}{2}})
        \bm\Omega^{(k)\frac{1}{2}}
        (\bm\Omega^{(k)}+\lambda\bm W)^{-1}
        )
        \\ \notag
        &\leq
        \frac{{\rm E}(\hat\sigma_{\varepsilon^{(k)}}^4\mid\mathcal Z)}{n^2}
        \gamma_1(
        \bm\Omega^{(k)-\frac{1}{2}}
        \bm\Omega^{(0)}
        \bm\Omega^{(k)-\frac{1}{2}}
        )
        \gamma_1(
        (\bm\Omega^{(k)}+\lambda\bm W)^{-1}
        )
        \quad\text{(Lemma \ref{lemma:eigen_inequality})}
        \\ \label{eq:max_eigen_omega_half_var_ck}
        &=
        O_p(n^{-2}\lambda^{-1}J^\xi),
    \end{align}
    noting $\gamma_1((\bm\Omega^{(k)}+\lambda\bm W)^{-1})=O_p(\lambda^{-1})$ \citep[pp.~1803]{LiHsing2007}.
    Given $K_0\in[1, K-1]$,
    without loss of generality,
    suppose  
    $\bm\delta^{*(k)}\in\mathbb R^M$
    is a zero vector for each $k > K_0$
    and is nonzero otherwise.
    Partition $\hat{\bm\delta}$ into two parts,
    $\hat{\bm\delta}_1
    =[(\hat{\bm c}^{(0)}-\hat{\bm c}^{(1)})^\top,\ldots,
    (\hat{\bm c}^{(0)}-\hat{\bm c}^{(K_0)})^\top
    ]^\top
    $
    and
    $\hat{\bm\delta}_2
    =[(\hat{\bm c}^{(0)}-\hat{\bm c}^{(K_0+1)})^\top,\ldots,
    (\hat{\bm c}^{(0)}-\hat{\bm c}^{(K)})^\top
    ]^\top
    $.
    Applying the exact partition to $\bm{\delta}^\zeta$,
    define 
    $\bm\delta_1^*=[\bm\delta^{*(1)\top},\ldots,\bm\delta^{*(K_0)\top}]^\top$
    and
    $\bm\delta_2^*=[\bm\delta^{*(K_0+1)\top},\ldots,\bm\delta^{*(K)\top}]^\top$.
    The Karush-Kuhn-Tucker conditions give that
    \begin{align}
        \label{eq:kkt1}
        \bm0 
        &=
        -2
        \bm A_{11}
        (\hat{\bm\delta}_1-\bm\delta_1^*)
        -
        2
        \bm A_{12}
        \hat{\bm\delta}_2
        +
        \zeta
        \left[
        \frac{\bm\delta^{*(1)\top}}{\|\bm\delta_1^{*(1)}\|_2},\ldots,
        \frac{\bm\delta^{*(K_0)\top}}{\|\bm\delta_1^{*(K_0)}\|_2}
        \right]^\top
        ,
        \\ \label{eq:kkt2}
        \bm0 
        &=
        -2
        \bm A_{22}
        \hat{\bm\delta}_2
        -
        2
        \bm A_{21}
        (\hat{\bm\delta}_1-\bm\delta_1^*)
        +
        \zeta
        \left[
        \bm u_1^\top,\ldots,
        \bm u_{K-K_0}^\top
        \right]^\top,
    \end{align}
    for certain $\bm u_1,\ldots,\bm u_{K-K_0}\in\{\bm u\in\mathbb R^M:\|\bm u\|_2\leq 1\}$,
    where 
    \begin{align*}
        \bm A_{11}
        &=
        {\rm diag}\{
        \widehat{\rm var}^{-1}(\hat{\bm c}^{(1)}\mid\bm Z^{(1)}),\ldots,
        \widehat{\rm var}^{-1}(\hat{\bm c}^{(K_0)}\mid\bm Z^{(K_0)})
        \}
        -
        \\
        &\qquad
        [
        \widehat{\rm var}^{-1}(\hat{\bm c}^{(1)}\mid\bm Z^{(1)}),
        \ldots,
        \widehat{\rm var}^{-1}(\hat{\bm c}^{(K_0)}\mid\bm Z^{(K_0)})
        ]^\top
        \\
        &\qquad
        \bigg\{
        \sum_{k=0}^K
        \widehat{\rm var}^{-1}(\hat{\bm c}^{(k)}\mid\bm Z^{(k)})
        \bigg\}^{-1}
        [
        \widehat{\rm var}^{-1}(\hat{\bm c}^{(1)}\mid\bm Z^{(1)}),
        \ldots,
        \widehat{\rm var}^{-1}(\hat{\bm c}^{(K_0)}\mid\bm Z^{(K_0)})
        ],
    \end{align*}
    \begin{align*}
        \bm A_{12}
        &=
        \bm A_{21}^\top
        =
        -[
        \widehat{\rm var}^{-1}(\hat{\bm c}^{(1)}\mid\bm Z^{(1)}),
        \ldots,
        \widehat{\rm var}^{-1}(\hat{\bm c}^{(K_0)}\mid\bm Z^{(K_0)})
        ]^\top
        \\
        &\qquad
        \bigg\{
        \sum_{k=0}^K
        \widehat{\rm var}^{-1}(\hat{\bm c}^{(k)}\mid\bm Z^{(k)})
        \bigg\}^{-1}
        [
        \widehat{\rm var}^{-1}(\hat{\bm c}^{(K_0+1)}\mid\bm Z^{(K_0+1)}),
        \ldots,
        \widehat{\rm var}^{-1}(\hat{\bm c}^{(K)}\mid\bm Z^{(K)})
        ],
    \end{align*}
    and
    \begin{align*}
        \bm A_{22}
        &=
        {\rm diag}\{
        \widehat{\rm var}^{-1}(\hat{\bm c}^{(K_0+1)}\mid\bm Z^{(K_0+1)}),\ldots,
        \widehat{\rm var}^{-1}(\hat{\bm c}^{(K)}\mid\bm Z^{(K)})
        \}
        -
        \\
        &\qquad
        [
        \widehat{\rm var}^{-1}(\hat{\bm c}^{(K_0+1)}\mid\bm Z^{(K_0+1)}),
        \ldots,
        \widehat{\rm var}^{-1}(\hat{\bm c}^{(K)}\mid\bm Z^{(K)})
        ]^\top
        \\
        &\qquad
        \bigg\{
        \sum_{k=0}^K
        \widehat{\rm var}^{-1}(\hat{\bm c}^{(k)}\mid\bm Z^{(k)})
        \bigg\}^{-1}
        [
        \widehat{\rm var}^{-1}(\hat{\bm c}^{(K_0+1)}\mid\bm Z^{(K_0+1)}),
        \ldots,
        \widehat{\rm var}^{-1}(\hat{\bm c}^{(K)}\mid\bm Z^{(K)})
        ].
    \end{align*}
    By the Woodbury's matrix identity 
    \citep[see, e.g.,][Theorem 1.9]{Schott2017},
    \begin{align*}
        \bm A_{11}^{-1}
        &=
        {\rm diag}\{
        \widehat{\rm var}(\hat{\bm c}^{(1)}\mid\bm Z^{(1)}),\ldots,
        \widehat{\rm var}(\hat{\bm c}^{(K_0)}\mid\bm Z^{(K_0)})
        \}
        +
        \\
        &\qquad
        (\bm 1_{K_0}\bm 1_{K_0}^\top)
        \otimes
        \bigg\{
        \sum_{k\in\{0,K_0+1,\ldots,K\}}
        \widehat{\rm var}^{-1}(\hat{\bm c}^{(k)}\mid\bm Z^{(k)})
        \bigg\}^{-1},
    \end{align*}
    \begin{align*}
        \bm A_{22}^{-1}
        &=
        {\rm diag}\{
        \widehat{\rm var}(\hat{\bm c}^{(K_0+1)}\mid\bm Z^{(K_0+1)}),\ldots,
        \widehat{\rm var}(\hat{\bm c}^{(K)}\mid\bm Z^{(K)})
        \}
        +
        \\
        &\qquad
        (\bm 1_{K-K_0}\bm 1_{K-K_0}^\top)
        \otimes
        \bigg\{
        \sum_{k\in\{0,\ldots,K_0\}}
        \widehat{\rm var}^{-1}(\hat{\bm c}^{(k)}\mid\bm Z^{(k)})
        \bigg\}^{-1},
    \end{align*}
    \begin{align*}
        (&\bm A_{11}-\bm A_{12}\bm A_{22}^{-1}\bm A_{21})^{-1}
        \\
        &=
        (\bm 1_{K_0}\bm 1_{K_0}^\top)
        \otimes
        \widehat{\rm var}(\hat{\bm c}^{(0)}\mid\bm Z^{(0)})
        +
        {\rm diag}\{
        \widehat{\rm var}(\hat{\bm c}^{(1)}\mid\bm Z^{(1)}),\ldots,
        \widehat{\rm var}(\hat{\bm c}^{(K_0)}\mid\bm Z^{(K_0)})
        \},
    \end{align*}
    and
    \begin{align*}
        (&\bm A_{22}-\bm A_{21}\bm A_{11}^{-1}\bm A_{12})^{-1}
        \\
        &=
        (\bm 1_{K-K_0}\bm 1_{K-K_0}^\top)
        \otimes
        \widehat{\rm var}(\hat{\bm c}^{(0)}\mid\bm Z^{(0)})
        +
        {\rm diag}\{
        \widehat{\rm var}(\hat{\bm c}^{(K_0+1)}\mid\bm Z^{(K_0+1)}),\ldots,
        \widehat{\rm var}(\hat{\bm c}^{(K)}\mid\bm Z^{(K)})
        \}.
    \end{align*}
    Solving \eqref{eq:kkt1} and \eqref{eq:kkt2} 
    results in
    \begin{align*}
        \hat{\bm\delta}_1&-\bm\delta_1^*
        =
        \\
        &
        \frac{\zeta}{2}
        (\bm A_{11}-\bm A_{12}\bm A_{22}^{-1}\bm A_{21})^{-1}
        \Bigg(
        \left[
        \frac{\bm\delta_1^{*(1)\top}}{\|\bm\delta_1^{*(1)}\|_2},
        \ldots,
        \frac{\bm\delta_1^{*(K_0)\top}}{\|\bm\delta_1^{*(K_0)}\|_2}
        \right]^\top
        -
        \bm A_{12}\bm A_{22}^{-1}
        \left[
        \bm u_1^\top,\ldots,
        \bm u_{K-K_0}^\top
        \right]^\top
        \Bigg),
        \\
        \hat{\bm\delta}_2&
        =
        \\
        &
        \frac{\zeta}{2}
        (\bm A_{22}-\bm A_{21}\bm A_{11}^{-1}\bm A_{12})^{-1}
        \Bigg(
        \left[
        \bm u_1^\top,\ldots,
        \bm u_{K-K_0}^\top
        \right]^\top
        -
        \bm A_{21}\bm A_{11}^{-1}
        \left[
        \frac{\bm\delta_1^{*(1)\top}}{\|\bm\delta_1^{*(1)}\|_2},
        \ldots,
        \frac{\bm\delta_1^{*(K_0)\top}}{\|\bm\delta_1^{*(K_0)}\|_2}
        \right]^\top
        \Bigg).
    \end{align*}
    This solution is followed by
    \begin{align*}
        &{\rm E}
        (\|(\bm I_{K_0}\otimes\bm\Omega^{(0)\frac{1}{2}})(\hat{\bm\delta}_1-\bm\delta_1^*)\|_2^2
        \mid\mathcal Z)
        \\
        &\leq
        \frac{\zeta^2K_0}{2}
        {\rm E}
        \big(
        \gamma_1
        \big(
        (\bm A_{11}-\bm A_{12}\bm A_{22}^{-1}\bm A_{21})^{-1}
        (\bm I_{K_0}\otimes\bm\Omega^{(0)})
        (\bm A_{11}-\bm A_{12}\bm A_{22}^{-1}\bm A_{21})^{-1}
        \big)
        \mid\mathcal Z
        \big)
        +
        \\
        &\qquad
        \frac{\zeta^2K_0}{2}
        {\rm E}
        \big(
        \gamma_1
        \big(
        \bm A_{22}^{-1}
        \bm A_{12}^\top
        (\bm A_{11}-\bm A_{12}\bm A_{22}^{-1}\bm A_{21})^{-1}
        (\bm I_{K_0}\otimes\bm\Omega^{(0)})
        (\bm A_{11}-\bm A_{12}\bm A_{22}^{-1}\bm A_{21})^{-1}
        \bm A_{12}\bm A_{22}^{-1}
        \big)
        \mid\mathcal Z
        \big)
        \\
        &\leq
        \frac{\zeta^2K_0}{2}
        {\rm E}
        \big(
        \{\gamma_1(\bm\Omega^{(0)\frac{1}{2}}{\rm var}(\hat{\bm c}^{(0)}\mid\bm Z^{(0)}))+
        \max_{1\leq k\leq K_0}\gamma_1(\bm\Omega^{(0)\frac{1}{2}}{\rm var}(\hat{\bm c}^{(k)}\mid\bm Z^{(k)}))
        \}^2
        \mid\mathcal Z
        \big)
        +
        \\
        &\qquad
        \frac{\zeta^2K_0}{2}
        {\rm E}
        \big(
        \gamma_1
        \big(
        \{(\bm 1_{K_0}\bm 1_{K_0}^\top)
        \otimes
        \widehat{\rm var}(\hat{\bm c}^{(0)}\mid\bm Z^{(0)})
        \}
        (\bm I_{K_0}\otimes\bm\Omega^{(0)})
        \{(\bm 1_{K_0}\bm 1_{K_0}^\top)
        \otimes
        \widehat{\rm var}(\hat{\bm c}^{(0)}\mid\bm Z^{(0)})
        \}
        \big)
        \mid\mathcal Z
        \big)
        \\
        &=
        \frac{\zeta^2K_0}{2}
        {\rm E}
        \big(
        \{\gamma_1(\bm\Omega^{(0)\frac{1}{2}}{\rm var}(\hat{\bm c}^{(0)}\mid\bm Z^{(0)}))+
        \max_{1\leq k\leq K_0}\gamma_1(\bm\Omega^{(0)\frac{1}{2}}{\rm var}(\hat{\bm c}^{(k)}\mid\bm Z^{(k)}))
        \}^2
        \mid\mathcal Z
        \big)
        +
        \\
        &\qquad
        \frac{\zeta^2K_0}{2}
        {\rm E}
        \big(
        \big\{
        \gamma_1
        \big(
        \bm\Omega^{(0)\frac{1}{2}}
        \widehat{\rm var}(\hat{\bm c}^{(0)}\mid\bm Z^{(0)})
        \big)
        \big\}^2
        \mid\mathcal Z
        \big)
        \\ 
        &=
        O_p(n^{-2}\lambda^{-1}J^\xi\zeta^2).
        \quad\text{(see \eqref{eq:max_eigen_omega_half_var_ck})}
    \end{align*}
    In the similar manner,
    \begin{align*}
        &{\rm E}
        (\|(\bm I_{K-K_0}\otimes\bm\Omega^{(0)\frac{1}{2}})\hat{\bm\delta}_2\|_2^2
        \mid\mathcal Z)
        \\
        &\leq
        \frac{\zeta^2K_0}{2}
        {\rm E}
        \big(
        \{\gamma_1(\bm\Omega^{(0)\frac{1}{2}}{\rm var}(\hat{\bm c}^{(0)}\mid\bm Z^{(0)}))+
        \max_{K_0+1\leq k\leq K}\gamma_1(\bm\Omega^{(0)\frac{1}{2}}{\rm var}(\hat{\bm c}^{(k)}\mid\bm Z^{(k)}))
        \}^2
        \mid\mathcal Z
        \big)
        +
        \\
        &\qquad
        \frac{\zeta^2K_0}{2}
        {\rm E}
        \big(
        \big\{
        \gamma_1
        \big(
        \bm\Omega^{(0)\frac{1}{2}}
        \widehat{\rm var}(\hat{\bm c}^{(0)}\mid\bm Z^{(0)})
        \big)
        \big\}^2
        \mid\mathcal Z
        \big)
        \\ 
        &=
        O_p(n^{-2}\lambda^{-1}J^\xi\zeta^2).
        \quad\text{(see \eqref{eq:max_eigen_omega_half_var_ck})}
    \end{align*}
    Taken together,
    the above two upper bounds imply that
    \begin{align*}
        {\rm E}&
        (\|(\bm I_K\otimes\bm\Omega^{(0)\frac{1}{2}})(\hat{\bm\delta}-\bm{\delta}^\zeta)\|_2^2
        \mid\mathcal Z
        )
        \\
        &=
        {\rm E}
        (\|(\bm I_{K_0}\otimes\bm\Omega^{(0)\frac{1}{2}})(\hat{\bm\delta}_1-\bm\delta_1^*)\|_2^2
        \mid\mathcal Z)
        +
        {\rm E}
        (\|(\bm I_{K-K_0}\otimes\bm\Omega^{(0)\frac{1}{2}})\hat{\bm\delta}_2\|_2^2
        \mid\mathcal Z)
        \\
        &=
        O_p(n^{-2}\lambda^{-1}\zeta^2J^\xi),
    \end{align*}
    and
    \begin{align*}
        {\rm E}&
        (\|(\bm I_K\otimes\bm\Omega^{(0)\frac{1}{2}})
        (\rm E(\hat{\bm\delta}\mid\mathcal Z)-\bm{\delta}^\zeta)\|_2^2
        \mid\mathcal Z
        )
        \\
        &\leq
        2{\rm E}
        (\|(\bm I_K\otimes\bm\Omega^{(0)\frac{1}{2}})(\hat{\bm\delta}-\bm{\delta}^\zeta)\|_2^2
        \mid\mathcal Z)
        +
        2{\rm E}
        (\|(\bm I_K\otimes\bm\Omega^{(0)\frac{1}{2}})
        (\rm E(\hat{\bm\delta}\mid\mathcal Z)-\hat{\bm\delta})
        \|_2^2
        \mid\mathcal Z)
        \\
        &=
        O_p(n^{-1}\lambda^{-1/4}J^\xi+n^{-2}\lambda^{-1}\zeta^2J^\xi).
        \quad\text{(Lemma \ref{lemma:bounds_c_Ec})}
    \end{align*}
    
    It follows from
    \begin{align*}
        \bm U^*(\hat{\bm\delta}-\bm{\delta}^\zeta)
        =
        \bigg\{\sum_{k=0}^K{\rm var}^{-1}(\hat{\bm c}^{(k)}\mid\bm Z^{(k)})\bigg\}^{-1}
        \sum_{k=1}^K{\rm var}^{-1}(\hat{\bm c}^{(k)}\mid\bm Z^{(k)})
        (\hat{\bm c}^{(0)}-\hat{\bm c}^{(k)}-\bm\delta^{*(k)})
    \end{align*}
    and Lemma \ref{lemma:aUBUa} that,
    for positive definite $\bm B\in\mathbb R^{M\times M}$, 
    \begin{align}
        \notag
        \{\bm U^*&(\hat{\bm\delta}-\bm{\delta}^\zeta)\}^\top
        \bm B
        \bm U^*(\hat{\bm\delta}-\bm{\delta}^\zeta)
        \leq
        K\sum_{k=1}^K
        (
        \hat{\bm c}^{(0)}-\hat{\bm c}^{(k)}-\bm\delta^{*(k)}
        )^\top
        \bm B
        (
        \hat{\bm c}^{(0)}-\hat{\bm c}^{(k)}-\bm\delta^{*(k)}
        )^\top,
    \end{align}
    \begin{align}
        \notag
        \{\widehat{\bm U}^*(\hat{\bm\delta}-\bm{\delta}^\zeta)\}^\top
        \bm B
        \widehat{\bm U}^*(\hat{\bm\delta}-\bm{\delta}^\zeta)
        \leq
        K\sum_{k=1}^K
        (
        \hat{\bm c}^{(0)}-\hat{\bm c}^{(k)}-\bm\delta^{*(k)}
        )^\top
        \bm B
        (
        \hat{\bm c}^{(0)}-\hat{\bm c}^{(k)}-\bm\delta^{*(k)}
        )^\top,
    \end{align}
    and
    \begin{align*}
        \notag
        \{{\bm U}^*&(\hat{\bm\delta}-\bm{\delta}^\zeta)\}^\top
        \bm B
        {\bm U}^*(\hat{\bm\delta}-\bm{\delta}^\zeta)
        \\
        &\leq
        K\sum_{k=1}^K
        (
        {\rm E}(\hat{\bm c}^{(0)}-\hat{\bm c}^{(k)}\mid\mathcal Z)-\bm\delta^{*(k)}
        )^\top
        \bm B
        (
        {\rm E}(\hat{\bm c}^{(0)}-\hat{\bm c}^{(k)}\mid\mathcal Z)-\bm\delta^{*(k)}
        )^\top.
    \end{align*}
    These three bounds yield that
    \begin{align*}
        &{\rm E}(\|\hat\beta_{PC}^{(0)}-\tilde\beta^{(0)}\|_{\widehat C^{(0)}}^2\mid\mathcal Z)
        \\
        &=
        {\rm E}
        (\|
        \bm\phi^\top(\bm U^*-\widehat{\bm U}^*)
        (\hat{\bm\delta}-\bm{\delta}^\zeta)
        -
        \bm\phi^\top\bm U^*
        \{{\rm E}(\hat{\bm\delta}\mid\mathcal Z)-\bm{\delta}^\zeta\}
        \|_{\widehat C^{(0)}}^2
        \mid\mathcal Z
        )
        \\
        &\leq
        3{\rm E}
        (\|
        \bm\phi^\top\bm U^*
        (\hat{\bm\delta}-\bm{\delta}^\zeta)
        \|_{\widehat C^{(0)}}^2
        \mid\mathcal Z
        )
        +
        3{\rm E}
        (\|
        \bm\phi^\top\widehat{\bm U}^*
        (\hat{\bm\delta}-\bm{\delta}^\zeta)
        \|_{\widehat C^{(0)}}^2
        \mid\mathcal Z
        )
        +
        \\
        &\qquad
        3{\rm E}
        (\|
        \bm\phi^\top{\bm U}^*
        \{{\rm E}(\hat{\bm\delta}\mid\mathcal Z)-\bm{\delta}^\zeta\}
        \|_{\widehat C^{(0)}}^2
        \mid\mathcal Z
        )
        \\
        &\leq
        3K\big\{
        {\rm E}
        (\|
        \bm\phi^\top
        (\hat{\bm c}^{(0)}-\hat{\bm c}^{(k)}-\bm\delta^{*(k)})
        \|_{\widehat C^{(0)}}^2
        \mid\mathcal Z
        )
        +
        {\rm E}
        (\|
        \bm\phi^\top
        (\hat{\bm c}^{(0)}-\hat{\bm c}^{(k)}-\bm\delta^{*(k)})
        \|_{\widehat C^{(0)}}^2
        \mid\mathcal Z
        )
        +
        \\
        &\qquad
        {\rm E}
        (
        \|
        \bm\phi^\top
        \{
        {\rm E}(\hat{\bm c}^{(0)}-\hat{\bm c}^{(k)}\mid\mathcal Z)
        -\bm\delta^{*(k)}
        \}
        \|_{\widehat C^{(0)}}^2
        \mid\mathcal Z
        )
        \big\}
        \quad\text{(Lemma \ref{lemma:aUBUa})}
        \\
        &=
        O_p(n^{-1}\lambda^{-1/4}J^\xi+n^{-2}\lambda^{-1}\zeta^2J^\xi).
    \end{align*}
    The rest of the proof follows in the same fashion as proofs of 
    Propositions \ref{prop:hat_beta_c_estimation_err} and \ref{prop:hat_beta_c_pred_err}
    and is therefore omitted.
\end{proof}

\end{document}